\DeclareMathSymbol{\rightarrowtail}{\mathrel}{AMSa}{"1A}
\DeclareMathSymbol{\twoheadrightarrow}{\mathrel}{AMSa}{"10}
\DeclareMathSymbol{\nleq}{\mathrel}{AMSb}{"02}
\DeclareMathSymbol{\upharpoonright} {\mathrel}{AMSa}{"16}
\newcommand{\Pred}{{\sf Pred}}
\newcommand{\Rel}{{\sf Rel}}
\newcommand{\Set}{{\sf Set}}
\newcommand{\inl}[1]{\operatorname{inl} #1}
\newcommand{\inr}[1]{\operatorname{inr} #1}
\newcommand{\monic}{\rightarrowtail}
\newcommand{\pow}[1]{\operatorname{\mathcal P}#1}
\newcommand{\trans}[3]{\forcemath{#1 \stackrel{#2}{\rightarrow}{#3}}}
\newcommand{\Trans}[3]{\forcemath{#1 \stackrel{#2}{\Rightarrow}{#3}}}
\newcommand{\Id}{\mbox{\rm Id}}
\newcommand{\id}{\mbox{\rm id}}
\newcommand{\Union}{{\textstyle \bigcup}} 
\newcommand{\sat}[1]{\overline{#1}}
\newcommand{\lax}[1]{\hat{#1}}
\newcommand{\inner}[1]{\check{#1}}
\newcommand{\Atrans}{\mbox{$A${\sf{}-TS}}}
\newcommand{\Ltrans}{\mbox{$(L{+}\tau)${\sf{}-TS}}}
\newcommand{\Lsatts}{\mbox{$L${\sf{}-Sat-TS}}}
\newcommand{\forcemath}[1]
{\relax\ifmmode\expandafter\pione\else\expandafter\pitwo\fi{#1}{$#1$}}
\newcommand{\pione}[2]{#1}
\newcommand{\pitwo}[2]{#2}
\newcommand{\str}{^{\ast}}
\let\temp\phi
\let\phi\varphi
\let\varphi\temp
\newcommand\restr[2]{{% we make the whole thing an ordinary symbol
		%\left.\kern-\nulldelimiterspace % automatically resize the bar with \right
		#1 % the function
		\mathord{\upharpoonright}_{#2} % this is the delimiter
}}
\tikzset{
	bend angle=45,
	auto,
	baseline=(current  bounding  box.center),
	>=stealth',
}
\newcommand{\bsat}[1]{\overline{#1}^b} %branching saturation
\newcommand{\sbsat}[1]{\widetilde{#1}} %semibranching saturation
\newcommand{\Giry}{\Pi} %Giry functor
\newcommand{\Meas}{\sf Meas}
\newcommand{\SRC}[1]{\Sigma_{R}(#1)}
\newcommand{\compl}[1]{#1^{\mathsf{c}}}
\newcommand{\proj}[1]{\pi_{#1}}
\newcommand{\invproj}[1]{\proj{#1}^{-1}}
\newcommand{\Op}[1]{{#1}^\circ}
\newcommand{\N}{\mathbb N}
\newcommand{\RSigma}{R_\Sigma}
\newcommand{\SigmaR}{\Sigma_R}
\newcommand{\tdouble}{t^{*}}
\newcommand{\tsingle}{t^{+}}
\newcommand{\Rdouble}{R^{*}}
\newcommand{\Rsingle}{R^{+}}
\newtheorem{therm}{Theorem}
\newtheorem{corollary}[therm]{Corollary}
\newtheorem{definition}[therm]{Definition}
\newtheorem{example}[therm]{Example}
\newtheorem{lemma}[therm]{Lemma}
\newtheorem{proposition}[therm]{Proposition}
\newtheorem{remark}[therm]{Remark}
\newtheorem{theorem1}[therm]{Theorem}
\begin{document}

%\makeatletter
%\renewcommand{\@evenfoot}{\vbox to 2.7pt{\vspace*{-27.5pt}%
%        \hbox to \textwidth{\psplainfootfont\@cpr\hfill}}}%
%        
%\renewcommand{\@oddfoot}{\vbox to 2.7pt{\vspace*{-27.5pt}%
%        \hbox to \textwidth{\psplainfootfont\@cpr\hfill}}}%
%\renewcommand\@evenhead{\hbox to 322.5pt{\vbox{\vspace*{6.5pt}%
%            {}\if@issue\ \else\fi\ }\vskip2pt%\ \textbf{\@volumeno}
%            {}}\vbox to 13.5pt{}}
%\renewcommand\@oddhead{\hbox to 322.5pt{\vbox{\vspace*{6.5pt}%
%            {} }\vskip2pt%\ \textbf{\@volumeno}
%            {}}\vbox to 13.5pt{}}%
%\makeatother
%\lefttitle{Hermida et {al.}}
%\righttitle{Bisimulation as a logical relation}

%\papertitle{Article}

%\jnlPage{1}{00}
%\jnlDoiYr{2020}
%\doival{10.1017/xxxxx}

\title {Bisimulation as a logical relation}

\newcommand{\email}[1]{\href{mailto://#1}{#1}}

\author{Claudio Hermida}
  \affil{School of Computer Science, University of Birmingham  \email{claudio.hermida@gmail.com}}
  \author{Uday Reddy}
  \affil{School of Computer Science, University of Birmingham  \email{u.s.reddy@bham.ac.uk}}
 \author{Edmund Robinson}
  \affil{Electronic Engineering and Computer Science, Queen~Mary University of London \email{e.p.robinson@qmul.ac.uk}}
 \author{Alessio Santamaria}
  \affil{University of Pisa  \email{a.santamaria@qmul.ac.uk}}

%\begin{authgrp}
%  \author{Claudio Hermida}
%  \affiliation{School of Computer Science, University of Birmingham  \email{claudio.hermida@gmail.com}}
%  \author{Uday Reddy}
%  \affiliation{School of Computer Science, University of Birmingham  \email{u.s.reddy@bham.ac.uk}}
% \author{Edmund Robinson}
%  \affiliation{Electronic Engineering and Computer Science, Queen~Mary University of London \email{e.p.robinson@qmul.ac.uk}}
% \author{Alessio Santamaria}
%  \affiliation{University of Pisa  \email{a.santamaria@qmul.ac.uk}}
%\end{authgrp}

%\thanks{Edmund Robinson and Alessio Santamaria acknowledge the funding support of EPSRC grant EP/R006865/1, Interface Reasoning for Interactive Systems. Santamaria also acknowledge the funding support of the Ministero dell'Universit\`a e della Ricerca Scientifica 
%of Italy under Grant No.\ 201784YSZ5, PRIN2017 -- ASPRA
%(\emph{Analysis of Program Analyses}).
%}

%\history{(Received xx xxx xxx; revised xx xxx xxx; accepted xx xxx xxx)}

\maketitle

\begin{abstract}
	We investigate how various forms of bisimulation can be characterised 
	using the technology of logical relations. The approach taken is that 
	each form of bisimulation corresponds to an algebraic structure 
	derived from a transition system, and the general result is that a 
	relation $R$ between two transition systems on state spaces $S$ 
	and $T$ is a bisimulation if and only if the derived algebraic 
	structures are in the logical relation automatically generated from 
	$R$. We show that this approach works for the original Park-Milner 
	bisimulation and that it extends to weak bisimulation, and branching 
	and semi-branching bisimulation. The paper concludes with a discussion
	of probabilistic bisimulation, where the situation is slightly more 
	complex, partly owing to the need to encompass bisimulations that 
	are not just relations. 
\end{abstract}

\newenvironment{keywords}{\paragraph*{Keywords:}}{}

\begin{keywords}strong bisimulation, weak bisimulation, branching bisimulation, probabilistic bisimulation, logical relation, algebra, monad, category theory\end{keywords}

%\maketitle

\section{Introduction}\label{sec:introduction}

This paper is dedicated to John Power, long-time friend and collaborator 
of the authors, whose work in abstract algebra, for example \cite{anderson1997representable}, is guided by a concern for 
practicality led by an understanding of abstract structures that we can 
only aspire to. 

This work forms part of a programme to view logical relations as a structure that arises naturally 
from interpretations of logic and type theory and to expose the possibility of their use as a 
wide-ranging framework for 
formalising links between instances of mathematical structures. See 
\cite{DBLP:journals/entcs/HermidaRR14} for an introduction to this. 
The purpose of this paper is to show 
how several notions of bisimulation (strong, weak, branching and probabilistic) can be viewed as instances of the use of logical relations. 
It is not to prove new facts in process algebra. Indeed the work we produce is based on concrete 
facts, particularly about weak bisimulation, that have long been known in the process algebra 
community. What we do is look at them in a slightly different light. 

Our work is also related to that of the coalgebra community, but is, 
we believe, quite different in emphasis. The main thrust of the related 
work there has been on algebraic theories as formalised by monads. In 
particular, there are abstract notions of bisimulation given in terms of 
monads and monad liftings. This is a presentation-free approach, which has 
both advantages and disadvantages. In this paper, though, we are focusing 
more on presentations of theories and concrete constructions of models. 
The difference is between presenting a group 
structure as an algebra for the group monad, and presenting it directly in 
terms of operations and constants: multiplication, inverse and identity. 
There is a natural notion of congruence between algebras for this 
approach, and it is given by logical relations. 

The primary thrust of this paper is to test the idea that a presentation 
of what is in general a many-sorted mathematical structure, given by types 
and operations, should give a natural notion of congruence between models. We call this the logical relations approach. 
Our tests consist of looking at some of the larger inhabitants of the zoo 
of bisimulations produced by the process algebra community. We will  
show that a number of different notions of bisimulation can be seen as 
the congruences coming from different ways of modelling state transition 
systems. This area has also been studied by the coalgebra community, and 
there are relations between their work and ours that we shall discuss 
later. 

We see there as being advantages in this. A key one is that the concept of 
bisimulation is incorporated as a formal instance of a framework that also 
includes other traditional mathematical 
structure, such as group homomorphisms. 

Formally speaking, the theory of groups is standardly presented as an algebraic theory with  
operations of multiplication ($.$), inverse ($(\ )^{-1}$) and a constant ($e$) giving the identity of the 
multiplication operation. A group is a set equipped with interpretations of these operations under 
which they satisfy certain equations. We will not need to bother with the equations here. If $G$ and 
$H$ are groups, then a group homomorphism $\theta \colon G \longrightarrow H$ is a function $G \longrightarrow H$ between the 
underlying sets that respects the group operations. We will consider the graph of this function as a 
relation between $G$ and $H$. We abuse notation to conflate the function with its graph, and write 
$\theta\subseteq G\times H$ for the relation $(g,\theta g)$.  
Logical relations give a formal way of extending relations to higher types. In particular, the type for 
multiplication is $[(X\times X)\to X]$, and the recipe for $[(\theta\times \theta)\to \theta]$ tells us that 
$(._G, ._H) \in [(\theta\times \theta)\to \theta]$ if and only if for all $g_1,g_2\in G$ and $h_1,h_2\in H$, 
if $(g_1,h_1)\in\theta$ and $(g_2,h_2)\in\theta$, then $(g_1._G g_2, h_1._H h_2)\in\theta$. 
Rewriting this back into the standard functional style, this says precisely that 
$\theta (g_1._G g_2) = (\theta g_1)._H (\theta g_2)$, the part of the standard requirements for a 
group homomorphism relating to multiplication. In other words, this tells us that a relation $\theta$ 
is a group homomorphism between $G$ and $H$ if and only if the operations are in the appropriate 
logical relations for their types and $\theta$ is functional: 
\begin{itemize}
	\item $(._G, ._H) \in [(\theta\times \theta)\to \theta]$
	\item $((\ )^{-1(G)}, (\ )^{-1(H)}) \in [\theta\to \theta]$
	\item $(e_G,e_H)\in\theta$, and 
	\item $\theta$ is functional and total. 
\end{itemize}

We get an equivalent characterisation of (strong) bisimulation. We can take a labelled transition 
system (with labels $A$ and state space $S$) to be an operation of type $[(A\times S) \to \pow S]$, 
or equivalently $[A\to [S\to \pow S]]$. Let $F$ and $G$ be two such (with the same set of labels, but 
state spaces $S$ and $T$), then we show that $R\subseteq S\times T$ is a bisimulation if and only if
the transition operations are in the appropriate logical relation: 
\begin{itemize}
	\item $(F,G) \in [(A\times R) \to \pow R]$, or equivalently
	\item $(F,G) \in [A\to [R \to \pow R]].$
\end{itemize}
Since {\Rel} is a cartesian closed category it does not matter which of these presentations we use, 
the requirement on $R$ will be the same. 

In order to do this we need to account for the interpretation of $\pow$ on relations and this leads us 
into a slightly more general discussion of monadic types. This includes some results about monads 
on {\Set} that we believe are new, or at least are not widely known. 

Weak and branching bisimulation can be made to follow. These forms of 
bisimulation arise in order to deal with the extension of transition 
systems to include silent $\tau$ actions. It is widely known that weak 
bisimulation can be reduced to 
the strong bisimulation of related systems, and we follow this approach. The interest for us is the 
algebraic nature of the construction of the related system, and we give two such, one of which 
explicitly includes $\tau$ actions and the other does not. In this case we get results of the form:
$R\subseteq S\times T$ is a weak bisimulation if and only if
the derived transition operations $\sat F$ and $\sat G$ are in the appropriate logical relation: 
\begin{itemize}
	\item $(\sat F,\sat G) \in [A\to [R \to \pow R]].$
\end{itemize}
This seems something of a cheat but there is an issue here. The $\tau$ actions form a formal part of the semantic 
structure, but are not supposed to be visible. You can argue that is also cheating, and that you would 
really like a semantic structure that does not include mention of $\tau$, and that is what our 
second construction does. 

Branching and semi-branching bisimulations were introduced to deal with 
perceived deficiencies in weak bisimulation. We show that they arise 
naturally out of a variant of the notion of transition system in which 
the system moves first by internal computations to a synchronisation 
point, and then by the appropriate action to a new state. 

Bisimulations between probabilistic systems are a little more problematic. 
They do not quite fit the paradigm because, in the continuous case, we 
have a Markov kernel rather than transitions between particular states. 
Secondly, there are different approaches to bisimilarity. We investigate 
these and show that the logical relations approach can still be extended 
to this setting, and that when we do so there are strong links with these
approaches to bisimilarity. 

The notion of \emph{probabilistic bisimulation} for discrete probabilistic systems is due originally to \cite{larsen_bisimulation_1991}, with further work in \cite{van_glabbeek_reactive_1995}. The continuous case was instead discussed first in \cite{desharnais_bisimulation_2002}, where bisimulation is described as a span of \emph{zig-zag morphisms} between probabilistic transition systems, there called \emph{labelled Markov processes} (LMP), whose set of states is an analytic space. The hypothesis of analyticity is sufficient in order to prove that bisimilarity is a transitive relation, hence an equivalence relation. 
In~\cite{panangaden2009labelled}, the author defined instead the notion of probabilistic bisimulation on a LMP (again with an analytic space of states) as an equivalence relation satisfying a property similar to Larsen and Skou's discrete case. For two LMPs with different sets of states, $S$ and $S'$ say, one can consider equivalence relations on $S+S'$. 

Here we follow the modus operandi of~\cite{de_vink_bisimulation_1999}, where they 
showed the connections between Larsen and Skou's definition in the discrete case and 
the \emph{coalgebraic} approach of the ``transition-systems-as-coalgebras paradigm'' 
described at length in~\cite{rutten_universal_2000}; then they used the same approach 
to give a notion of probabilistic bisimulation in the continuous case of transition systems 
whose set of states constitutes an ultrametric space. In this paper we see LMPs as 
coalgebras for the Giry functor $\Giry \colon \Meas \to \Meas$ (hence we consider 
arbitrary measurable spaces) and a probabilistic bisimulation is defined as a $\Giry$-
\emph{bisimulation}: a span in the category of $\Giry$-coalgebras. At the same time, 
we define a notion of logical relation for two such coalgebras $F \colon S \longrightarrow \Giry S$ 
and $G \colon T \longrightarrow \Giry T$ as a relation $R \subseteq S \times T$ such that 
$(F,G) \in [R \to \Giry R]$, for an appropriately defined relation $\Giry R$. 
It is easy to see that if 
$S=T$ and if $R$ is an equivalence relation, then the definitions of logical relation and 
bisimulation of~\cite{panangaden2009labelled} coincide. What is not straightforward is 
the connection between the definition of $\Giry$-bisimulation and of logical relation in 
the general case: here we present some sufficient conditions for them to coincide, 
obtaining a similar result to de Vink and Rutten, albeit the set of states are not 
necessarily ultrametric spaces.

A second benefit of this approach using explicit algebraic constructions of models is that placing these constructions in this context opens up the possibility of 
applying them in more general settings than {\Set}, by generalising the constructions to 
other frameworks. The early work of 
\cite{hermida1993fibrations,hermida1999some} shows that logical predicates can be obtained from 
quite general interpretations of logic, and more recent work of the authors of this paper shows how 
to extend this to general logical relations. The interpretation of covariant powerset given here is via 
an algebraic theory of complete sup-lattices opening up the possibility of also extending it to more 
general settings (though there will be design decisions about the indexing structures allowed). The 
derived structures used to model weak bisimulation are defined through reflections, and so can be 
interpreted in categories with the correct formal properties. All of this gives, we hope, a framework 
that can be used flexibly in a wide range of settings, see {e.g.} \cite{ghani2010fibrational}.

%In the final section of this article we generalise de Vink and Rutten's work, considering transition systems whose set of states is simply a \emph{measurable} space. In~\cite{danos_bisimulation_2006} the authors noted that such transition systems, there called \emph{Labelled Markov processes}, can be seen as coalgebras for the Giry functor $\Giry$~\cite{giry_categorical_1982}. This means that a notion of $\Giry$-bisimulation is at hand, and it turns out it is also strictly connected with a natural definition of logical relation for probabilistic transition systems, seen as $\Giry$-coalgebras.

As we have indicated, much of this is based on material well-known to the process algebra 
community. We will not attempt to give a full survey of sources here. 

%The approach to bisimulation looks similar to the treatment of transition systems as coalgebras and 
%the observation that bisimulation is a natural notion of equality there, something that goes back to 
%Aczel's theory of non-well-founded sets \cite{aczel1989non}, see also \cite{rutten1992processes}. 
%However we are using a theory based on standard logic and algebra. 

\subsection{Related work}

The idea that bisimulation is related to more general notions goes back a 
long way: at least to Aczel's theory of non-well-founded sets 
\cite{aczel1988non}, see also \cite{rutten1992processes}. More recently 
the coalgebra community has engaged heavily with this, both in terms of 
abstracting the notion to general coalgebras and working on abstractions 
of weak bisimulation and, quite recently, branching bisimulation, along 
with forms of probabilistic bisimulation. 

Most of these are based on the notion of transition system as coalgebra 
for a functor that effectively gives the set of possible endpoints for a 
transition starting at a given input state. If this functor is suitably 
well-behaved, or has the right additional structure, then we can get an 
abstract version of, say, weak bisimulation. 

In the specific case of weak 
bisimulation, the basic idea is often to construct the saturation of a 
transition system with $\tau$ moves and to use strong bisimulation on the 
result. This idea dates back a long time to the process algebra community 
around Milner and has to be carried out carefully because expressed as 
simply as above it will yield the wrong results. This is the basic idea 
behind the work of for example, \cite{brengos2015weak}, or 
\cite{sokolova2009coalgebraic}, though in both cases the authors extend 
the idea significantly. Brengos shows that it can be made to carry 
through in a very abstract setting (when the coalgebras on a given object 
are partially ordered and the saturated ones form a reflexive subcategory 
of that partial order). Similarly much of the content of  
\cite{sokolova2009coalgebraic} is that the same abstract approach yields 
a standard form of bisimulation for certain probabilistic systems. 

We have not, however, found work that compares with our characterisation 
in terms of lax transition systems. In fact we suggest that 
this approach departs from ones natural for the coalgebra 
community. If $F$ is a strong monad on a cartesian closed category 
$\sf {C}$, then the internal hom $[c\to Fc]$ is a monoid in 
$\sf {C}$. We can view an $a$-labelled transition system as either a 
morphism $a \longrightarrow [c\to Fc]$, or as a monoid homomorphism 
$a\str \longrightarrow [c\to Fc]$, 
where $a\str$ is the free monoid on $a$. 
We use this formulation to define the notion of lax transition system. 

Some very recent independent work on branching bisimulation deserves 
mention. \cite{beohar2017path} uses a fairly similar approach to us, but is more 
abstract and less specific about synchronisation points. 
\cite{jacobs2021relating} adopts a completely different approach using 
apartness. 

In section~\ref{sec:digressionOnMonads}, our digression on monads, we have a short discussion of 
lifting functors to $\Pred$ and to $\Rel$. There is a considerable body 
of work in this area, some quite general and abstract (including \cite{hermida1998structural}), and we cannot cover 
the relationships with other work in full detail. This kind of area is 
central for the coalgebra community, but we are generally working with 
specific examples, while they are concerned with the abstract properties 
that make arguments go through. Much of the extant work in the area makes 
use of some form of image factorisation in order to get round the issue 
that if $R\subseteq A\times B$ is a relation between $A$ and $B$, and $M$ 
is a functor, then $MR$ has a canonical map to $MA\times MB$, but that map 
is not necessarily monic. Examples include the early work of \cite{hesselink2000fixpoint}, and the foundational work of 
\cite{goubault2008logical}. There is a nice review in 
\cite{kurz2016relation}.  There is also interesting work that 
employs different 
techniques: \cite{sprunger2018fibrational} employs a Kan extension technique, \cite{katsumata2015codensity} uses a double orthogonality technique to induce closure, \cite{baldan2014behavioral} uses quantale-valued relations. \cite{hasuo2013coinductive} uses closure under $\omega$-sequences, a term closure, to induce liftings. Researchers have developed 
the basic image factorisation idea in other directions, for example to 
handle ``up to" techniques, \cite{bonchi2018up}. 

The authors would like to thank Matthew Hennessy for suggesting that weak bisimulation would be 
a reasonable challenge for assessing the strength of this technology, the 
referees of an earlier version for pointing us at branching bisimulation 
as a test case, and referees of this version for helpful suggestions and in particular pressing us to improve 
the situation of the paper with respect to other work. 

\section{Bisimulation}
\label{sec:bisimulation}

The notion of bisimulation was introduced for automata in \cite{park1981concurrency}, 
extended by Milner to processes and then further modified to allow internal actions of those processes,  
\cite{milner1989communication}. The classical notion is {\em strong bisimulation}, defined as a relation 
between labelled transition systems. 

\begin{definition}%[transition system]
	A {\em transition system} consists of a set $S$, together with a function $f:S\longrightarrow \pow S$. We view elements $s\in S$ as states of the system, and read $f(s)$ as the set of states to which $s$ can evolve in a single step. A {\em labelled transition system} consists of a set $A$ of labels (or actions), a set $S$ of states, and a function $F: A \longrightarrow [S\to \pow S]$. For $a\in A$ and $s\in S$ we read $Fas$ as the set of states to which $s$ can evolve in a single step by performing action $a$. $s'\in Fas$ is usually written as $\trans sa{s'}$, %for $s'\in Fas$, 
	using different arrows to represent different $F$'s.
\end{definition}

This definition characterises a labelled transition system as a function from labels to unlabelled transition systems. For each label we get the transition system of actions with that label. By uncurrying $F$ we get an equivalent definition as a function $A\times S  \longrightarrow \pow S$. 

We can now define bisimulation. 

\begin{definition}%[strong bisimulation]
	Let $S$ and $T$ be labelled transition systems for the same set of labels, $A$. Then a relation $R\subseteq S\times T$ is a {\em strong bisimulation} if and only if for all $a\in A$, whenever $sRt$
	\begin{itemize}
		\item[-] for all $\trans sa{s'}$, there is $t'$ such that $\trans ta{t'}$ and $s'Rt'$ 
		\item[-] and for all $\trans ta{t'}$, there is $s'$ such that $\trans sa{s'}$ and $s'Rt'$.
	\end{itemize}
\end{definition}

\section{Logical Relations}
\label{sec:logical-relations}

The idea behind logical relations is to take relations on base types, and extend them  to relations on higher types in a structured way. The relations usually considered are binary, but they do not have to be. Even the apparently simple unary logical relations (logical predicates) are a useful tool. In this paper we will be considering binary relations except for a few throwaway remarks. We will also keep things simple by just working with sets. 

As an example, suppose we have a relation $R_0\subseteq S_0 \times T_0$ and a relation $R_1\subseteq S_1 \times T_1$, then we can construct a relation $[R_0\rightarrow R_1]$ between the function spaces $[S_0\rightarrow S_1]$ and $[T_0\rightarrow T_1]$. If $f:S_0\longrightarrow S_1$ and 
$g:T_0\longrightarrow T_1$, then $f [R_0\rightarrow R_1] g$ if and only if for all $s$, $t$ such that $s R_0 t$, then $f(s) R_1 g(t)$. 

The significance of this definition for us is that it arises naturally out of a broader view of the structure. We consider categories of predicates and relations. 

\begin{definition}\label{def:Pred-1}
	The objects of the category {\Pred} are pairs $(P,A)$ where $A$ is a set and $P$ is a subset of $A$. A  
	morphism $(P,A) \longrightarrow (Q,B)$ is a function $f \colon A \longrightarrow B$ such that 
	$\forall a\in A. a\in P \implies f(a) \in Q$. Identities and composition are inherited from \Set. 
\end{definition}

{\Pred} also has a logical reading. We can take $(P,A)$ as a predicate on the type $A$, and 
associate it 
with a judgement of the form $a:A \vdash P(a)$ (read ``in the context $a:A$, $P(a)$ is a proposition''). 
A morphism 
$t \colon (a:A\vdash P(a)) \to (b:B\vdash Q(b))$ 
has two parts: a substitution $b\mapsto t(a)$, and the logical consequence $P(a) \Rightarrow Q(t(a))$ 
(read ``whenever $P(a)$ holds, then so does $Q(t(a))$''). 

\begin{definition}\label{def:Rel-1}
	The objects of the category {\Rel} are triples $(R,A_1,A_2)$ where $A_1$ and $A_2$ are sets and 
	$R$ is a subset of $A_1\times A_2$ (a relation between $A_1$ and $A_2$). A  morphism 
	$(R,A_1,A_2) \longrightarrow (S,B_1,B_2)$ is a pair of functions $f_1 \colon A_1 \longrightarrow B_1$ 
	and $f_2\colon  A_2 \longrightarrow B_2$ such that 
	$\forall a_1\in A_1, a_2\in A_2. (a_1,a_2)\in R \implies (f_1(a_1),f_2(a_2)) \in S$. 
	Identities and composition are inherited from $\Set\times\Set$. 
\end{definition}
\begin{center}
	\begin{tikzcd}
		P  \ar[r, dashed] \ar[d] & Q\ar[d] & R \ar[r, dashed]\ar[d] & S\ar[d]\\
		A \ar[r, "f"] & B   & A_1\times A_2 \ar[r, "f_1\times f_2"] & B_1\times B_2 
	\end{tikzcd}
\end{center}
$\Rel_n$ is the obvious generalisation of $\Rel$ to n-ary relations. 

\Pred{} has a forgetful functor $p\colon \Pred \longrightarrow\Set$, $p(P,A) = A$, and similarly 
{\Rel} has a forgetful functor 
$q\colon \Rel\longrightarrow\Set\times\Set$, $q(R,A_1,A_2) = (A_1,A_2)$, giving rise to two projection functors $\pi_0$ and $\pi_1$ $\Rel\longrightarrow\Set$. These functors carry a good deal of 
structure and are critical to a deeper understanding of the constructions. 

Moreover, both {\Pred} and {\Rel} are cartesian closed categories. 

\begin{lemma} {\Pred} is cartesian closed and the forgetful functor $p:\Pred\to\Set$ preserves that structure. {\Rel} is also cartesian closed and the two projection functors $\pi_0$ and $\pi_1$ preserve that structure. Moreover the function space in {\Rel} is given as in the example above. 
\end{lemma}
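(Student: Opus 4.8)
The plan is to exhibit the cartesian closed structure explicitly in both categories and to observe that, by construction, its underlying data in $\Set$ (respectively in $\Set \times \Set$) agrees with the product and exponential computed there; preservation by the forgetful functors $p$ and $q$ is then immediate. Conceptually, $p$ and $q$ are fibrations whose fibre over a set $A$ (resp.\ a pair $(A_1,A_2)$) is the poset of subsets of $A$ (resp.\ of $A_1 \times A_2$), with reindexing given by inverse image, and one could invoke a general theorem that such fibrations over a cartesian closed base have a cartesian closed total category. A direct verification is short enough to give instead, and has the advantage of exhibiting the function space in the explicit form the lemma asserts.

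First I would treat the finite products. In $\Pred$ the terminal object is $(1,1)$, the maximal predicate on the singleton, and the product is taken componentwise, $(P,A) \times (Q,B) = (P \times Q,\ A \times B)$; the projections and pairing are those of $\Set$, and the defining implication for a $\Pred$-morphism is tracked separately by each component, so the universal property is inherited directly from $\Set$. The situation in $\Rel$ is identical once $R \times S$ is read as the evident relation on $A_1 \times B_1$ and $A_2 \times B_2$, with $((a_1,b_1),(a_2,b_2))$ related iff $(a_1,a_2) \in R$ and $(b_1,b_2) \in S$, and terminal object the full relation on the singleton.

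The substance lies in the exponentials. I would set
\[
[(P,A) \to (Q,B)] = \bigl(\{\, f : A \to B \mid \forall a \in P.\ f(a) \in Q \,\},\ [A \to B]\bigr),
\]
and, in $\Rel$, $[(R,A_1,A_2) \to (S,B_1,B_2)] = \bigl([R \to S],\ [A_1 \to B_1],\ [A_2 \to B_2]\bigr)$ with $[R \to S]$ the logical relation of the introductory example. The key step is the exponential adjunction, and the only point requiring care is that $\Set$-level currying restricts correctly to the sub-collections carved out by the predicates and relations. The main obstacle is verifying this on morphisms: a function $g : C \times A \to B$ underlies a $\Pred$-morphism $(U,C) \times (P,A) \to (Q,B)$ exactly when $g(c,a) \in Q$ for all $c \in U$ and $a \in P$, while its transpose $\hat g : C \to [A \to B]$ underlies a $\Pred$-morphism $(U,C) \to [(P,A) \to (Q,B)]$ exactly when $\hat g(c) \in \{\, f \mid \forall a \in P.\ f(a) \in Q \,\}$ for every $c \in U$, which is again the condition $g(c,a) \in Q$ for all $c \in U$, $a \in P$. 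Since the two side-conditions coincide and the underlying bijection is the one supplied by $\Set$, the adjunction transfers verbatim.

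The $\Rel$ case runs word for word in parallel, the side-condition now reading $(g_1(c_1,a_1), g_2(c_2,a_2)) \in S$ whenever $(c_1,c_2) \in U$ and $(a_1,a_2) \in R$. This single verification simultaneously establishes cartesian closure of both categories, shows that $p$ and $q$ preserve it — since each construction has, by definition, the corresponding product or exponential of $\Set$ as its underlying object — and identifies the $\Rel$ function space as $[R \to S]$, as claimed.
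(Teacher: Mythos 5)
Your proof is correct: the componentwise products, the exponential $[(P,A)\to(Q,B)]$ carved out of $[A\to B]$, and the observation that the side-condition on a map $g$ and on its transpose $\hat g$ are literally the same condition together give the cartesian closed structure, its preservation by $p$ and $q$ (the underlying data being on the nose that of $\Set$, with naturality of the bijection inherited automatically), and the identification of the $\Rel$ function space with the logical relation of the example. The paper states this lemma without proof, treating it as standard, so there is nothing to compare against; your direct verification is exactly the argument the surrounding text presupposes.
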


So the definition we gave above to extend relations to function spaces can be motivated as the description of the function space in a category of relations. 

\section{Covariant Powerset}
\label{sec:covariant-powerset}

We can do similar things with other type constructions. In particular we can extend relations to relations between powersets. 

\begin{definition}\label{def-collection-rel}
	Let $R\subseteq S\times T$ be a relation between sets $S$ and $T$. We define $\pow R\subseteq \pow S\times \pow T$ by: \\
	$ U [\pow R] V $  if and only if
	\begin{itemize}
		\item[-] for all $u\in U$, there is a $v\in V$ such that $uRv$
		\item[-] and for all $v\in V$, there is a $u\in U$ such that $uRv$
	\end{itemize}
\end{definition}

Again this arises naturally out of the lifting of a construction on {\Set} to a construction on {\Rel}. In this case we have the covariant powerset monad, in which the unit $\eta: S \longrightarrow \pow S$ is $\eta s = \{ s\}$, and the multiplication $\mu: \pow{}^2 S \longrightarrow \pow S$ is $\mu X = \Union X$. 

There are two ways to motivate the definition we have just given. They both arise out of constructions for general monads, and in the case of monads on {\Set} they coincide. 

In {\Pred} our powerset operator sends $(Q,A)$ to $(\pow{Q}, \pow A)$ with the obvious inclusion. 
In {\Rel} it almost  sends $(R, A_1, A_2)$ to 
$(\pow R,\pow{A_1}, \pow{A_2})$, where the ``relation" is as follows: if $U\subseteq R$ ({i.e.} 
$U\in\pow R$) then $U$ projects onto $\mathop{\pi_1} U$ and $\mathop{\pi_2} U$. So for example, if 
$R$ is the total relation on $\{0,1,2\}$ and $U=\{(0,1),(1,2)\}$, then $U$ projects onto $\{0,1\}$ and 
$\{1,2\}$.
The issue is that there are other subsets that project onto the same elements, {e.g.} 
$U'=\{(0,1),(1,1),(1,2)\}$, and hence this association does not give a monomorphic embedding of 
$\pow R$ into 
$\pow{A_1}\times \pow{A_2}$.

\begin{lemma}\label{lemma:egli}
	If $R$ is a relation between sets $A_1$ and $A_2$,  $P_1\subseteq A_1$ and $P_2\subseteq A_2$, 
	then the following are equivalent: 
	\begin{enumerate}
		\item there is $U\subseteq R$ such that $\mathop{\pi_1} U = P_1$ and $\mathop{\pi_2} U = P_2$
		\item for all $a_1\in P_1$ there is an $a_2\in P_2$ such that $a_1 R a_2$ and for all $a_2\in P_2$ 
		there is an $a_1\in P_1$ such that $a_1 R a_2$.
	\end{enumerate} 
\end{lemma}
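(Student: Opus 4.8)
The plan is to prove the equivalence by showing each direction separately, treating it as a straightforward unpacking of the definitions combined with an explicit construction.

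For the direction from (1) to (2), I would assume we are given a subset $U \subseteq R$ with $\mathop{\pi_1} U = P_1$ and $\mathop{\pi_2} U = P_2$, and verify the two half-conditions of (2). Take any $a_1 \in P_1$. Since $P_1 = \mathop{\pi_1} U$, there must be some pair in $U$ whose first component is $a_1$, say $(a_1, a_2) \in U$. Because $U \subseteq R$, this pair lies in $R$, so $a_1 R a_2$; and because $a_2 = \mathop{\pi_2}(a_1, a_2) \in \mathop{\pi_2} U = P_2$, the witness $a_2$ lies in $P_2$ as required. The symmetric argument starting from an arbitrary $a_2 \in P_2$ gives the other half. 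This direction is essentially immediate once the projection equalities are read correctly.

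The direction from (2) to (1) is where the actual work lies, since here we must \emph{construct} a witnessing subset $U$. The natural candidate is to collect, for each element of $P_1$ a chosen $R$-partner in $P_2$, and for each element of $P_2$ a chosen $R$-partner in $P_1$. Concretely, I would set
\[
	U = \{ (a_1, a_2) \in R : a_1 \in P_1,\ a_2 \in P_2 \}
\]
the restriction of $R$ to $P_1 \times P_2$. By construction $U \subseteq R$, so it remains to check the two projection equalities. The inclusions $\mathop{\pi_1} U \subseteq P_1$ and $\mathop{\pi_2} U \subseteq P_2$ are immediate from the definition of $U$. For the reverse inclusion $P_1 \subseteq \mathop{\pi_1} U$, take $a_1 \in P_1$; the first half of hypothesis (2) supplies an $a_2 \in P_2$ with $a_1 R a_2$, so $(a_1, a_2) \in U$ and hence $a_1 \in \mathop{\pi_1} U$. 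The reverse inclusion $P_2 \subseteq \mathop{\pi_2} U$ follows symmetrically from the second half of (2).

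I expect the main subtlety, rather than a genuine obstacle, to be recognising that one can simply take the full restriction of $R$ to $P_1 \times P_2$ rather than trying to build $U$ pair-by-pair using choice functions; the two-sided totality condition in (2) is exactly what guarantees that this restriction projects \emph{onto} the whole of each $P_i$ rather than merely into it. Once the candidate $U$ is fixed, both projection equalities reduce to the two clauses of (2), and no further machinery is needed.
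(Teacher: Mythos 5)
Your proof is correct. The paper states Lemma~\ref{lemma:egli} without proof, so there is nothing to compare against, but your argument is the natural one: the direction from (1) to (2) is immediate from the projection equalities, and for the converse the restriction $U = R \cap (P_1 \times P_2)$ is exactly the right witness, with the two clauses of (2) supplying the surjectivity of the two projections. Your closing observation is also the right one to make explicit: no choice functions are needed, since taking the \emph{full} restriction (rather than a minimal set of chosen pairs) still projects onto each $P_i$ precisely because of the two-sided totality in (2).
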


The latter is the Egli-Milner condition arising in the ordering on the Plotkin powerdomain, 
\cite{plotkin1976powerdomain}.

Thus for {\Rel} we take the powerset of $(R,A_1,A_2)$ to be $(\pow R, \pow{A_1},\pow{A_2})$, where 
$P_1 (\pow R) P_2$ if and only if $P_1$ and $P_2$ satisfy the equivalent conditions of Lemma \ref{lemma:egli}.

\paragraph*{Covariant powerset as the algebraic theory of complete $\vee$-semilattices.}

This form of powerset does not characterise predicates on our starting point. Rather it characterises 
arbitrary  collections of elements of it. To make this precise, consider the following formalisation of the 
theory of complete sup-semilattices. For each set $X$ we have an operation 
$\bigvee_X : L^X \longrightarrow L$. In addition, for any $f:X\longrightarrow Y$, composition with $L^f: L^Y\longrightarrow L^X$ is a substitution that takes an operation of arity $X$ into one of arity $Y$. These operations satisfy the following equations: 
\begin{enumerate}
	\item\label{sup-eq-1} given a surjection $f: X\longrightarrow Y$, $\bigvee_X \circ L^f = \bigvee_Y$. 
	\item\label{sup-eq-2} given an arbitrary function $f: X\longrightarrow Y$, 
	$\bigvee_Y \circ (\lambda {y\in Y}. \bigvee_{f^{-1}\{y\}}\circ L^{i_y}) = \bigvee_X$, where 
	$i_y : f^{-1}\{y\} \longrightarrow X$ is the inclusion of $f^{-1}\{y\}$ in $X$. 
\end{enumerate}

The first axiom generalises idempotence and commutativity of the $\vee$-operator. The second says 
that if we have a collection of sets of elements, take their $\bigvee$'s, and take the $\bigvee$ of the 
results, then we get the same result by taking the union of the collection and taking the $\bigvee$ of 
that. A particular case is that $\bigvee_\emptyset$ is the inclusion of a bottom element. 

The fact that this theory includes a proper class of operators and a proper class of equations does not 
cause significant problems. 

\begin{lemma}
	In the category of sets, $\pow A$ is the free complete sup-semilattice on $A$. 
\end{lemma}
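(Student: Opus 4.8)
The plan is to prove freeness directly, by exhibiting the insertion of generators and constructing the unique extension of an arbitrary map. Throughout, write $\eta\colon A\to\pow A$ for $\eta(a)=\{a\}$, and equip $\pow A$ with the structure in which $\bigvee_X$ sends a family $(U_x)_{x\in X}\in(\pow A)^X$ to the union $\bigcup_{x\in X}U_x$.

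First I would check that $(\pow A,\bigvee)$ is genuinely a model of the theory, i.e.\ that union satisfies the two axioms. Axiom \ref{sup-eq-1} is the observation that reindexing a family along a surjection leaves its union unchanged, and axiom \ref{sup-eq-2} is the associativity of union: forming the union fibrewise and then unioning the results recovers the total union. Both are elementary. I would also record the fact that the $\bigvee$ of a one-element family is that element itself, so that $\bigvee_{\{a\}}$ is (under $L^{\{a\}}\cong L$) the identity; this holds in any sup-semilattice and is used for the target $L$ below.

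Next, for the universal property, fix a complete sup-semilattice $(L,\bigvee^L)$ and a function $g\colon A\to L$. I would define $\bar g\colon\pow A\to L$ by $\bar g(U)=\bigvee^L_U(g\circ i_U)$, where $i_U\colon U\hookrightarrow A$ is the inclusion; concretely $\bar g(U)$ is the $L$-supremum of the family $(g(a))_{a\in U}$. Applying the singleton fact gives $\bar g(\{a\})=g(a)$, so $\bar g\circ\eta=g$. The substantive step is that $\bar g$ is a homomorphism, i.e.\ $\bar g\big(\bigcup_{x}U_x\big)=\bigvee^L_{x\in X}\bar g(U_x)$ for every family $(U_x)_{x\in X}$. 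Here I would set $E=\coprod_{x\in X}U_x$, with projection $p\colon E\to X$ (whose fibre over $x$ is $U_x$) and the evaluation map $q\colon E\to A$, $(x,a)\mapsto a$. Axiom \ref{sup-eq-2} applied to $p$ with the family $g\circ q$ rewrites $\bigvee^L_{x}\bar g(U_x)$ as $\bigvee^L_E(g\circ q)$; then $q$ corestricts to a \emph{surjection} $E\twoheadrightarrow\bigcup_xU_x$, so axiom \ref{sup-eq-1} identifies $\bigvee^L_E(g\circ q)$ with $\bar g(\bigcup_xU_x)$ (by definition of $\bar g$). This is where both axioms are genuinely needed, and the fibre/coproduct bookkeeping is the part most likely to hide an error, so it is the step I would treat most carefully.

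Finally, uniqueness follows from the decomposition $U=\bigcup_{a\in U}\{a\}$, valid in $\pow A$: any homomorphism $h$ with $h\circ\eta=g$ must satisfy $h(U)=\bigvee^L_{a\in U}h(\{a\})=\bigvee^L_{a\in U}g(a)=\bar g(U)$, forcing $h=\bar g$. Thus $\bar g$ is the unique sup-semilattice homomorphism extending $g$, establishing that $\pow A$ together with $\eta$ is the free complete sup-semilattice on $A$.
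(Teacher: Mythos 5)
Your proposal is correct and follows essentially the same route as the paper's (sketched) proof: the same extension $\bar g(U)=\bigvee^L_U(g\circ i_U)$, the same use of axiom (2) for the homomorphism property and of the decomposition $U=\bigcup_{a\in U}\{a\}$ for uniqueness. The only difference is that you fill in the coproduct/fibre bookkeeping for the homomorphism step explicitly, which the paper leaves implicit, and that detail is handled correctly.
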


\begin{proof}
	(Sketch) Interpreting the $\bigvee$ operators as unions, it is clear that $\pow A$ is a model of our 
	theory of complete sup-semilattices. 
	
	Suppose now that $f: A \longrightarrow B$ and $B$ is a complete sup-semilattice. Then we have a map 
	$f\str : \pow A \longrightarrow B$ defined by 
	$f\str (X) = \bigvee_X (\lambda x\in X. f(x))$. Equation (\ref{sup-eq-1}) tells us that the operators 
	$\bigvee_X$ are stable under isomorphisms of $X$, and hence we do not need to be concerned 
	about that level of detail. Equation (\ref{sup-eq-2}) now tells us that $f\str$ is a homomorphism. 
	Moreover,  if $X\subseteq A$ then in $\pow A$, $X = \bigvee_X (\lambda x\in X. \{ x \})$. Hence $f\str$ 
	is the only possible homomorphism extending $f$. This gives the free property for $\pow A$. 
\end{proof}

\begin{lemma}
	In {\Pred}, $(\pow P,\pow A)$ is the free complete sup-semilattice on $(P,A)$ and in {\Rel}, 
	$(\pow R, \pow{A_1},\pow{A_2})$ is the free complete sup-semilattice on $(R,A_1,A_2)$. 
\end{lemma}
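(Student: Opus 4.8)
The plan is to lift the free-algebra result just proved for $\Set$ through the faithful, structure-preserving forgetful functors $p \colon \Pred \to \Set$ and $q \colon \Rel \to \Set \times \Set$. First I would identify the algebras. Powers are computed fibrewise: in $\Pred$ the object $(Q,B)^X$ is $(Q^X, B^X)$ with $Q^X$ the functions $X \to B$ that land in $Q$, and in $\Rel$ the object $(S, B_1, B_2)^X$ is $(S^X, B_1^X, B_2^X)$ with $S^X$ the pairs of families that are pointwise $S$-related. Since the defining equations hold in the total category exactly when they hold in each $\Set$-component, a complete sup-semilattice in $\Pred$ is a pair $(Q,B)$ with $B$ a complete sup-semilattice and $Q \subseteq B$ closed under arbitrary joins (the empty join, i.e.\ bottom, included); and a complete sup-semilattice in $\Rel$ is a relation $S \subseteq B_1 \times B_2$ between complete sup-semilattices that is closed under componentwise joins, i.e.\ a sub-sup-semilattice of the product $B_1 \times B_2$.

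Next I would verify that the proposed free objects really are algebras under the union operations and that the units are morphisms. For $\Pred$ this is immediate, since $\pow P$ is closed under arbitrary unions inside $\pow A$ and $a \mapsto \{a\}$ sends $P$ into $\pow P$. For $\Rel$ the relation on $(\pow R, \pow{A_1}, \pow{A_2})$ is the Egli--Milner relation of Lemma \ref{lemma:egli}, and I would check directly that it is closed under componentwise union: a witness for $\bigcup_x P_1^x$ against $\bigcup_x P_2^x$ is assembled from the witnesses for the individual pairs $P_1^x, P_2^x$. The unit $(a_1, a_2) \mapsto (\{a_1\}, \{a_2\})$ is then easily seen to be a $\Rel$-morphism.

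For the universal property, note that the underlying $\Set$-maps $f_i$ of a morphism $f$ into an algebra extend uniquely to sup-homomorphisms $f_i^*$ with $f_i^*(X) = \bigvee_{x \in X} f_i(x)$ by the preceding lemma, and faithfulness of the forgetful functors forces the pair $(f_1^*, f_2^*)$ (resp.\ the single map $f^*$) to be the only candidate lift. So everything reduces to checking that these underlying maps respect the predicate, resp.\ the relation. The $\Pred$ case is trivial: if $X \subseteq P$ then $f^*(X) = \bigvee_{x \in X} f(x)$ is a join of elements of $Q$ and therefore lies in $Q$.

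The hard part will be the $\Rel$ case, because $f_1^*(P_1)$ and $f_2^*(P_2)$ are joins over the two a priori unmatched index sets $P_1$ and $P_2$, whereas the Egli--Milner relation only asserts a matching element-by-element. My device would be to route the argument through a witnessing set: given $P_1 [\pow R] P_2$, Lemma \ref{lemma:egli} supplies $U \subseteq R$ with $\pi_1 U = P_1$ and $\pi_2 U = P_2$. Each $(a_1, a_2) \in U$ has $a_1 R a_2$, hence $(f_1 a_1, f_2 a_2) \in S$, and since $S$ is closed under componentwise joins the $U$-indexed family yields $\bigl(\bigvee_{(a_1,a_2) \in U} f_1 a_1, \bigvee_{(a_1,a_2) \in U} f_2 a_2\bigr) \in S$. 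It then remains to collapse each coordinate to $f_i^*(P_i)$: as $\pi_1 \colon U \to P_1$ is surjective and the $U$-family is the $P_1$-family precomposed with $\pi_1$, equation (\ref{sup-eq-1}) gives $\bigvee_{(a_1,a_2) \in U} f_1 a_1 = \bigvee_{a_1 \in P_1} f_1 a_1 = f_1^*(P_1)$, and symmetrically for the second coordinate via $\pi_2$. Hence $(f_1^*(P_1), f_2^*(P_2)) \in S$, so $(f_1^*, f_2^*)$ is a $\Rel$-morphism and the free property follows.
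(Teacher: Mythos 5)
Your proof is correct and follows essentially the same route as the paper's: compute products componentwise, observe that the union operations make the proposed objects into algebras satisfying the equations fibrewise, and show that the unique $\Set$-level extensions $f^{\ast}$ (resp.\ $(f_1^{\ast},f_2^{\ast})$) respect the predicate or relation. The paper dismisses the $\Rel$ case with ``similar''; your use of the Egli--Milner witnessing set $U\subseteq R$ together with the surjection axiom (\ref{sup-eq-1}) to reconcile the mismatched index sets $P_1$ and $P_2$ is exactly the detail needed to make that remark precise.
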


\begin{proof}
	We start with {\Pred}. For any set $X$, $(X,X)$ is the coproduct in {\Pred} of $X$ copies of $(1,1)$, 
	and $(Q^X,B^X)$ is the product of $X$ copies of $(Q,B)$. $X$-indexed union in the two components 
	gives a map 
	$\bigcup_X : ((\pow{P})^X,(\pow{A})^X) \longrightarrow (\pow P,\pow A)$. 
	Since this works component-wise, these operators satisfy the axioms in the same way as in \Set. 
	$(\pow P,\pow A)$ is thus a complete sup-semilattice. 
	
	Moreover, if $f: (P,A) \longrightarrow (Q,B)$ where $(Q,B)$ is a complete sup-semilattice, then we have 
	$f\str: \pow A \longrightarrow B$ and (the restriction of) $f\str$ also maps $\pow P \longrightarrow Q$. The proof 
	is now essentially as in \Set. 
	
	The proof in {\Rel} is similar. 
\end{proof}

This type constructor has notable differences from a standard powerset. It (obviously) supports 
collecting operations of union, including a form of quantifier: $\bigcup : \pow \pow X \longrightarrow\pow X$. 
However it does not support either intersection or a membership operator. 

\begin{lemma}
	\begin{enumerate}
		\item $\cap : \pow X \times \pow X \to \pow X$ is not parametric.
		\item $\in : X \times \pow X \to 2 = \{\top,\bot\}$ is not parametric. 
	\end{enumerate}
\end{lemma}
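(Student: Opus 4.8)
The plan is to refute both parametricity claims with a single \emph{collapsing} relation that the Egli--Milner lifting of Definition~\ref{def-collection-rel} cannot see through. Recall that an operation of type $\tau$ is parametric when it inhabits the logical relation generated for $\tau$ at \emph{every} choice of relation $R \subseteq S \times T$ on the base type $X$, with the two-element type $2 = \{\top,\bot\}$ carried by the equality relation $\Delta_2 = \{(\top,\top),(\bot,\bot)\}$. Hence to refute parametricity it suffices to produce one relation and one pair of related arguments whose images under the operation fail to be related.

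First I would fix $S = \{0,1\}$, $T = \{*\}$, and $R = \{(0,*),(1,*)\}$, the relation collapsing both elements of $S$ onto the unique element of $T$. Unwinding the Egli--Milner condition for this $R$, one finds $P \mathrel{[\pow R]} Q$ exactly when $P$ and $Q$ are simultaneously empty or simultaneously nonempty, since every element of $S$ is $R$-related to $*$ and conversely.

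For part (1), I would take $\{0\} \mathrel{[\pow R]} \{*\}$ and $\{1\} \mathrel{[\pow R]} \{*\}$ — both hold, as all four sets are nonempty — and observe that $\{0\} \cap \{1\} = \emptyset$ while $\{*\} \cap \{*\} = \{*\}$. Since $\emptyset \mathrel{[\pow R]} \{*\}$ fails, the pair $(\cap_S,\cap_T)$ does not lie in $(\pow R \times \pow R) \to \pow R$, so $\cap$ is not parametric. For part (2), using the same $R$ I would take the related elements $0 \mathrel{R} *$ together with the related sets $\{1\} \mathrel{[\pow R]} \{*\}$, and compute $0 \in \{1\} = \bot$ against $* \in \{*\} = \top$. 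As these truth values differ, the outputs are not related by $\Delta_2$, so $(\in_S,\in_T)$ escapes $(R \times \pow R) \to \Delta_2$.

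The only genuine step is finding the collapsing relation; everything after is a finite check, so I expect no real obstacle. The mild subtlety worth stating explicitly is that $2$ must be equipped with equality, without which part (2) would be vacuous. The conceptual point I would emphasise is \emph{why} the counterexample works: the Egli--Milner lifting records merely the existence of witnesses on each side, discarding the correspondence between individual elements. Both intersection and membership, by contrast, turn on element identity — intersection tests coincidence of members, membership a specific incidence — and so neither can be recovered from the purely existential data that $\pow R$ retains.
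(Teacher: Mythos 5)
Your proposal is correct and is essentially the paper's own argument with the relation transposed: the paper uses a relation with $aRb$ and $aRb'$ for $b\neq b'$, while you collapse two elements onto one, and in both cases the counterexamples for $\cap$ and $\in$ exploit exactly the same failure of the Egli--Milner lifting to track element identity. No gaps; the finite checks all go through.
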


\begin{proof}
	Consider sets $A$ and $B$ and a relation $R$ in which $aRb$ and $aRb'$ where $b\neq b'$.
	\begin{enumerate}
		\item $\{a\} \pow R \{b\}$ and $\{a\} \pow R \{b'\}$, but $\{a\}\cap\{a\} = \{a\}$, while 
		$\{b\}\cap\{b'\} = \emptyset$, and it is not the case that $\{a\} \pow R \emptyset$.
		\item  $aRb'$ and $\{a\} \pow R \{b\}$, but applying $\in$ to both left and right components of this gives different results: \\
		$\in (a,\{a\}) = \top$, while $\in (b',\{b\}) = \bot$.
	\end{enumerate}
Hence $\cap$ and $\in$ are not parametric.
\end{proof}

Despite the lack of these operations, this type constructor is useful to model non-determinism. 

\paragraph*{Covariant powerset in {\Rel} using image factorisation.} 

Suppose $Q\subseteq A$, then $\pow Q\subseteq \pow A$, and hence we can easily extend $\pow$ to 
{\Pred}. However, if $R\subseteq A\times B$, then $\pow R$ is a subset of $\pow (A\times B)$, not 
$\pow A \times \pow B$. The consequence is that $\pow$ does not automatically extend to {\Rel} in the same way. 

The second way to get round this is to note that we have projection maps $R \longrightarrow A$ and $R\longrightarrow B$. 
Applying the covariant $\pow$ we get $\pow R \longrightarrow \pow A$ and $\pow R\longrightarrow \pow B$, and hence 
a map 
$\phi: \pow R\longrightarrow (\pow A \times \pow B)$. $\phi$ sends $U\subseteq R$ to 
\[(\pi_A (U), \pi_B (U)) = (\{a\in A\ |\ \exists b\in B.\ (a,b)\in U\},\{b\in B\  |\  \exists a\in A.\ (a,b)\in U\})\]

This map is not necessarily monic:

\begin{example}
	Let $A=\{0,1\}$, $B=\{x,y\}$, and $R=A\times B$. Take $U=\{(0,x),(1,y)\}$, and $V=\{(0,y),(1,x)\}$. Then 
	$\phi U = \phi V = \phi R = A\times B$, and hence $\phi$ is not monic. 
\end{example}

We therefore take its image factorization:
\[\begin{tikzcd}%[size=1.5cm]
\pow R \ar[r, two heads] & \overline{\pow R} \ar[r, tail] & \pow A \times \pow B
\end{tikzcd}\]

Using this definition, $\overline{\pow R}$ is 
\[\{ (U,V) \in  \pow A \times \pow B \ |\ \exists S\subseteq R.\ U=\pi_A S \wedge V = \pi_B S \}\]
Now by Lemma \ref{lemma:egli} we have that this gives the same extension of covariant powerset to 
relations as the algebraic approach. 

\begin{lemma}
	The following are equivalent: 
	\begin{enumerate}
		\item $ U [\pow R] V $
		\item there is $S\subseteq R$ such that $\mathop{\pi_A} S = U$ and $\mathop{\pi_B} S = V$
		\item for all $a \in U$ there is an $b\in V$ such that $a R b$ and for all $b \in V$ 
		there is an $a \in U$ such that $a R b$.
	\end{enumerate} 
\end{lemma}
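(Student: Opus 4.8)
The plan is to establish the two separate equivalences $(1)\Leftrightarrow(2)$ and $(2)\Leftrightarrow(3)$, both of which are essentially already at hand, so that the proof reduces to unwinding the image-factorization definition and then appealing to the earlier Egli--Milner lemma. No cyclic implication argument is needed; two biconditionals suffice.

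First I would handle $(1)\Leftrightarrow(2)$. By the image-factorization construction just given, the relation $\pow R$ on $\Rel$ is the image of the map $\phi\colon\pow R\to\pow A\times\pow B$ sending $S\subseteq R$ to $(\pi_A S,\pi_B S)$. Since in $\Set$ the image of a function is exactly the set of values it attains, $(U,V)$ lies in this image — that is, $U\,[\pow R]\,V$ holds — precisely when some $S\subseteq R$ satisfies $\phi S=(U,V)$, i.e.\ $\pi_A S=U$ and $\pi_B S=V$. This is word-for-word condition (2); indeed the explicit description of $\overline{\pow R}$ displayed immediately before the statement already records this identity, so essentially no computation is required here.

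For $(2)\Leftrightarrow(3)$ I would simply cite Lemma~\ref{lemma:egli}, instantiated with $A_1=A$, $A_2=B$, $P_1=U$ and $P_2=V$. The only place genuine content could hide is the implication $(3)\Rightarrow(2)$ inside that lemma — producing a witnessing subset from the Egli--Milner condition — and even this is routine: the natural witness $S=\{(a,b)\in R\mid a\in U,\ b\in V\}$ satisfies $U\subseteq\pi_A S$ and $V\subseteq\pi_B S$ by the two clauses of (3), while the reverse inclusions hold by construction, and the converse $(2)\Rightarrow(3)$ is immediate by projecting the pairs of $S$. I therefore expect no real obstacle: since Lemma~\ref{lemma:egli} already supplies this, the force of the present statement is organisational, confirming that the image-factorization extension of covariant powerset to $\Rel$ coincides with the algebraic one and with the original Egli--Milner definition of Definition~\ref{def-collection-rel}.
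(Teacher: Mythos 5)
Your proposal is correct and matches the paper's intent exactly: the paper offers no explicit proof, presenting the lemma as an immediate consequence of Definition~\ref{def-collection-rel}, the displayed description of $\overline{\pow R}$, and Lemma~\ref{lemma:egli}. Your decomposition into $(1)\Leftrightarrow(2)$ (unwinding the image) and $(2)\Leftrightarrow(3)$ (citing Lemma~\ref{lemma:egli}) is the same argument, and your witness $S=\{(a,b)\in R\mid a\in U,\ b\in V\}$ correctly fills in the one step the paper also leaves unproved.
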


\section{Strong bisimulation via logical relations}

This now gives us the ingredients to introduce the notion of a logical relation between transition systems. 

\begin{definition} 
	Suppose $f : S \longrightarrow \pow S$ and $g: T \longrightarrow \pow T$ are two transition systems. Then we say that $R\subseteq S\times T$ is a {\em logical relation of transition systems} if $(f,g)$ is in the relation $[R\rightarrow \pow R]$.  Similarly, if $A$ is a set of labels and $F: A \longrightarrow [S\rightarrow \pow S]$ and $G: A \longrightarrow [T\rightarrow \pow T]$ are labelled transition systems, then we say that $R\subseteq S\times T$ is a {\em logical relation of labelled transition systems} if $(Fa,Ga)$ is in the relation $[R\rightarrow \pow R]$ for all $a\in A$. 
\end{definition}

The following lemma is trivial to prove, but shows that we could take our uniform approach a step further, to include relations on the alphabet of actions: 

\begin{lemma}
	$R$ is a logical relation of labelled transition systems if and only if $(F,G)$ is in the relation 
	$[\Id_A \rightarrow [R\rightarrow \pow R]]$. 
\end{lemma}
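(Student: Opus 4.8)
The plan is to unfold both sides of the claimed equivalence directly from the definition of the logical relation on function spaces (given in Section~\ref{sec:logical-relations}) and observe that they produce literally the same condition, differing only in the order of universal quantification. The key observation is that the identity relation $\Id_A$ on the set $A$ of labels relates $a$ to $a'$ precisely when $a = a'$, so quantifying over pairs in $\Id_A$ is the same as quantifying over a single element of $A$.

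First I would recall that, by the function-space recipe, $(F,G) \in [\Id_A \rightarrow [R\rightarrow \pow R]]$ holds if and only if for all $a, a' \in A$ with $a \mathrel{\Id_A} a'$, we have $(Fa, Ga') \in [R \rightarrow \pow R]$. Since $a \mathrel{\Id_A} a'$ means $a = a'$, this condition is equivalent to: for all $a \in A$, $(Fa, Ga) \in [R \rightarrow \pow R]$. This last statement is exactly the definition of $R$ being a logical relation of labelled transition systems. Thus the two conditions coincide, and each is simply a repackaging of the other.

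There is no real obstacle here; as the paper itself notes, the lemma is trivial. The only point requiring a word of care is the handling of the diagonal: one must confirm that the logical-relations interpretation of the type $A$ as a base type is indeed the identity (diagonal) relation $\Id_A$, so that the hypothesis $a \mathrel{\Id_A} a'$ collapses the two quantified label variables into one. Once this is noted, the proof is a one-line chase through the definitions, and I would present it simply by expanding the definition of $[\Id_A \rightarrow [R \rightarrow \pow R]]$ and substituting $a' = a$.
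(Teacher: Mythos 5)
Your proof is correct and matches the paper's intent exactly: the paper states the lemma without proof precisely because it is the one-line unfolding you describe, and the same collapse of $a \mathrel{\Id_A} a'$ to $a = a'$ appears verbatim in the paper's proof of the subsequent lemma on strong bisimulation. Nothing is missing.
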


More significantly, we have: 

\begin{lemma}
	If $F: A \longrightarrow [S\rightarrow \pow S]$ and $G: A \longrightarrow [T\rightarrow \pow T]$ are two labelled transition systems, then $R\subseteq S\times T$ is a logical relation of labelled transition systems if and only if it is a strong bisimulation. 
\end{lemma}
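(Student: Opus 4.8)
The plan is to unfold both sides of the claimed equivalence and check they say the same thing, literally. The statement asserts that $R \subseteq S \times T$ is a logical relation of labelled transition systems exactly when $R$ is a strong bisimulation, so I would begin by writing out what each phrase means and then match the clauses. By definition, $R$ is a logical relation of labelled transition systems iff for every $a \in A$ we have $(Fa, Ga) \in [R \rightarrow \pow R]$. First I would expand the function-space relation using the recipe recalled in Section~\ref{sec:logical-relations}: $(Fa, Ga) \in [R \rightarrow \pow R]$ holds iff for all $s \in S$ and $t \in T$ with $s R t$, we have $(Fas)\,[\pow R]\,(Gat)$.

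Next I would rewrite the powerset-relation condition using the characterisation supplied by the lemma immediately following Definition~\ref{def-collection-rel} (the three-way equivalence), invoking in particular the Egli--Milner clause (item 3): $U\,[\pow R]\,V$ iff for all $u \in U$ there is $v \in V$ with $u R v$, and for all $v \in V$ there is $u \in U$ with $u R v$. Applying this with $U = Fas$ and $V = Gat$ gives: for all $s' \in Fas$ there is $t' \in Gat$ with $s' R t'$, and symmetrically. Recalling the notation $\trans sa{s'}$ for $s' \in Fas$, this reads exactly as: for all $\trans sa{s'}$ there is $t'$ with $\trans ta{t'}$ and $s' R t'$, together with the symmetric clause.

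Finally I would lay this resulting condition side by side with Definition~\ref{def:Rel-1}'s clause for strong bisimulation. The bisimulation definition requires, for all $a \in A$ and all $s R t$, the two transfer clauses; the unfolded logical-relation condition requires precisely the same, quantified identically over $a$, $s$, $t$. Since the two conditions are syntactically the same after unfolding, the equivalence follows, and the proof is complete once all the unfoldings are recorded.

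There is essentially no obstacle here: the content of the lemma is entirely absorbed into the definitions of $[R \rightarrow \pow R]$ and of $\pow R$, which is why the authors call it trivial. The only point requiring a moment's care is to make sure the quantifier structure lines up cleanly after unfolding both the function-space relation and the Egli--Milner condition, and that the bidirectional clause of $\pow R$ matches the two symmetric transfer clauses of strong bisimulation rather than being collapsed or duplicated. Once the nested unfoldings are tracked carefully, the match is immediate.
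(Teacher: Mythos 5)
Your proposal is correct and matches the paper's own argument: both simply unfold the definition of $[\Id_A \to [R\to\pow R]]$ through the function-space clause and then Definition~\ref{def-collection-rel} (the Egli--Milner condition), and observe that the resulting transfer clauses are literally the strong-bisimulation clauses. The only slip is bibliographical rather than mathematical: the definition of strong bisimulation lives in Section~\ref{sec:bisimulation}, not in Definition~\ref{def:Rel-1}, which defines the category $\Rel$.
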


\begin{proof}
	The proof is simply to expand the definition of what it means to be a logical relation of labelled transition 
	systems. If $R$ is a logical relation
	and $sRt$ then, applying the definition of logical relation for 
	function space twice, $\{ s' | \trans sa{s'} \} \pow R \{ t' | \trans ta{t'} \}$. So if $\trans sa{s'}$,  then 
	$s' \in \{ s' | \trans sa{s'} \}$. Hence, by definition of $\pow R$ there is a $t' \in \{ t' | \trans ta{t'} \}$ 
	such that $s' R t'$. In other words, $\trans ta{t'}$ and $s'Rt'$. 
	
	Conversely, if $R$ is a strong bisimulation, 
	then $\lambda as.\ \{s' | \trans sa{s'} \}$ and $\lambda at.\ \{t' | \trans ta{t'} \}$ are in the relation 
	$[\Id_A \rightarrow [R\rightarrow \pow R]]$. We have to check that if $a \Id_A a'$ and $sRt$ then 
	$\{s' | \trans sa{s'} \} \pow R \{t' | \trans t{a'}{t'} \}$ But if $a \Id_A a'$, then $a=a'$, so this reduces to 
	$\{s' | \trans sa{s'} \} \pow R \{t' | \trans ta{t'} \}$. Now Definition \ref{def-collection-rel} says that we need 
	to verify that: 
	\begin{itemize}
		\item[-] for all $\trans sa{s'}$, there is $t'$ such that $\trans ta{t'}$ and $s'Rt'$ 
		\item[-] and for all $\trans ta{t'}$, there is $s'$ such that $\trans sa{s'}$ and $s'Rt'$.
	\end{itemize}
	This is precisely the bisimulation condition. 
\end{proof}

This means that we have rediscovered strong bisimulation as the specific notion of congruence for transition systems arising out of a more general theory of congruences between typed structures. 

\section{A digression on Monads}\label{sec:digressionOnMonads}

The covariant powerset functor is an example of a monad, and the two 
approaches given to extend it to {\Rel} at the end of 
section \ref{sec:covariant-powerset} extend to 
general monads. In the case of monads on {\Set} they are equivalent. 

{\Set} satisfies the Axiom Schema of Separation: 
\[ \forall v. \exists w. \forall x. [ x\in w \leftrightarrow x\in v \wedge \phi (x) ] \]
This restricted form of comprehension says that for any predicate $\phi$ on a set $v$, there is a 
subset of $v$ containing exactly the elements of $v$ that satisfy $\phi$. Since this is a set, we can
apply functors to it. 

Moreover, classical sets have the property that any monic whose domain is a non-empty set has 
a retraction. It follows that if $m$ is such a monic, then $Fm$ is also monic, where $F$ is any functor. 

\begin{lemma}\label{lemma:monads-preserve-monics}
	\begin{enumerate}
		\item Let $F \colon \Set \longrightarrow \Set$ be a functor, and $i: A\monic B$ a monic, where $A\neq \emptyset$, then $Fi$ is also monic. 
		\item Let $M:\Set\longrightarrow\Set$ be a monad, and $i:A\monic B$ any monic, then $Mi$ is also monic. 
		\item Let $M:\Set\longrightarrow\Set$ be a monad, then $M$ extends to a functor $\Pred\longrightarrow \Pred$ over \Set.
	\end{enumerate}
\end{lemma}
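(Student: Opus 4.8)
The plan is to prove the three parts in order, using (1) as a tool for (2) and (2) as the essential input for (3).

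For part (1), I would exploit the fact that in $\Set$ a monic $i\colon A\monic B$ with $A\neq\emptyset$ is a \emph{split} monomorphism. Choosing any $a_0\in A$, define $r\colon B\to A$ by sending each $b$ in the image of $i$ to its unique preimage and every other $b$ to $a_0$; then $r\circ i=\id_A$. Since any functor preserves composites and identities, $Fr\circ Fi=F(r\circ i)=\id_{FA}$, so $Fi$ is split monic, hence monic. This is exactly the observation foreshadowed in the paragraph before the lemma.

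For part (2), I would first dispatch the case $A\neq\emptyset$ immediately by part (1), since $M$ is in particular a functor. This leaves $i\colon\emptyset\monic B$, where part (1) is unavailable because the initial object admits no retraction. If $M\emptyset=\emptyset$ then $Mi$ has empty domain and is monic for trivial reasons. The substantive case, which I expect to be the main obstacle, is $i\colon\emptyset\monic B$ with $M\emptyset\neq\emptyset$; here a general functor can genuinely fail to preserve the monic, so the monad structure must intervene. I would fix a point $z\in M\emptyset$, form the constant map $c\colon B\to M\emptyset$ with value $z$, and take its Kleisli extension $g:=\mu_\emptyset\circ Mc\colon MB\to M\emptyset$. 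The key point is that $c\circ i\colon\emptyset\to M\emptyset$ is forced to be the unique map out of the initial object, which is also $\eta_\emptyset$; hence
\[ g\circ Mi=\mu_\emptyset\circ M(c\circ i)=\mu_\emptyset\circ M\eta_\emptyset=\id_{M\emptyset} \]
by the monad unit law $\mu\circ M\eta=\id$. Thus $Mi$ is split monic, hence monic, which completes part (2).

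For part (3), I would define the extension on objects by $M(P,A)=(MP,MA)$, where $MP$ is regarded as a subset of $MA$ via the inclusion $i_P\colon P\monic A$: by part (2) the map $Mi_P\colon MP\to MA$ is monic, so it identifies $MP$ with its image, a genuine subset of $MA$. On a morphism $f\colon(P,A)\to(Q,B)$, the defining condition $f[P]\subseteq Q$ says that $f$ restricts to some $f'\colon P\to Q$ with $i_Q\circ f'=f\circ i_P$; applying $M$ and using functoriality gives $Mf\circ Mi_P=Mi_Q\circ Mf'$, so $Mf$ carries the image of $Mi_P$ into the image of $Mi_Q$, i.e.\ $Mf$ is a morphism $M(P,A)\to M(Q,B)$ in $\Pred$. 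Since the underlying function is simply $Mf$, preservation of identities and composites, together with the fact that this assignment lies over $\Set$, are immediate from the corresponding facts for $M$ on $\Set$. The only place where anything could go wrong is the well-definedness of the predicate part, and that is precisely what part (2) secures.
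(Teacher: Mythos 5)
Your proof is correct and follows essentially the same route as the paper's: a retraction argument for part (1), the same three-way case split for part (2) with the Kleisli extension of a map $B\to M\emptyset$ providing a retraction of $Mi$ in the key case, and the image of $Mi_P$ supplying the predicate for part (3), which the paper leaves as ``immediate''. The only cosmetic difference is that you verify $r^{\ast}\circ Mi=\id_{M\emptyset}$ directly from the unit law $\mu\circ M\eta=\id$ together with initiality of $\emptyset$ in $\Set$, where the paper instead quotes the fact that $M\emptyset$ is the initial $M$-algebra --- the same fact, computed rather than invoked.
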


\begin{proof}
	\begin{enumerate}
		\item $i$ has a retraction which is preserved by $F$. 
		\item If $A$ is non-empty, then this follows from the previous remark. If $A$ is empty, then there are 
		two cases. If $M\emptyset = \emptyset$, then $Mi : \emptyset = M\emptyset = M A \longrightarrow MB$ is
		automatically monic. If $M\emptyset \neq \emptyset$, then let $r$ be any map 
		$B\longrightarrow M\emptyset$. $MB$ is the free $M$-algebra on $B$, and therefore there is a unique 
		$M$-algebra homomorphism $r\str : MB\longrightarrow M\emptyset$ extending this.  $Mi$ is also an $M$-algebra homomorphism and hence so is the composite $r\str (Mi)$. Since $M\emptyset$ is 
		the initial $M$-algebra, it must be the identity, and hence $Mi$ is monic. 
		\item Immediate. \qed
	\end{enumerate}
	\renewcommand{\endproof}{}
\end{proof}

This means that we can make logical predicates work for monads on {\Set}, though there are 
limitations we
will not go into here. We cannot necessarily do the same for monads on arbitrary categories, and we
have already seen that this approach does not work for logical relations. In order to extend to logical 
relations we have our algebraic and image factorisation approaches. 

It is widely known that a large class of monads, monads where the functor preserves filtered (or more 
generally $\alpha$-filtered) colimits correspond to algebraic theories. However it is less commonly 
understood that arbitrary monads can be considered as being given by operations and equations, and 
that the property on the functor is really only used to reduce the collection of operations and equations
down from a proper class to a set. 

Let $M$ be an arbitrary monad on {\Set}, and $\theta: MB \longrightarrow B$ be an $M$-algebra. Let $A$ be an 
arbitrary set, then any element of $MA$ gives rise to an $A$-ary operation on $B$. Specifically, let 
$t$ be an element of $MA$. An $A$-tuple of elements of $B$ is given by a function $e: A\longrightarrow B$, then 
we apply $t$ to $e$ by composing $\theta$ and $Me$ and applying this to $t$: $(\theta\circ (Me))(t)$. 
The monad multiplication can be interpreted as a mechanism for applying terms to terms, and we get 
equations from the functoriality of $M$ and this interpretation of the monad operation. 

We can look at models of this algebraic theory in the category {\Rel} and interpret $MR$ as the free 
model of this theory on $R$. That is the algebraic approach we followed for the covariant powerset $\pow$. 

Alternatively we can follow the second approach and use image factorisation. 
\[
\begin{tikzcd}[sep=large]
M R \ar[r, two heads] 
\ar[d]
& \overline{M R} 
\ar[d,tail]\\
M(A\times B) 
\ar[r, "\langle M\pi_A{,} M\pi_B \rangle"] &
% \ar[r, ``\langle M\pi_A, M\pi_B \rangle''] &
MA\times MB
\end{tikzcd}
\]

Because of the particular properties of {\Set}, monads preserve image factorisation. 

\begin{lemma}\label{lemma:monads-pres-fact}
	Let $M$ be a monad on {\Set}.
	\begin{enumerate}
		\item $M$ preserves surjections: if $f: A\twoheadrightarrow B$ is a surjection from $A$ onto $B$, then $Mf$ is also a surjection. 
		\item $M$ preserves image factorisations: if 
		\begin{tikzcd}[cramped] A \ar[r, twoheadrightarrow,"p"] & P \ar[r,tail,"i"] & B\end{tikzcd} 
		is the image factorisation of $f = i\circ p$, then 
		\begin{tikzcd}[cramped] MA \ar[r, twoheadrightarrow,"Mp"] & MP \ar[r,tail,"Mi"] & MB\end{tikzcd} is the image factorisation of $Mf$. 
	\end{enumerate}
\end{lemma}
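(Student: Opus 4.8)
The plan is to handle the two parts in sequence, with part~(1) feeding directly into part~(2), and to lean on the already-established Lemma~\ref{lemma:monads-preserve-monics} for the monic half.

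For part~(1) I would exploit the fact that in {\Set} every surjection splits. Given a surjection $f\colon A\twoheadrightarrow B$, if $B$ is nonempty the Axiom of Choice provides a section $s\colon B\to A$ with $f\circ s=\id_B$. Applying $M$ and using functoriality gives $Mf\circ Ms=M(f\circ s)=M(\id_B)=\id_{MB}$, so $Mf$ is a split epimorphism; since split epis are epis and epis in {\Set} are exactly surjections, $Mf$ is a surjection. The degenerate case $B=\emptyset$ forces $A=\emptyset$ and $f=\id_\emptyset$, whence $Mf=\id_{M\emptyset}$ is trivially surjective.

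For part~(2), starting from the image factorisation $f=i\circ p$ with $p$ a surjection and $i$ a monic, I would apply $M$ to obtain $Mf=Mi\circ Mp$. By part~(1), $Mp$ is a surjection; by Lemma~\ref{lemma:monads-preserve-monics}(2), $Mi$ is a monic, i.e.\ an injection. Thus $Mf$ is exhibited as a surjection followed by an injection. Since in {\Set} the factorisation of a function as a surjection followed by an injection is unique up to unique isomorphism and coincides with the image factorisation, this is precisely the image factorisation of $Mf$.

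The genuinely hard content has in fact been discharged elsewhere: the principal obstacle in this circle of ideas is the preservation of monics by an arbitrary monad --- especially the subtle case of a monic out of the empty set --- but that is exactly what Lemma~\ref{lemma:monads-preserve-monics} already supplies. What remains here is the splitting of surjections via choice together with the uniqueness of the (surjection, injection) factorisation in {\Set}, both routine. I would therefore expect a short write-up, with the only real care being over the empty-set boundary case in part~(1).
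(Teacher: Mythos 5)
Your proof is correct and follows essentially the same route as the paper's: surjections in {\Set} split, splittings are preserved by any functor, and the monic half is delegated to Lemma~\ref{lemma:monads-preserve-monics}, after which uniqueness of the (surjection, injection) factorisation in {\Set} finishes part~(2). Your explicit treatment of the empty-codomain case is harmless but not actually needed, since $\id_\emptyset$ is its own section and so the splitting argument already covers it.
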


\begin{proof}
	\begin{enumerate}
		\item Any surjection in {\Set} is split. The splitting is preserved by functors, and hence surjections are preserved by all functors. 
		\item By Lemma \ref{lemma:monads-preserve-monics}, $M$ preserves both surjections and monics, hence it preserves image factorisations. \qed
	\end{enumerate}
	\renewcommand{\endproof}{}
\end{proof}

Given any monad $M$ 
on {\Set}, $MA\times MB$ is automatically an $M$-algebra with operation 
$\langle \mu_A\circ (M\pi_{MA}), \mu_B\circ (M\pi_{MB}) \rangle: M(MA\times MB) \longrightarrow MA\times MB$. 
Moreover, $\overline{M R}$ is also an $M$-algebra. 

\begin{lemma}
	$\overline{MR}$ is the smallest $M$ sub-algebra of $MA\times MB$ containing the image of $R$. 
\end{lemma}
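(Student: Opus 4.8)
The plan is to exploit that $MR$ is the free $M$-algebra on $R$ and that every free algebra is generated, as an $M$-algebra, by the image of its unit. Throughout, write $\iota\colon R\hookrightarrow A\times B$ for the inclusion and set
\[
h \;=\; \langle M\pi_A, M\pi_B\rangle \circ M\iota \colon MR \longrightarrow MA\times MB,
\]
so that $\overline{MR}$ is by construction the set-theoretic image of $h$.

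First I would check that $h$ is an $M$-algebra homomorphism. The maps $M\iota$, $M\pi_A$ and $M\pi_B$ are $M$ applied to functions, hence homomorphisms between free algebras. Since the forgetful functor creates products, $MA\times MB$ equipped with the structure map $\langle \mu_A\circ M\pi_{MA}, \mu_B\circ M\pi_{MB}\rangle$ is precisely the product of $MA$ and $MB$ in the Eilenberg--Moore category $\Set^M$; thus $\langle M\pi_A, M\pi_B\rangle$ is the homomorphism into that product induced by the two projections, and $h$ is a composite of homomorphisms. In particular $\overline{MR}=\operatorname{im}(h)$ is the $M$-subalgebra of $MA\times MB$ determined by this homomorphism, as already noted before the statement.

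Next I would evaluate $h$ on generators. For $(a,b)\in R$, naturality of the unit gives $M\iota(\eta_R(a,b))=\eta_{A\times B}(a,b)$ and $M\pi_A\circ\eta_{A\times B}=\eta_A\circ\pi_A$ (and likewise for $B$), whence
\[
h(\eta_R(a,b)) \;=\; \bigl(\eta_A a,\ \eta_B b\bigr).
\]
Therefore $h(\eta_R(R))$ is exactly the image of $R$ inside $MA\times MB$, which shows that $\overline{MR}$ contains the image of $R$.

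Finally, for minimality I would invoke the standard generation fact that the smallest $M$-subalgebra of $MR$ containing $\eta_R(R)$ is all of $MR$: by freeness the function $R\to\langle\eta_R(R)\rangle$, $r\mapsto\eta_R(r)$, extends to a homomorphism $MR\to\langle\eta_R(R)\rangle$, and composing with the inclusion gives a homomorphism $MR\to MR$ that is the identity on generators, hence the identity. Since $h$ is a homomorphism, its image is then the subalgebra generated by $h(\eta_R(R))$, i.e.\ by the image of $R$; thus $\overline{MR}$ is exactly the subalgebra generated by the image of $R$, which is the smallest $M$-subalgebra containing it. (Equivalently: given any subalgebra $C\subseteq MA\times MB$ containing the image of $R$, the function $(a,b)\mapsto(\eta_A a,\eta_B b)$ into $C$ extends by freeness to a homomorphism $\tilde g\colon MR\to C$; as $\tilde g$ and $h$ agree on the generators $\eta_R(R)$, they agree after composing with the inclusion $C\hookrightarrow MA\times MB$, so $\overline{MR}=\operatorname{im}(h)\subseteq C$.) The only step requiring care is verifying that $\langle M\pi_A, M\pi_B\rangle$ respects the product algebra structure, a routine naturality calculation; the substance of the argument is the observation that free algebras are generated by their unit image, which turns the homomorphic image into the generated subalgebra.
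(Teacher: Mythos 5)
Your proof is correct and takes essentially the same route as the paper: $\overline{MR}$ is the image of the free algebra $MR$ under the homomorphism $\langle M\pi_A,M\pi_B\rangle$ into the product algebra $MA\times MB$, and is therefore an $M$-subalgebra --- the paper obtains the structure map explicitly as the diagonal fill-in of the image factorisation preserved by $M$, which is exactly the content of your appeal to homomorphic images being subalgebras. The minimality step, which the paper dismisses as immediate, you spell out correctly via the standard facts that $MR$ is generated by $\eta_R(R)$ and that two homomorphisms out of a free algebra agreeing on the unit's image coincide.
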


\begin{proof}
	This follows immediately from the fact that $\overline{M R}$ is an $M$ sub-algebra of 
	$MA\times MB$.
	\[
	\begin{tikzcd}
	M(M R)
	\ar[r, two heads]
	\ar[d, "\mu_R"] &
	M(\overline{M R})
	\ar[r, tail]
	\ar[d, dashed] &
	M(MA\times MB)
	\ar[d, "\langle \mu_A\circ (M\pi_{MA}){,}\mu_B\circ (M\pi_{MB}) \rangle" ] \\
	M R
	\ar[r, two heads] &
	\overline{M R}
	\ar[r, tail] &
	MA\times MB
	\end{tikzcd}
	\]
	In the diagram above, the bottom horizontal composite is $\langle M\pi_A,M\pi_B\rangle$, and the top 
	composite is $M$ applied to this. By Lemma \ref{lemma:monads-pres-fact}, $M$ preserves the image 
	factorization in the bottom composite. It is easy to see that the outer rectangle commutes. It follows 
	that there is a unique map across the centre making both squares commute, and hence that 
	$\overline{MR}$ is an $M$ sub-algebra of $MA\times MB$. 
\end{proof}

The immediate consequence of this is that $\overline{MR}$ is the free $M$ 
algebra on $R$ in {\Rel} and hence the two constructions by free algebra, 
and by direct image coincide in the case of monads on {\Set}.

\section{Monoids}\label{sec:monoids}

Bisimulation is only one of the early characterisations of equivalence for labelled transition systems. 
Another was trace equivalence. That talks overtly about possible sequences 
of actions in a way that bisimulation 
does not. However the sequences are buried in the recursive nature of the 
definition. 

We extend our notion of transition from $A$ to $A\str$, in the usual way. The following is a simple 
induction: 

\begin{lemma}
	If $S$ and $T$ are two labelled transition systems, then $R\subseteq S\times T$ is a bisimulation if and only if for 
	all $w\in A\str$, whenever $sRt$
	\begin{itemize}
		\item[-] for all $\trans sw{s'}$, there is $t'$ such that $\trans tw{t'}$ and $s'Rt'$ 
		\item[-] and for all $\trans tw{t'}$, there is $s'$ such that $\trans sw{s'}$ and $s'Rt'$.
	\end{itemize}
\end{lemma}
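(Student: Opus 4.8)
The plan is to dispatch one direction immediately and prove the other by induction on the length of the word. First I would fix the standard extension of transitions to $A\str$: for $w = a_1\cdots a_n$, $\trans sw{s'}$ holds precisely when there are states $s = s_0, \ldots, s_n = s'$ with $\trans{s_{i-1}}{a_i}{s_i}$ for each $i$, and for the empty word $\trans s\epsilon{s'}$ holds iff $s = s'$.

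The ``if'' direction is trivial: every label $a\in A$ is a word of length one, so specialising the displayed condition to words of length one recovers exactly the two defining clauses of strong bisimulation.

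For the ``only if'' direction I would assume $R$ is a bisimulation and induct on $|w|$. In the base case $w = \epsilon$, I would take $t' = t$, so that $\trans t\epsilon{t}$ holds and $s' = s$ is related to $t$ by hypothesis. For the inductive step I would write $w = av$ with $a\in A$ and $|v| = |w| - 1$, and, given $sRt$ and $\trans sw{s'}$, extract an intermediate state $s''$ with $\trans sa{s''}$ and $\trans{s''}v{s'}$. Applying the single-step bisimulation clause to $\trans sa{s''}$ produces $t''$ with $\trans ta{t''}$ and $s''Rt''$; feeding $s''Rt''$ and $\trans{s''}v{s'}$ into the induction hypothesis produces $t'$ with $\trans{t''}v{t'}$ and $s'Rt'$. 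Concatenating the two transitions gives $\trans tw{t'}$ with $s'Rt'$, as required.

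The second clause is handled by the symmetric argument with the roles of $S$ and $T$ exchanged. I do not expect any genuine obstacle here, since the statement is a routine induction. The only points needing care are treating the empty word as a reflexive (identity) transition in the base case, and splitting $w$ into a leading action and a tail so that the single-step hypothesis and the induction hypothesis compose in the right order; both clauses must be carried through the induction together, though each step invokes only the matching clause, so they do not interfere.
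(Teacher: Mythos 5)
Your proposal is correct and matches the paper's intent exactly: the paper offers no proof beyond the remark that the lemma ``is a simple induction,'' and your argument---one direction by specialising to words of length one, the other by induction on $|w|$ with the empty word handled as the identity transition and the inductive step splitting $w$ into a head action and a tail---is precisely that routine induction.
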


In other words, we could have used sequences instead of single actions, and we would have got the 
same notion of bisimulation (but we would have had to work harder to use it). 

Another way of looking at this is to observe that the set of transition systems on $S$, 
$[S\rightarrow \pow S]$, carries a monoid structure. One way of seeing that is to note that 
$[S\rightarrow \pow S]$ is equivalent to the set of $\Union$-preserving endofunctions on $\pow S$. 
Another is that it is the set of endofunctions on $S$ in the Kleisli category for $\pow$. 

More concretely, the unit of the monoid is $\id = \eta = \lambda s. \{s\}$, and the product is got from 
collection, $f_0\cdot f_1 = \lambda s. \Union_{s'\in f_0(s)} f_1(s')$. 

Unsurprisingly, since this structure is essentially obtained from the monad, for any 
$R\subseteq S\times T$, $[R\rightarrow \pow R]$ also carries the structure of a monoid, and the 
projections to $[S\rightarrow \pow S]$ and $[T\rightarrow \pow T]$ are monoid homomorphisms. This 
means that we could characterise strong bisimulations as relations $R$ for which the monoid 
homomorphisms giving the transition systems lift to a monoid homomorphism into the relation. 

\section{Weak bisimulation}

The need for a different form of bisimulation arises when modelling processes. Processes can perform 
internal computations that do not correspond to actions that can be observed directly or synchronised 
with. In essence, 
the state of the system can evolve on its own. This is modelled by incorporating a silent $\tau$ action 
into the set of labels to represent this form of computation. Strong bisimulation is then too restrictive 
because it requires a close correspondence in the structure of the internal computations. 

In order to remedy this, Milner introduced a notion of ``weak'' bisimulation. We follow the account given 
in \cite{milner1989communication}, in which he refers to this notion just as ``bisimulation''. 

We write $A$ (this is Milner's {\it Act}), for the set of possible actions including $\tau$ and $L$ for the actions not including $\tau$. 
So $L = A - \{\tau\}$ and $A = L + \{\tau\}$. If $w \in A\str$, then we write $\hat{w}$ for the sequence 
obtained from $w$ by deleting all occurrences of $\tau$. So $\hat{w}\in L\str$. For example, 
if $w=\tau a_0 a_1 \tau \tau a_0 \tau$, then $\hat{w} = a_0 a_1 a_0$, and if $w' = \tau\tau\tau$, then 
$\hat{w'} = \epsilon$, the empty string. 

%In \cite{milner1989communication} ) this account is in 2.4, while the 
%definition of (weak) bisimulation is in 5.1:

\begin{definition}(\cite{milner1989communication})
	Let $S$ be a labelled transition system for $A = L+\{\tau\}$, and $v\in L\str$, then 
	\[ \mbox{$\Trans sv{s'}$ iff there is a $w\in A\str = (L+\{\tau\})\str$ such that $v=\hat{w}$ and $\trans sw{s'}$.} \]
	We can type $\Trans{}{}{}$ as $\Trans{}{}{}: [L\str \to [S\to\pow S]]$, and we refer to it as the 
	{\em system derived from $\trans{}{}{}$.}
\end{definition}

Observe that $\Trans s\epsilon {s'}$ corresponds to 
$\trans s{\tau\str}{s'}$. It follows that $\Trans{}{}{}$ is not quite a 
transition system in the sense previously defined. 
If $S$ is a labelled transition system for $A$, then the extension of $\trans{}{}{}$ to 
$A\str$ gives a monoid homomorphism $A\str \longrightarrow [S\to\pow S]$. However $\Trans{}{}{}$ preserves 
composition but not the identity. We have therefore only a 
semigroup homomorphism 
$L\str \longrightarrow [S\to\pow S]$. This prompts the definition of a lax labelled transition system 
(Definition \ref{def:lax-transition-system}).

We now return to the classical definition of weak bisimulation from \cite{milner1989communication}. 

\begin{definition} %[Weak bisimulation] 
	If $S$ and $T$ are two labelled transition systems for 
	$A = L+\{\tau\}$, then a relation 
	$R\subseteq S\times T$ is a {\em weak bisimulation} iff for all $a\in A = L+\{\tau\}$, whenever $sRt$
	\begin{itemize}
		\item[-] for all $\trans sa{s'}$, there is $t'$ such that $\Trans t{a}{t'}$ and $s'Rt'$ 
		\item[-] and for all $\trans ta{t'}$, there is $s'$ such that $\Trans s{a}{s'}$ and $s'Rt'$.
	\end{itemize}
\end{definition}

The combination of two different transition relations in this definition is ugly, but fortunately it is well known that we can clean it up by just using the derived relation. 

\begin{lemma}\label{lemma:weak-derived}
	$R$ is a weak bisimulation iff for all $a\in A = L+\{\tau\}$, whenever $sRt$
	\begin{itemize}
		\item[-] for all $\Trans s{\overline{a}}{s'}$, there is $t'$ such that $\Trans t{\overline{a}}{t'}$ and $s'Rt'$ 
		\item[-] and for all $\Trans t{\overline{a}}{t'}$, there is $s'$ such that $\Trans s{\overline{a}}{s'}$ and $s'Rt'$
	\end{itemize}
	where for $x \in L$, $\overline x$ is ``$x$'' seen as a one-letter word, and for $x=\tau$, $\overline x = \epsilon$.
\end{lemma}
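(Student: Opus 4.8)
The plan is to prove the two implications separately, resting on two elementary properties of the derived relation $\Trans{}{}{}$. First, every single step is a derived step: if $\trans sa{s'}$ then, taking $w=a\in A\str$, we have $\hat w=\overline a$ and $\trans sw{s'}$, whence $\Trans s{\overline a}{s'}$. Second, $\Trans{}{}{}$ composes along concatenation: if $\Trans s{u}{s''}$ and $\Trans{s''}{v}{s'}$ with $u,v\in L\str$, then $\Trans s{uv}{s'}$. This is because the extension of $\trans{}{}{}$ to $A\str$ preserves composition (as already observed, $\Trans{}{}{}$ is a semigroup homomorphism) and the $\tau$-deletion map $w\mapsto\hat w$ preserves concatenation. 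I shall also use that $\Trans s{\epsilon}{s}$ always holds, realised by the empty word.

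For the direction from the two-sided condition of the lemma to weak bisimulation, the argument is essentially immediate. Given $sRt$ and a single step $\trans sa{s'}$, the first property recasts it as a derived challenge $\Trans s{\overline a}{s'}$; the assumed condition then supplies a $t'$ with $\Trans t{\overline a}{t'}$ and $s'Rt'$, which is exactly the response $\Trans t{a}{t'}$ required by the definition of weak bisimulation. The second clause is handled symmetrically.

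The substantive direction is the converse: assuming $R$ is a weak bisimulation, I must upgrade its single-step challenges to derived ones. Suppose $sRt$ and $\Trans s{\overline a}{s'}$, witnessed by a word $w=b_1b_2\cdots b_k\in A\str$ with $\hat w=\overline a$ together with a path of single steps $s=s_0 \trans{}{b_1}{} s_1 \trans{}{b_2}{}\cdots\trans{}{b_k}{} s_k=s'$. I would construct a matching path on the $T$-side by induction on $k$: starting from $t_0=t$ with $s_0Rt_0$, at each stage I apply the weak-bisimulation property at the pair $(s_i,t_i)$ to the single step $\trans{s_i}{b_{i+1}}{s_{i+1}}$ — this is legitimate precisely because the definition poses its challenges as single steps, and $b_{i+1}$ is allowed to be $\tau$ — obtaining $t_{i+1}$ with $\Trans{t_i}{\overline{b_{i+1}}}{t_{i+1}}$ and $s_{i+1}Rt_{i+1}$. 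Composing these $k$ derived steps by the second property yields a single derived transition from $t$ to $t_k$ whose label is the concatenation $\overline{b_1}\,\overline{b_2}\cdots\overline{b_k}$; since $\tau$-deletion is multiplicative this concatenation reduces to $\hat w=\overline a$, so $\Trans t{\overline a}{t_k}$ holds with $t'=t_k$ and $s'Rt'$. The empty path ($k=0$, forcing $a=\tau$ and $s'=s$) is covered by reflexivity $\Trans t{\epsilon}{t}$, and the second clause of the lemma follows by the same argument with the roles of $S$ and $T$ exchanged.

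I expect the only genuine obstacle to be the bookkeeping inside this inductive simulation: tracking how each visible $b_i$ contributes $b_i$ while each $\tau$ contributes $\epsilon$ to the normalised trace, and correctly invoking composition of $\Trans{}{}{}$ to glue the individual weak responses into one derived transition with label $\overline a$. Once the composition property and the ``single step is a derived step'' observation are in place, everything else is a direct unfolding of the definitions.
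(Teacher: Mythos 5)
Your argument is correct. The paper does not actually prove this lemma --- it is asserted as the well-known ``cleaning up'' of Milner's definition --- so there is no in-paper proof to compare against; what you give is the standard argument that fills that gap. The easy direction (a single step $\trans{s}{a}{s'}$ is itself a derived step $\Trans{s}{\overline a}{s'}$, so the derived-form condition immediately implies weak bisimulation) and the substantive converse (induct along a witnessing path $s_0\trans{}{b_1}{}\cdots\trans{}{b_k}{}s_k$, maintain $s_iRt_i$, match each single step $b_{i+1}$ by a derived step on the $T$-side, and glue the responses using the fact that $\Trans{}{}{}$ is multiplicative under concatenation while $\tau$-deletion is a monoid homomorphism) are both sound, and you correctly handle the degenerate case $k=0$, where $a=\tau$ and the zero-length derived transition $\Trans{t}{\epsilon}{t}$ is needed. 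The only point worth tightening in a written-out version is the notation: for $a=\tau$ the label $\overline a$ must be read as $\epsilon$ (since $\hat w\in L\str$ never contains $\tau$), a convention the lemma statement itself leaves implicit; your phrase ``reduces to $\hat w=\overline a$'' silently makes that identification, which is harmless but should be stated once.
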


We can now extend as before to words in $L\str$. 

\begin{lemma} 
	$R$ is a weak bisimulation iff for all $v\in L\str$, whenever $sRt$
	\begin{itemize}
		\item[-] for all $\Trans sv{s'}$, there is $t'$ such that $\Trans tv{t'}$ and $s'Rt'$ 
		\item[-] and for all $\Trans tv{t'}$, there is $s'$ such that $\Trans sv{s'}$ and $s'Rt'$.
	\end{itemize}
\end{lemma}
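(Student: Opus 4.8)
The plan is to prove the two implications of the ``iff'' separately, using Lemma~\ref{lemma:weak-derived} for the atomic case and the fact, observed just above, that the derived relation $\Trans{}{}{}$ is a semigroup homomorphism $L\str\to[S\to\pow S]$. Concretely, this homomorphism property says that concatenation of words corresponds to sequential composition of derived transitions: for $v_1,v_2\in L\str$ we have $\Trans s{v_1v_2}{s'}$ iff there is an intermediate state $s''$ with $\Trans s{v_1}{s''}$ and $\Trans{s''}{v_2}{s'}$. This is the only structural fact about $\Trans{}{}{}$ that the argument needs.

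The implication from the word-indexed condition to weak bisimulation is immediate: if the displayed zig-zag property holds for every $v\in L\str$, then it holds in particular for all $v$ of length at most one, and these instances are exactly those of Lemma~\ref{lemma:weak-derived} (the one-letter words coming from $a\in L$, and the empty word $\epsilon$ coming from $a=\tau$, since $\hat\tau=\epsilon$). Hence $R$ is a weak bisimulation.

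For the converse I would induct on the length of $v$. When $|v|\le 1$ the required property is delivered directly by Lemma~\ref{lemma:weak-derived}, $R$ being a weak bisimulation by hypothesis; this also covers the empty word. For the inductive step, write $v=v'a$ with $a\in L$ and $|v'|=|v|-1$, assume $sRt$, and suppose $\Trans sv{s'}$. By compositionality there is $s''$ with $\Trans s{v'}{s''}$ and $\Trans{s''}a{s'}$. The induction hypothesis applied to $v'$ and the pair $sRt$ yields $t''$ with $\Trans t{v'}{t''}$ and $s''Rt''$, and then the one-letter clause of Lemma~\ref{lemma:weak-derived}, applied to the related pair $s''Rt''$ and the transition $\Trans{s''}a{s'}$, yields $t'$ with $\Trans{t''}a{t'}$ and $s'Rt'$. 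Recomposing through the semigroup property gives $\Trans tv{t'}$ with $s'Rt'$, as required; the symmetric clause is obtained by exchanging the roles of $S$ and $T$.

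The argument is routine, and the only point that needs care is the bookkeeping in the inductive step: the induction hypothesis and the atomic clause must be invoked at the \emph{intermediate} related pair $s''Rt''$, not at the original $s,t$, and the two matching pieces must then be glued back together using that $\Trans{}{}{}$ respects concatenation. I expect this gluing --- rather than any genuinely hard estimate --- to be the crux, together with the mild observation that the empty-word base case is already subsumed by the $a=\tau$ instance of Lemma~\ref{lemma:weak-derived}.
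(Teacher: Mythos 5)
Your proof is correct and follows exactly the route the paper intends: the paper states this lemma without proof, having flagged the analogous word-extension for strong bisimulation as ``a simple induction,'' and your induction on $|v|$ --- base case from Lemma~\ref{lemma:weak-derived} (with the empty word absorbing the $a=\tau$ instance) and inductive step glued at the intermediate related pair via the semigroup property of $\Trans{}{}{}$ --- is precisely that argument. Nothing is missing.
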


Note that we can restrict the underlying alphabet from $A = L+\{\tau\}$ 
to $L$ because $\epsilon\in L\str$ is playing the role of $\tau\in A$.

This now looks very similar to the situation for strong bisimulation. But as we have noted above, 
there is a difference. Previously 
our transition system was given by a monoid homomorphism $A\str \longrightarrow [S\rightarrow \pow S]$. 
Here the identity is not preserved and we only have a homomorphism of semi-groups. 

\begin{lemma}
	If $S$ is a labelled transition system for $A$, then for all $v_0,v_1 \in L\str$, 
	$\Trans {}{v_0 v_1}{} = \  \Trans{}{v_0}{}\cdot \Trans{}{v_1}{}$.
\end{lemma}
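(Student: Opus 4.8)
The plan is to show that the two functions in $[S\to\pow S]$ agree as elements of the monoid, i.e.\ pointwise on every state $s$. Unfolding the monoid product recalled above, $s'\in(\Trans{}{v_0}{}\cdot\Trans{}{v_1}{})(s)$ says exactly that there is an intermediate state $s''$ with $\Trans{s}{v_0}{s''}$ and $\Trans{s''}{v_1}{s'}$, whereas $s'\in\Trans{}{v_0 v_1}{}(s)$ says $\Trans{s}{v_0 v_1}{s'}$. So the whole lemma reduces to the biconditional
\[
\Trans{s}{v_0 v_1}{s'} \quad\Longleftrightarrow\quad \exists s''.\ \Trans{s}{v_0}{s''}\ \text{and}\ \Trans{s''}{v_1}{s'}.
\]

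I would use two ingredients. First, the $\tau$-deleting map $w\mapsto\hat{w}\colon A\str\to L\str$ is a monoid homomorphism, so $\widehat{w_0 w_1}=\hat{w_0}\,\hat{w_1}$; this is immediate, since deleting all $\tau$'s from a concatenation is the concatenation of the two deletions. Second, the extension of $\trans{}{}{}$ to $A\str$ is a monoid homomorphism into $[S\to\pow S]$ (noted earlier in the excerpt), so $\trans{s}{w_0 w_1}{s'}$ holds iff some $s''$ has $\trans{s}{w_0}{s''}$ and $\trans{s''}{w_1}{s'}$.

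For the right-to-left direction I take witnesses $w_0,w_1\in A\str$ with $\hat{w_0}=v_0$, $\trans{s}{w_0}{s''}$ and $\hat{w_1}=v_1$, $\trans{s''}{w_1}{s'}$, and set $w=w_0 w_1$. Then $\hat{w}=v_0 v_1$ by the first ingredient and $\trans{s}{w}{s'}$ by the second, so $\Trans{s}{v_0 v_1}{s'}$. For the left-to-right direction I start from a single $w\in A\str$ with $\hat{w}=v_0 v_1$ and $\trans{s}{w}{s'}$, and I factor $w=w_0 w_1$ by cutting it immediately after its $|v_0|$-th non-$\tau$ letter; then $\hat{w_0}=v_0$ and $\hat{w_1}=v_1$, and applying the second ingredient to $\trans{s}{w_0 w_1}{s'}$ produces the required intermediate $s''$, which re-reads as $\Trans{s}{v_0}{s''}$ and $\Trans{s''}{v_1}{s'}$.

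The only step that needs care is this factorisation in the left-to-right direction. It rests on the combinatorial fact that the $\tau$'s introduced when passing back from $v_0 v_1$ to a preimage $w$ may be distributed arbitrarily across the boundary: any $\tau$'s occurring between the last non-$\tau$ letter contributing to $v_0$ and the first contributing to $v_1$ can be assigned to either $w_0$ or $w_1$. This also covers the degenerate cases $v_0=\epsilon$ or $v_1=\epsilon$, where $w_0$ (resp.\ $w_1$) is taken to be a possibly nonempty block of $\tau$'s. Since every such choice yields a valid factorisation, the witness always exists, and the biconditional --- hence the lemma --- follows.
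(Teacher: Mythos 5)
Your proof is correct: it unfolds the Kleisli/monoid product pointwise, uses that $w\mapsto\hat w$ and the extension of $\trans{}{}{}$ to $A\str$ are monoid homomorphisms, and handles the only delicate point (splitting a witness $w$ with $\hat w=v_0v_1$ at the $|v_0|$-th non-$\tau$ letter, including the empty-word cases) explicitly. The paper states this lemma without proof, treating it as routine, and your argument is exactly the standard one it implicitly relies on.
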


In the following sections we present different approaches to understanding weak transition 
systems.

\section{Weak bisimulation through saturation}\label{sec:weak bisim through saturation}

For this section we enrich our setting. For any $S$, $\pow S$ has a natural partial order, and hence so 
do the transition systems on any set $S$, given by the inherited partial order on 
$A \to [S\to \pow S]$. 

\begin{definition}%[partial orders of labelled transition systems (with internal actions)]{\ }
	%\begin{enumerate}
		 Given transition systems $F: A \longrightarrow [S\to\pow S]$ and $G: A \longrightarrow [T\to\pow T]$, we say that $F\leq G$ 
		iff $S=T$ and $\forall a\in A. \forall s\in S. Fas \leq Gas$. This gives a partial order {\Atrans} that we can 
		view as a category.  
		
		 If $A=L+\{\tau\}$, where $\tau$ is an internal (silent) action, then we shall refer to these as {\em labelled transition systems with internal action} and write the partial order as {\Ltrans}.
%	\end{enumerate}
\end{definition}

The notion of weak bisimulation applies to transition systems with internal action, while strong 
bisimulation applies to arbitrary transition systems. Our aim is to find a systematic way of deriving the notion of weak bisimulation from strong. 

In the following definition we make use of the fact that $[S\to \pow S]$ is a monoid, as noted in section 
\ref{sec:monoids}.

\begin{definition}%[saturated transition system]
	\label{def:saturated}
	Let $F: (L+\{\tau\}) \longrightarrow [S\to \pow S]$ be a transition system with internal action. We say that $F$ is 
	{\em saturated} if 
	\begin{enumerate}
		\item\label{sat:one} $\id \leq F(\tau)$ and $F(\tau).F(\tau) \leq F(\tau)$ and
		\item\label{sat:two} for all $a\in L$, $F(\tau).F(a).F(\tau) \leq F(a)$
	\end{enumerate}
	We write {\Lsatts} for the full subcategory of saturated transition systems
	with internal actions.
\end{definition}

These conditions are purely algebraic, and so can easily be interpreted in more general settings than 
{\Set}. 

Note that some of the inequalities are, in fact, equalities: 
\[
F(\tau) = F(\tau) . \id \leq F(\tau).F(\tau) \leq F(\tau)
\]
hence $F(\tau).F(\tau) = F(\tau)$. 
Similarly $F(a) = \id.F(a).\id \leq F(\tau).F(a).F(\tau) \leq F(a)$, therefore $F(\tau).F(a).F(\tau)=F(a)$. 

Moreover, if we look at the partial order consisting of unlabelled transition systems on a set $S$, then 
the fact that the monoid multiplication preserves the partial order means that 
$([S\to \pow S], . , \id)$ is a monoidal category. Condition \ref{def:saturated}.\ref{sat:one} says precisely 
that $F(\tau)$ is a monoid in this monoidal category, and condition \ref{def:saturated}.\ref{sat:two} 
that $F(a)$ is an $(F(\tau),F(\tau))$-bimodule. 

The notions of weak and strong bisimulation coincide for saturated transition systems. 

\begin{proposition}
	Suppose $F\colon (L+\{\tau\})\longrightarrow [S\to\pow S]$ and $G \colon (L+\{\tau\})\longrightarrow[T\to\pow T]$ are saturated transition systems with internal actions, then $R\subseteq S\times T$ is a weak bisimulation between the systems if and only if it is a strong bisimulation between them. 
\end{proposition}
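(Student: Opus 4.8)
The plan is to reduce the statement to a single algebraic identity about the derived transition of a saturated system: for a saturated $F$ one has $\Trans{}{\overline{a}}{} = F(a)$ for every $a\in A$ (reading $\Trans{}{\overline{\tau}}{}$ as the empty-word transition $\Trans{}{\epsilon}{}$), and likewise for $G$. Granting this, the characterisation of weak bisimulation in Lemma \ref{lemma:weak-derived} --- in which both clauses are phrased using $\Trans{}{\overline{a}}{}$ --- becomes syntactically identical to the definition of strong bisimulation, which uses $\trans{}{a}{}=F(a)$ on the one side and $G(a)$ on the other. Hence $R$ is a weak bisimulation iff it is a strong one. Note that the implication strong $\Rightarrow$ weak needs no hypothesis on the systems, since $F(a)\leq \Trans{}{\overline{a}}{}$ always; saturation is exactly what forces the reverse inequality, and so the content lies entirely in the nontrivial direction.

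To prove the identity I would work in the monoid $[S\to\pow S]$ of Section \ref{sec:monoids}, writing $F(w)$ for the image of $w\in A\str$ under the monoid homomorphism extending $F$. Unwinding the definition of $\Trans{}{}{}$, for $v\in L\str$ we have $\Trans{}{v}{} = \bigvee\{\,F(w) : \hat{w} = v\,\}$, the join being pointwise union in $[S\to\pow S]$. The one preliminary fact to record is that the Kleisli multiplication $f_0\cdot f_1 = \lambda s.\,\Union_{s'\in f_0(s)}f_1(s')$ is bilinear with respect to these joins; this is immediate from the explicit formula, since union commutes with union on either side, and it lets me factor joins through products.

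With this in hand the computation is short. The words $w$ with $\hat{w}=\epsilon$ are exactly the powers $\tau^n$ ($n\geq 0$), so $\Trans{}{\epsilon}{} = \bigvee_{n\geq 0}F(\tau)^n$; condition \ref{def:saturated}.\ref{sat:one} ($\id\leq F(\tau)$ and $F(\tau).F(\tau)\leq F(\tau)$) gives $F(\tau)^n\leq F(\tau)$ for all $n$ by induction, whence $\Trans{}{\epsilon}{}=F(\tau)$, which also settles the case $a=\tau$. For $a\in L$ the words reducing to $a$ are exactly the $\tau^i a\,\tau^j$, so bilinearity yields
\[
\Trans{}{\overline{a}}{} \;=\; \Big(\bigvee_i F(\tau)^i\Big)\cdot F(a)\cdot\Big(\bigvee_j F(\tau)^j\Big) \;=\; F(\tau).F(a).F(\tau),
\]
and condition \ref{def:saturated}.\ref{sat:two}, together with the equality $F(\tau).F(a).F(\tau)=F(a)$ already recorded after Definition \ref{def:saturated}, gives $\Trans{}{\overline{a}}{}=F(a)$. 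The same computation applies verbatim to $G$.

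The only genuinely delicate point is the bookkeeping around the empty word and the $\tau$ action: one must be careful that the notation $\Trans{}{\overline{a}}{}$ in Lemma \ref{lemma:weak-derived} really means the $\hat{}$-reduced derived transition, so that $\overline{\tau}$ collapses to $\epsilon$, and that the join characterisation of $\Trans{}{}{}$ is matched correctly against the monoid powers of $F(\tau)$. Everything else is a routine manipulation of joins in the monoid, driven entirely by the two saturation inequalities.
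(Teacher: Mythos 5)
Your proof is correct, and its mathematical heart is the same as the paper's --- saturation forces $(\trans{}{\tau}{})\str\,\trans{}{a}{}\,(\trans{}{\tau}{})\str\subseteq\trans{}{a}{}$ and $(\trans{}{\tau}{})\str\subseteq\trans{}{\tau}{}$ --- but the packaging is genuinely different. The paper dismisses strong~$\Rightarrow$~weak as immediate and, for the converse, chases a single weak-bisimulation witness $\Trans{t}{a}{t'}$ through a two-case analysis ($a\neq\tau$ versus $a=\tau$) to extract a strong witness $\trans{t}{a}{t'}$, working directly with the asymmetric original definition of weak bisimulation. You instead prove the monoid identity $\Trans{}{\overline{a}}{}=F(a)$ outright, via the join description of the derived system and bilinearity of the Kleisli composition over unions, after which Lemma~\ref{lemma:weak-derived} makes the two bisimulation conditions literally coincide and both directions fall out symmetrically. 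Your route buys a reusable and slightly stronger fact --- it is precisely the statement that a saturated system equals its own saturation, $\sat{F}=F$, which the paper only uses implicitly in the subsequent propositions --- at the cost of a dependency on Lemma~\ref{lemma:weak-derived} that the paper's proof does not need (and which you could also avoid: substituting $\Trans{}{\overline{a}}{}=G(a)$ into the original asymmetric definition already yields the strong condition directly). Your flagged bookkeeping point, that $\overline{\tau}$ must be read as collapsing to $\epsilon$ so that $\Trans{}{\overline{\tau}}{}=\bigvee_{n\geq 0}F(\tau)^n=F(\tau)$, is handled correctly and agrees with the paper's own reading in its $a=\tau$ case.
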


\begin{proof}
	In one direction, any strong bisimulation is also a weak one. In the other, suppose $R$ is a weak bisimulation, that $sRt$, and that $\trans{s}{a}{s'}$. Then by definition of weak bisimulation there is 
	$\Trans{t}{a}{t'}$ where $s'Rt'$. We show that $\trans{t}{a}{t'}$. There are two cases: 
	\begin{itemize}
		\item[] $a\neq\tau$: Then, by definition of $\Trans{}a{}$, we have 
		$t (\trans{}\tau{})\str \trans{}a{}  (\trans{}\tau{})\str t'$. But since $F$ is saturated, this implies 
		$\trans{t}{a}{t'}$ as required. 
		\item[] $a=\tau$: Then $t (\trans{}\tau{})\str t'$, and again since $F$ is saturated, this implies 
		$\trans{t}{\tau}{t'}$. 
	\end{itemize}
	Hence we have $\trans{t}{a}{t'}$ and $tRt'$. The symmetric case is identical, so $R$ is a strong 
	bisimulation. 
\end{proof}

Given any transition system with internal action, there is a least saturated transition system containing 
it. 

\begin{proposition}
	The inclusion ${\Lsatts}\hookrightarrow{\Ltrans}$ has a reflection: $\sat{(\cdot)}$. 
\end{proposition}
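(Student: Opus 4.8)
The plan is to exhibit the reflection as a left adjoint to the inclusion, which for posets amounts to sending each transition system with internal action $F$ to the least saturated system lying above it. First I would note that ${\Ltrans}$ decomposes as a disjoint union of posets indexed by the state set $S$, since a comparison $F\leq G$ requires $S=T$; the inclusion respects this indexing, so it suffices to construct the reflection separately within each fibre. Fixing $S$, the poset of systems $A\to[S\to\pow S]$ is a complete lattice: meets and joins are computed argumentwise and then pointwise, a meet $\bigwedge_i F_i$ being the system $\lambda a\,s.\ \bigcap_i F_i(a)(s)$. Defining $\sat{F}$ as the meet of all saturated systems above $F$ will then give the required reflection, provided this meet is itself saturated.

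The crux is therefore the claim that saturated systems (on a fixed $S$) are closed under arbitrary meets. I would prove this using only the monotonicity of the monoid multiplication $.$ recorded in Section~\ref{sec:monoids}, together with the observation that each saturation inequality in Definition~\ref{def:saturated} carries, on its right-hand side, a single component of $F$. Concretely, let $\{F_i\}$ be saturated and put $F=\bigwedge_i F_i$, so that $F(a)\leq F_j(a)$ for every $j$ and $a$. For condition~\ref{sat:one}, $\id\leq F_i(\tau)$ for all $i$ gives $\id\leq\bigcap_i F_i(\tau)=F(\tau)$, and monotonicity gives $F(\tau).F(\tau)\leq F_j(\tau).F_j(\tau)\leq F_j(\tau)$ for each $j$, whence $F(\tau).F(\tau)\leq\bigcap_j F_j(\tau)=F(\tau)$. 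Condition~\ref{sat:two} is identical: $F(\tau).F(a).F(\tau)\leq F_j(\tau).F_j(a).F_j(\tau)\leq F_j(a)$ for every $j$, so the left-hand side is below $\bigcap_j F_j(a)=F(a)$. Hence $F$ is saturated.

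With meet-closure in hand the reflection follows formally. The meet over the empty family is the top system $\top$ with $\top(a)(s)=S$, for which every saturation inequality holds trivially (its right-hand side is $\top$), so the collection of saturated systems above any $F$ is nonempty; being closed under meets, it has a least element, and I take $\sat{F}$ to be that element. Then $F\leq\sat{F}$ and $\sat{F}$ is the smallest saturated system dominating $F$, which is exactly the universal property of a reflection into ${\Lsatts}$. The one point requiring care --- and the only place the naive argument breaks --- is that the multiplication $.$ does \emph{not} preserve meets (a union of intersections is in general strictly below the intersection of unions), so one cannot distribute the meet through the defining inequalities; what saves the argument is that monotonicity alone suffices, precisely because the right-hand sides are components of $F$ and the meet over $i$ of those right-hand sides reconstitutes $F$ exactly. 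An explicit alternative would construct $\sat{F}$ as the least fixed point of the evident closure operator that repeatedly enlarges $F(\tau)$ and each $F(a)$ to force the three inequalities, but the meet-closure argument is the most economical route to existence.
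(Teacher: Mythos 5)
Your proof is correct, but it takes a genuinely different route from the paper's. You obtain the reflection abstractly: each fibre of ${\Ltrans}$ over a fixed state space is a complete lattice, the saturated systems are closed under arbitrary meets, and $\sat{F}$ is the meet of all saturated systems above $F$; the key observation, which you rightly flag, is that monotonicity of the Kleisli composition suffices because each inequality of Definition~\ref{def:saturated} has a single component of $F$ on its right-hand side, so one never needs the (false) preservation of meets by the multiplication. The paper instead constructs $\sat{F}$ explicitly: saturation says that $F(\tau)$ is a monoid and each $F(a)$ an $(F(\tau),F(\tau))$-bimodule in the monoidal poset $[S\to\pow S]$, so one takes the free monoid and free bimodule, concretely $\sat{F}(\tau)=F(\tau)\str$ and $\sat{F}(a)=F(\tau)\str.F(a).F(\tau)\str$. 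Your argument is more economical for bare existence and works in any complete lattice carrying a monotone monoidal structure, whereas the paper's free construction is the one it claims transports to settings other than {\Set}. The real cost of your route is that it yields no explicit description of $\sat{F}$, and the very next proposition depends on the concrete formula to identify the transition relation of $\sat{F}$ with the derived relation $\Trans{}{}{}$ and thereby reduce weak bisimulation to strong bisimulation of saturations. If you keep the meet construction, you should add the short lemma that the least saturated system above $F$ is given by that formula: check that the formula defines a saturated system above $F$, and that any saturated $G\geq F$ satisfies $G(\tau)\geq F(\tau)\str$ and $G(a)\geq F(\tau)\str.F(a).F(\tau)\str$, using monotonicity together with the saturation inequalities for $G$.
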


\begin{proof}
	Suppose $F:(L+\{\tau\}) \longrightarrow [S\to\pow S]$ is a transition system with internal action. 
	Then $F$ is saturated 
	if and only if $F(\tau)$ is a monoid, and $F(a)$ is an $(F(\tau),F(\tau))$-bimodule. 
	So we construct the adjoint by 
	taking $\sat{F}(\tau)$ to be the free monoid on $F(\tau)$ and each $\sat{F}(a)$ to be the free 
	$(\sat{F}(\tau),\sat{F}(\tau))$-bimodule on $F(a)$. This construction works in settings other than {\Set}, but in {\Set} we can give a concrete construction: 
	\begin{itemize}
		\item[] $\sat{F}(\tau) = F(\tau)\str$
		\item[] $\sat{F}(a) = \sat{F}(\tau).F(a).\sat{F}(\tau)$ ($a\neq\tau$) \qed
	\end{itemize}
	\renewcommand{\endproof}{}
\end{proof}

\begin{proposition}
	Suppose $F:(L+\{\tau\}) \longrightarrow [S\to\pow S]$ and $G:(L+\{\tau\})\longrightarrow [T\to\pow T]$ are transition systems 
	with internal actions (not necessarily saturated), then $R\subseteq S\times T$ is a weak bisimulation 
	between $F$ and $G$ if and only if it is a strong bisimulation between $\sat{F}$ and $\sat{G}$. 
\end{proposition}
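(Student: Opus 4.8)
The plan is to show that saturation turns the \emph{weak} transition relation of $F$ into the \emph{strong} transition relation of $\sat F$: concretely, that for every label $a$ the strong transition relation of $\sat F$ at $a$ coincides with the derived weak transition relation $\Trans{}{\overline a}{}$ of $F$, and likewise for $G$. Granting this bridge, the proposition falls straight out of Lemma~\ref{lemma:weak-derived}, which already re-expresses ``$R$ is a weak bisimulation of $F$ and $G$'' purely in terms of the relations $\Trans{}{\overline a}{}$ for $a\in A$.

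First I would establish, for every $a\in A$ and all $s,s'$, the equivalence
\[ s'\in\sat F(a)(s) \iff \Trans{s}{\overline a}{s'}, \]
using the concrete form of the reflection $\sat{(\cdot)}$ proved above, namely $\sat F(\tau)=F(\tau)\str$ and $\sat F(a)=\sat F(\tau).F(a).\sat F(\tau)$ for $a\neq\tau$. I would split into two cases. For $a=\tau$, unwinding the free-monoid element $F(\tau)\str=\bigcup_n F(\tau)^n$ in the monoid $([S\to\pow S],.,\id)$ shows that $s'\in\sat F(\tau)(s)$ holds exactly when $s'$ is reached from $s$ by a finite (possibly empty) chain of $\tau$-transitions of $F$; these are precisely the words $w$ with $\hat w=\epsilon$, so this matches $\Trans{s}{\overline\tau}{s'}$. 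For $a\in L$, expanding the two-sided (bimodule) product $\sat F(\tau).F(a).\sat F(\tau)$ shows that $s'\in\sat F(a)(s)$ holds exactly when $s$ evolves to $s'$ under $F$ along a word of the shape $\tau^i a\,\tau^j$; these are precisely the $w\in A\str$ with $\hat w=a$, which is the defining condition for $\Trans{s}{\overline a}{s'}$. The same computation applies verbatim to $G$.

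With the identity in hand, I would simply rewrite the two clauses of Lemma~\ref{lemma:weak-derived}, replacing each occurrence of $\Trans{}{\overline a}{}$ for $F$ and for $G$ by the corresponding strong transition of $\sat F$ and of $\sat G$. The resulting statement, ranging over all $a\in A=L+\{\tau\}$, is verbatim the definition of a strong bisimulation between $\sat F$ and $\sat G$, so the two notions coincide. The main obstacle is the case analysis in the displayed identity, and specifically checking that the algebraic free-monoid and bimodule constructions reproduce exactly the combinatorics of $\tau$-deletion: that $F(\tau)\str$ is the reflexive–transitive closure (so the empty $\tau$-chain is included, matching the empty visible word) and that the product enumerates all and only the words $\tau^i a\,\tau^j$; everything after that is bookkeeping. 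As an alternative route one could instead note that $\sat F$ and $\sat G$ are saturated and invoke the preceding proposition (weak and strong bisimulation agree on saturated systems), reducing the claim to the fact that $F$ and $\sat F$ induce the same derived weak transition system; this follows from the idempotence of $\tau$-closure, $(F(\tau)\str)\str=F(\tau)\str$, but the route through Lemma~\ref{lemma:weak-derived} is the more direct.
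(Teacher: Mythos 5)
Your proposal is correct and follows essentially the same route as the paper: the paper's proof also consists of invoking Lemma~\ref{lemma:weak-derived} together with the observation that the strong transition relation of $\sat F$ is exactly the derived relation $\Trans{}{\overline a}{}$ of $F$, which is precisely your bridge identity. You simply spell out the case analysis ($a=\tau$ versus $a\in L$) verifying that identity from the concrete construction $\sat F(\tau)=F(\tau)\str$ and $\sat F(a)=\sat F(\tau).F(a).\sat F(\tau)$, which the paper leaves implicit.
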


\begin{proof}
	This is a direct consequence of the concrete construction of the saturated reflection. It follows from 
	Lemma \ref{lemma:weak-derived}, since the transition relation on the saturation is the derived 
	transition relation on the 
	original transition system: $\trans{s}{a}{s'}$ in $\sat{F}$ if and only if $\Trans{s}{\sat{a}}{s'}$ with respect 
	to $F$ (and similarly for $G$). 
	%Suppose
	%$R$ is a strong bisimulation between $\hat{F}$ and $\hat{G}$. If $sRt$ and $\trans{s}{a}{s'}$ in $F(a)$, 
	%then there is $\trans{t}{a}{t'}$ in $\hat{G}(a)$. There are now two cases as before:
	%\begin{itemize}
	%\item[] $a\neq\tau$: Then, by construction of $\hat{G}(a)$, we have 
	%$\hat{G}(a) = \hat{G}(\tau).G(a).\hat{G}(\tau)$, so that \\
	%$t (\trans{}\tau{})\str \trans{}a{}  (\trans{}\tau{})\str t'$. But then $\Trans{t}{a}{t'}$ in $G(a)$ as required. 
	%\item[] $a=\tau$: Then by construction of $\hat{G}(\tau)$, we have $t (\trans{}\tau{})\str t'$, and again 
	%this implies $\Trans{t}{\epsilon}{t'}$ in $G(\tau)$. 
	%\end{itemize}
\end{proof}

\begin{corollary}
	Suppose $F\colon(L+\{\tau\})\longrightarrow [S\to\pow S]$ and $G\colon(L+\{\tau\})\longrightarrow [T\to\pow T]$ are transition systems 
	with internal actions, and $R\subseteq S\times T$.  Then the following are equivalent: 
	\begin{enumerate}
		\item $R$ is a weak bisimulation between $F$ and $G$ 
		\item $\sat{F}$ and $\sat{G}$ are in the appropriate logical relation: 
		$(\sat{F},\sat{G})\in[\Id_{L+\{\tau\}}\to [R\to\pow R]]$
		\item $R$ is the state space of a saturated transition system in {\Rel} whose first projection is 
		$\sat{F}$ and whose second is $\sat{G}$. 
	\end{enumerate}
\end{corollary}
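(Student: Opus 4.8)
The plan is to prove the two equivalences $(1)\Leftrightarrow(2)$ and $(2)\Leftrightarrow(3)$ separately, reducing each to results already established. For $(1)\Leftrightarrow(2)$ I would chain two earlier facts. By the preceding proposition, $R$ is a weak bisimulation between $F$ and $G$ exactly when it is a strong bisimulation between $\sat F$ and $\sat G$. By the lemma identifying strong bisimulations of labelled transition systems with logical relations, reformulated as membership in $[\Id_{L+\{\tau\}}\to[R\to\pow R]]$, the latter is precisely the assertion $(\sat F,\sat G)\in\Id_{L+\{\tau\}}\to[R\to\pow R]$ of $(2)$. No fresh computation is required here.

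For $(2)\Leftrightarrow(3)$ the crucial point is that the monoid $[R\to\pow R]$ in {\Rel} sits inside $[S\to\pow S]\times[T\to\pow T]$ as a submonoid whose multiplication and inherited partial order are both computed componentwise; this combines the monoid structure of section \ref{sec:monoids} with the cartesian closed structure of {\Rel}. Condition $(2)$ says exactly that each pair $(\sat F(a),\sat G(a))$ lies in $[R\to\pow R]$, i.e.\ that $(\sat F,\sat G)$ is a transition system in {\Rel} on state space $R$ with the stated projections; hence $(3)\Rightarrow(2)$ is immediate, a saturated such system being in particular such a system. For $(2)\Rightarrow(3)$ it remains to see that this system is saturated. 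Since composition and order are componentwise, every inequality of definition \ref{def:saturated} for $(\sat F,\sat G)$ reduces to the conjunction of the corresponding inequalities for $\sat F$ in $[S\to\pow S]$ and for $\sat G$ in $[T\to\pow T]$, and both hold because $\sat F$ and $\sat G$ are saturated by construction of the reflection $\sat{(\cdot)}$.

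The step I expect to need the most care is precisely this componentwise transfer. One must confirm that the unit, together with the composites pairing $\sat F(\tau).\sat F(\tau)$ with $\sat G(\tau).\sat G(\tau)$ and $\sat F(\tau).\sat F(a).\sat F(\tau)$ with its $T$-side analogue, all remain logical relations, and that the order on $[R\to\pow R]$ is genuinely the restriction of the product order on $[S\to\pow S]\times[T\to\pow T]$. Both follow once one observes that the two projection monoid homomorphisms of section \ref{sec:monoids} jointly realise $[R\to\pow R]$ as a submonoid, after which the saturation conditions pass back and forth between the {\Rel} system and its components with no further argument; everything else is a direct appeal to the quoted proposition and lemma.
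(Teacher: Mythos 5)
Your proof is correct and follows exactly the route the paper intends: the corollary is stated without proof precisely because $(1)\Leftrightarrow(2)$ is the preceding proposition chained with the lemma identifying strong bisimulations with logical relations, and $(2)\Leftrightarrow(3)$ is the observation from the monoids section that $[R\to\pow R]$ is a submonoid of $[S\to\pow S]\times[T\to\pow T]$ with componentwise operations and order. Your extra care over the componentwise transfer of the saturation inequalities (checking the unit and the relevant composites stay in the submonoid) is exactly the detail the paper leaves implicit.
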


The consequence of this is that we now have two separate ways of giving semantics to transition 
systems with inner actions. Given $F\colon(L+\tau)\longrightarrow[S\to\pow S]$, we can just take $F$ as a transition 
system. If we then apply the standard logical relations framework to this definition we get that two 
such, $F$ and $G$, are related by the logical relation $[\Id_{(L+\tau)} \to [R\to\pow R]]$ if and only if 
$R$ is a strong bisimulation between $F$ and $G$. If instead we take the semantics to be $\sat{F}$, 
typed as $\sat{F}:(L+\tau)\longrightarrow[S\to\pow S]$, then $\sat{F}$ and $\sat{G}$ are related by the logical 
relation $[\Id_{(L+\tau)} \to [R\to\pow R]]$ if and only if 
$R$ is a weak bisimulation between $F$ and $G$.

\section{Lax transition systems}\label{sec:lax transition systems}

Saturated transition systems still include explicit $\tau$-actions even though these are supposed to be 
internal actions only indirectly observable. We can however avoid $\tau$'s appearing explicitly in the 
semantics by giving a relaxed variant of the monoid semantics. 

We recall that for an arbitrary set of action labels $A$, the set of $A$-labelled transition systems 
$A\longrightarrow[S\to\pow S]$ is isomorphic to the set of monoid homomorphisms $A\str\longrightarrow[S\to\pow S]$, and 
moreover that for any transition systems $F$ and $G$ and relation $R\subseteq S\times T$, 
$F$ is related to $G$ by $[{\Id_A}\to[R\to\pow R]]$ iff $F$ is related to $G$ as 
monoid homomorphism by $[{\Id_{A\str}}\to [R\to\pow R]]$ iff $R$ is a strong bisimulation between $F$ 
and $G$. 

We can model transition systems with internal actions similarly, by saying what transitions correspond 
to sequences of visible actions. The price we pay is that, since $\tau$ is not visible, we have genuine 
state transitions corresponding to the empty sequence. We no longer have a monoid homomorphism. 

\begin{definition}%[lax transition system]
	\label{def:lax-transition-system}
	A {\em lax transition system} on an alphabet $L$ (not including an internal action $\tau$) is a 
	function $F:L\str \longrightarrow [S\to\pow S]$ such that: 
	\begin{enumerate}
		\item $\id\leq F(\epsilon)$ (reflexivity)
		\item $F(vw) = F(v).F(w)$ (composition)
	\end{enumerate}
\end{definition}
%Recall that the set [S,?S], which
%corresponds to the set of relations from S to S, carries a monoid structure given by relation composition f á g and the identity relation id as unit.
%1.2. Definition. A transition system with internal action is called satu- rated if
%1. F(?)³id andF(?)áF(?)²F(?)and
%2. Foralla?A,F(?)áF(a)áF(?)²F(a)
%We write A?sat ? ts for the full subcategory of saturated transition systems
%with internal actions.

\begin{definition}%[laxification]
	Let $F:(L+\{\tau\})\longrightarrow [S\to\pow S]$ be a transition system with internal action, then its {\em laxification}
	$\lax{F}:L\str \longrightarrow [S\to\pow S]$ is the lax transition system defined by: 
	\begin{enumerate}
		\item $\lax{F} (\epsilon) = F(\tau)\str$
		\item $\lax{F} (a) = F(\tau)\str. F(a). F(\tau)\str$, for any $a\in L$. 
		\item $\lax{F} (vw) = \lax{F} (v) . \lax{F} (w)$. 
	\end{enumerate}
\end{definition}

It is trivial that $\lax{F}$ is a lax transition system. 

\begin{lemma}
	If $F:(L+\{\tau\})\longrightarrow [S\to\pow S]$ is a transition system with internal action, then its laxification
	$\lax{F}:L\str\longrightarrow [S\to\pow S]$ is a lax transition system.
\end{lemma}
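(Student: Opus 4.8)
The plan is to verify the two defining clauses of a lax transition system directly, after first recording two elementary facts about the Kleene star $F(\tau)\str$ in the monoid $([S\to\pow S],\cdot,\id)$. As noted in Section~\ref{sec:monoids}, $[S\to\pow S]$ is the monoid of $\Union$-preserving endofunctions on $\pow S$, ordered pointwise, and a short calculation shows that composition distributes over the joins of this order on both sides. Since $F(\tau)\str = \bigvee_{n\geq 0} F(\tau)^n$ (this is exactly the free monoid on $F(\tau)$ invoked in the saturation section), I would extract from this closed form the two facts I need: first, $\id = F(\tau)^0 \leq F(\tau)\str$; and second, using both-sided distributivity, $F(\tau)\str\cdot F(\tau)\str = \bigvee_{m,n} F(\tau)^{m+n} = \bigvee_{k} F(\tau)^{k} = F(\tau)\str$, i.e.\ $F(\tau)\str$ is idempotent.

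Before checking the conditions I would make the inductive definition of $\lax F$ precise by giving its closed form on an arbitrary word: for $w = a_1\cdots a_n\in L\str$ set
\[
\lax F(w) = F(\tau)\str\cdot F(a_1)\cdot F(\tau)\str\cdots F(\tau)\str\cdot F(a_n)\cdot F(\tau)\str ,
\]
with the convention $\lax F(\epsilon) = F(\tau)\str$. This is a genuine function $L\str\to[S\to\pow S]$ because each word factors uniquely into letters, it agrees with clauses~(1) and~(2) of the laxification definition on $\epsilon$ and on single letters, and associativity of the monoid makes the expression unambiguous.

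Reflexivity, $\id\leq\lax F(\epsilon)$, is then immediate from the first Kleene-star fact. For composition I would take $v = a_1\cdots a_m$ and $w = b_1\cdots b_n$, write out $\lax F(vw)$ and $\lax F(v)\cdot\lax F(w)$ in closed form, and observe that the two expressions coincide except that $\lax F(v)\cdot\lax F(w)$ carries a factor $F(\tau)\str\cdot F(\tau)\str$ at the junction of the two words where $\lax F(vw)$ carries a single $F(\tau)\str$; the idempotency fact collapses the doubled factor, giving equality.

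The only point that deserves care --- and the sole place where the non-standard choice $\lax F(\epsilon) = F(\tau)\str\neq\id$ matters --- is consistency of the composition clause with the empty word. One must confirm $\lax F(\epsilon)\cdot\lax F(w) = \lax F(w)$ and $\lax F(w)\cdot\lax F(\epsilon) = \lax F(w)$, not merely the weaker identity $\id\cdot\lax F(w)=\lax F(w)$. Both reduce to prefixing or suffixing the leading (respectively trailing) $F(\tau)\str$ of $\lax F(w)$ by another copy of $F(\tau)\str$, and so are again exactly the idempotency fact (with the base case $F(\tau)\str\cdot F(\tau)\str=F(\tau)\str$ covering $w=\epsilon$). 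This is essentially the whole content of the lemma; everything else is bookkeeping with the closed form.
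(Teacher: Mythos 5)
Your proof is correct and is exactly the direct verification the paper has in mind: the paper in fact gives no proof at all, simply declaring the lemma trivial immediately before stating it, so your write-out of the closed form, the idempotency $F(\tau)\str\cdot F(\tau)\str = F(\tau)\str$, and the check at the junction of two words (including the empty-word cases, which is where the only real content lies) supplies precisely the omitted routine argument.
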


We have reproduced the derived transition system. 

Note that if $G$ is a lax transition system, then $G(w)$ depends only on $G(\epsilon)$ and the $G(a)$, 
all other values are determined by composition. Note also that if $F$ is saturated, then 
$\lax{F}(\epsilon) = F(\tau)$ and $\lax{F}(a) = F(a)$. 

We can also go the other way. Given a lax transition system, $F:L\str\longrightarrow [S\to\pow S]$, then we can 
define a transition system with inner action: 
$\inner{F}:(L+\{\tau\})\longrightarrow [S\to\pow S]$ where 
\begin{itemize}
	\item $\inner{F} (\tau) = F(\epsilon)$
	\item $\inner{F} (a) = F(a)$
\end{itemize}

\begin{lemma}
	If $F:(L+\{\tau\})\longrightarrow [S\to\pow S]$ is a transition system with internal action, then its saturation 
	$\sat{F}$ can be constructed as $\inner{\lax{F}}$. 
\end{lemma}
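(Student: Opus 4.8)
The plan is to reduce everything to the concrete formula for the saturated reflection that was established in the proposition constructing $\sat{(\cdot)}$. Both $\sat{F}$ and $\inner{\lax{F}}$ are transition systems with internal action, that is, plain functions on the single labels $L+\{\tau\}$, so it suffices to check that the two functions agree on $\tau$ and on each $a\in L$. I would set up the argument as a termwise comparison of closed forms rather than a verification of any universal property.

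First I would unfold $\inner{\lax{F}}$ using the definition of $\inner{(\cdot)}$: it sends $\tau$ to $\lax{F}(\epsilon)$ and each $a\in L$ to $\lax{F}(a)$. The previous lemma guarantees that $\lax{F}$ is a genuine lax transition system, so these values are well-defined. Next I would substitute the definition of the laxification, giving $\lax{F}(\epsilon)=F(\tau)\str$ and $\lax{F}(a)=F(\tau)\str.F(a).F(\tau)\str$. It is worth noting that the composition clause $\lax{F}(vw)=\lax{F}(v).\lax{F}(w)$ is never invoked here, since $\inner{(\cdot)}$ only reads the values at $\epsilon$ and at single letters; so the full multiplicative structure of the laxification plays no role in this identification.

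Finally I would compare these with the concrete construction of $\sat{F}$ recorded in the reflection proposition, namely $\sat{F}(\tau)=F(\tau)\str$ and, for $a\neq\tau$, $\sat{F}(a)=\sat{F}(\tau).F(a).\sat{F}(\tau)=F(\tau)\str.F(a).F(\tau)\str$. The two pairs of formulas coincide term by term, whence $\inner{\lax{F}}=\sat{F}$.

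The computation is entirely definitional, so there is no real obstacle beyond bookkeeping. The only point worth stating explicitly is why we may identify $\inner{\lax{F}}$ with \emph{the} saturation rather than merely with some saturated system lying above $F$: this is legitimate precisely because the earlier proposition supplies an explicit closed form for the reflection, so matching formulas already pins down the reflection uniquely and we need not re-establish its universal property. (If one wished to avoid appealing to the concrete form, the alternative route would be to verify directly that $\inner{\lax{F}}$ is saturated and that $F$ maps universally into it; but invoking the established construction is considerably cleaner.)
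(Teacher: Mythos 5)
Your proof is correct and matches the paper's (implicit) argument: the paper states this lemma without proof precisely because it follows by unfolding the definitions of $\inner{(\cdot)}$ and $\lax{(\cdot)}$ and comparing with the concrete formulas $\sat{F}(\tau)=F(\tau)\str$ and $\sat{F}(a)=F(\tau)\str.F(a).F(\tau)\str$ from the reflection proposition, exactly as you do. Your closing remark about why the closed form legitimately identifies the reflection is a sensible clarification, not a deviation.
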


One way of looking at this is that a lax transition system is just a saturated one in thin disguise. But 
from our perspective it gives us a different algebraic semantics for transition systems with inner 
action that can also be made to account for weak bisimulation, and this time the $\tau$ actions do 
not appear in the formal statement. 

\begin{lemma}
	Suppose $F:(L+\{\tau\})\longrightarrow [S\to\pow S]$ and $G:(L+\{\tau\})\longrightarrow [T\to\pow T]$ are transition systems 
	with internal actions, and $R\subseteq S\times T$.  Then the following are equivalent: 
	\begin{enumerate}
		\item $R$ is a weak bisimulation between $F$ and $G$ 
		\item $(\lax{F},\lax{G})\in [\Id_{L\str}\to [R\to\pow R]]$ 
		\item $R$ is the state space of a lax transition system in {\Rel} whose first projection is 
		$\lax{F}$ and whose second is $\lax{G}$. 
	\end{enumerate}
\end{lemma}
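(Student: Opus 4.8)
The plan is to reduce everything to the already-established characterisation of weak bisimulation in terms of the derived transition relation $\Trans{}{}{}$, exploiting the fact that the laxification simply reproduces that derived relation. First I would record the key identity $\lax{F}(v) = \Trans{}{v}{}$ for every $v \in L\str$ (and likewise for $G$), where as in the paper $\Trans{}{v}{}$ denotes the operator $s \mapsto \{\, s' \mid \Trans{s}{v}{s'} \,\}$. For $v = \epsilon$ this is immediate from $\lax{F}(\epsilon) = F(\tau)\str$, since $\Trans{s}{\epsilon}{s'}$ holds exactly when $s$ reaches $s'$ by a sequence of $\tau$-steps. For a single letter $a \in L$ it follows from $\lax{F}(a) = F(\tau)\str.F(a).F(\tau)\str$ together with the observation that the words $w$ with $\hat{w} = a$ are exactly those of the form $\tau^k a \tau^m$. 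For a general word the identity follows by induction on length: both $\lax{F}$ and $\Trans{}{}{}$ satisfy $H(vw) = H(v).H(w)$ (the latter by the composition lemma for derived transitions), and they agree on the generators, so they agree everywhere.

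For (1) $\Leftrightarrow$ (2) I would simply unpack condition (2). Since $\Id_{L\str}$ relates each word only to itself, $(\lax F, \lax G) \in \Id_{L\str} \to [R \to \pow R]$ asserts that for every $v \in L\str$ and every pair $sRt$ we have $\lax{F}(v)(s)\ [\pow R]\ \lax{G}(v)(t)$. Substituting the identity above and expanding the definition of $\pow R$, this says precisely that for all $v$, whenever $sRt$, every $\Trans{s}{v}{s'}$ is matched by some $\Trans{t}{v}{t'}$ with $s'Rt'$, and symmetrically. That is exactly the sequence-indexed characterisation of weak bisimulation already proved, which yields the equivalence.

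For (2) $\Leftrightarrow$ (3), a lax transition system in {\Rel} with state space $R$ and projections $\lax F$ and $\lax G$ is a map $L\str \to [R \to \pow R]$ in {\Rel} satisfying reflexivity and composition whose two components are $\lax F$ and $\lax G$. Such a lift exists if and only if each pair $(\lax{F}(v), \lax{G}(v))$ lies in the relation $[R \to \pow R]$, which is exactly condition (2). The reason is that the defining axioms (reflexivity $\id \leq F(\epsilon)$ and composition $F(vw) = F(v).F(w)$) are conditions on each component separately and hold for $\lax F$ and $\lax G$ by hypothesis; the projection functors ${\Rel} \to {\Set}$ preserve this structure, so once the data lands in $[R \to \pow R]$ the axioms are automatically satisfied there. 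This is the same pattern used in the corresponding corollary for saturated systems.

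The main obstacle is the identity $\lax{F}(v) = \Trans{}{v}{}$, since everything else is definitional unpacking. The one point demanding care is that the composition axiom for $\Trans{}{}{}$ is a genuine equality rather than merely an inequality, which is what lets the induction on word length close and what makes condition (2) faithfully capture the ``for all $v \in L\str$'' form of weak bisimulation.
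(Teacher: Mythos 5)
Your proof is correct and follows exactly the route the paper intends (the paper in fact omits an explicit proof, but signals it via the remark ``We have reproduced the derived transition system'' and the sequence-indexed characterisation of weak bisimulation): establish $\lax{F}(v)=\Trans{}{v}{}$ on generators and extend by the composition law, then unpack $\Id_{L\str}\to[R\to\pow R]$ and use the componentwise monoid/order structure on $[R\to\pow R]$ for the equivalence with (3). No gaps.
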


\section{(Semi-)Branching bisimulations}

In this section, we shall always consider two labelled transition systems $F \colon (L + \{\tau\}) \longrightarrow [ S \to \pow S]$ and $G \colon (L + \{\tau\}) \longrightarrow [T \to \pow T]$ with an internal action $\tau$. We begin by introducing the following notation: we say that $x \overset{\tau^*}{\to} y$, for $x$ and $y$ in $S$ (or in $T$) if and only if there is a finite, possibly empty, sequence of $\tau$ actions
\[
x \overset \tau \to \cdots \overset \tau \to y;
\]
if the sequence is empty, then we require $x = y$.

We now recall the notion of \emph{branching} bisimulation, which was introduced in~\cite{van_glabbeek_branching_1996}.

\begin{definition}\label{def:branching bisimulation}
	A  relation $R \subseteq S \times T$ is called a \emph{branching bisimulation} if and only if whenever $s R t$:
	\begin{itemize}
		\item $s \overset{a}{\to} {s'}$ implies $\bigl( (\exists t_1,t_2 \in T \ldotp t \overset {\tau^*} \to {t_1} \overset a \to {t_2} \land s R t_1 \land s' R t_2) \text{ or } ( a = \tau \land s' R t) \bigr)$,
		\item $t \overset{a}{\to} {t'}$ implies $\bigl( (\exists s_1,s_2 \in S \ldotp s \overset {\tau^*} \to {s_1} \overset a \to {s_2} \land s_1 R t \land s_2 R t') \text{ or } ( a = \tau \land s R t') \bigr)$.
	\end{itemize}
\end{definition}

\begin{remark}\label{rem:branching tau-case}
	In particular, if $R$ is a branching bisimulation, $s R t$ and $s \overset \tau \to {s'}$ then there exists $t' \in T$ such that $t \overset {\tau^*} \to {t'}$ and $s' R t'$.
\end{remark}

We show how branching bisimulation is also an instance of logical relation between appropriate derived versions of $F$ and $G$.

\begin{definition}\label{def:branching saturation}
	The \emph{branching saturation} of $F$, denoted by $\bsat F$, is a function
	\[
	\bsat F \colon (L+\{\tau\}) \longrightarrow [ S \to \pow {(S \times S)}]
	\]
	defined as follows. Given $s \in S$ and $a \in L + \{\tau\}$, 
	\[
	\bsat F a s = \{ (s_1,s_2) \in S \times S \mid (s \overset {\tau^*} \to s_1 \overset a \to s_2) \text{ or } (a = \tau \text{ and } s=s_1=s_2) \}.
	\]
\end{definition}

\begin{theorem1}\label{thm:branching iff logical relation}
	Let $R \subseteq S \times T$. Then $R$ is a branching bisimulation if and only if $(\bsat F, \bsat G) \in [\Id_{L+\{\tau\}} \to [R \to \pow{(R \times R)}]]$.
\end{theorem1}
\begin{proof}
	Let us unpack the definition of the relation $[\Id_{L+\{\tau\}} \to [R \to \pow{(R \times R)}]]$. We have that $(\bsat F, \bsat G) \in [\Id_{L+\{\tau\}} \to [R \to \pow{(R \times R)}]]$ if and only if for all $a \in L+\{\tau\}$ and for all $s \in S$ and $t \in T$ such that $s R t$ we have $(\bsat F a s) [\pow{(R \times R)}] (\bsat G a t)$. By definition of $\pow{(R \times R)}$, this means that for all $(s_1,s_2) \in \bsat F a s$ there exists $(t_1,t_2)$ in $\bsat G a t$ such that $s_1 R t_1$ and $s_2 R t_2$.
	
	Suppose then that $R$ is a branching bisimulation, consider $s R t$ and take $(s_1,s_2) \in \bsat F a s$. We have two possible cases to discuss: $a = \tau \text{ and } s=s_1=s_2$, or $s \overset {\tau^*} \to s_1 \overset a \to s_2$. In the first case, consider the pair $(t,t)$: this clearly belongs to $\bsat G a t$. In the second case, we are in the following situation:
	\[
	\begin{tikzcd}
	s \ar[r,-,"R"] \ar[d,"\tau^*"'] & t \\
	s_1 \ar[d,"a"'] \\
	s_2
	\end{tikzcd}
	\]
	If $\tau^*$ is the empty list, then $s=s_1$, hence $s_1 R t$: by definition of branching bisimulation, there are indeed $t_1$ and $t_2$ such that:
	\[
	\begin{tikzcd}[row sep=1em]
	& t \ar[dd,"\tau^*"] \\
	s \ar[ur,-,"R"] \ar[dr,-,"R"] \ar[dd,"a"']\\
	& t_1 \ar[d,"a"] \\
	s_2 \ar[r,-,"R"] & t_2
	\end{tikzcd}
	\]
	hence $(t_1,t_2) \in \bsat G a t$. If $\tau^*=\tau^n$, with $n \ge 1$, then by Remark~\ref{rem:branching tau-case} applied to every $\tau$ in the list $\tau^*$, there exists $t'$ in $T$ such that $t \overset{\tau^*}{\to} t'$ and $s_1 R t'$. Now apply again the definition of branching bisimulation for $s R t'$: we have that there are $t_1$ and $t_2$ in $T$ such that:
	\[
	\begin{tikzcd}[row sep=1em]
	& t \ar[dd,"\tau^*"]  \\
	s \ar[ur,-,"R"] \ar[dd,"\tau^*"'] 					  \\
	& t' \ar[dd,"\tau^*"] \\
	s_1 \ar[dr,-,"R"] \ar[dd,"a"'] \ar[ur,-,"R"]						  \\
	& t_1 \ar[d,"a"] 	  \\
	s_2 \ar[r,-,"R"] 								& t_2
	\end{tikzcd}
	\]
	hence $(t_1,t_2)\in \bsat G a t$. This proves that if $R$ is a branching bisimulation, then $(\bsat F, \bsat G) \in [\Id_{L+\{\tau\}} \to [R \to \pow{(R \times R)}]]$.
	
	Conversely, suppose $(\bsat F, \bsat G) \in [\Id_{L+\{\tau\}} \to [R \to \pow{(R \times R)}]]$ and that we are in the following situation:
	\[
	\begin{tikzcd}
	s \ar[r,-,"R"] \ar[d,"a"'] & t \\
	s'
	\end{tikzcd}
	\]
	Then we have $(s,s') \in \bsat F a s$, because indeed $s \overset {\tau^*} \to s \overset a \to s'$. By definition of the relation $\pow{(R \times R)}$, there exists $(t_1,t_2) \in \bsat G a t$ such that $s R t_1$ and $s' R t_2$. It is immediate to see that this is equivalent to the condition required by Definition~\ref{def:branching bisimulation}, hence $R$ is in fact a branching bisimulation.    
\end{proof}

In~\cite{van_glabbeek_branching_1996} also a weaker notion of branching bisimulation was introduced, which we recall now.

\begin{definition}
	A relation $R \subseteq S \times T$ is called a \emph{semi-branching bisimulation} if and only if whenever $sRt$:
	\begin{itemize}
		\item $s \overset{a}{\to} {s'}$ implies  $\bigl( (\exists t_1,t_2 \in T \ldotp t \overset {\tau^*} \to {t_1} \overset a \to {t_2} \land s R t_1 \land s' R t_2)$ or $( a = \tau \land \exists t' \in T \ldotp {t \overset {\tau^*} \to t'} \land s R t' \land s' R t') \bigr)$,
		\item $t \overset{a}{\to} {t'}$ implies  $\bigl( (\exists s_1,s_2 \in S \ldotp s \overset {\tau^*} \to {s_1} \overset a \to {s_2} \land s_1 R t \land s_2 R t')$ or $( a = \tau \land \exists s' \in S \ldotp {s \overset {\tau^*} \to s'} \land s' R t \land s' R t') \bigr)$.
	\end{itemize}
\end{definition}

Every branching bisimulation is also semi-branching, but the converse is not true. The difference between branching and semi-branching bisimulation is in what is allowed to happen in the $\tau$-case. Indeed, if $s \overset{\tau}{\to} s'$ and $sRt$, in the branching case it must be that either also $s'Rt$, or $t$ can ``evolve'' into $t_1$, for $sRt_1$, by means of zero or more $\tau$ actions, and then $t_1$ has to evolve into a $t_2$ via a $\tau$ action with $s' R t_2$. In the semi-branching case, $t$ is always allowed to evolve into $t'$ with zero or more $\tau$ steps, as long as $s$ is still related to $t'$, as well as $s' R t'$. Figure~\ref{fig:branching vs semibranching} shows this in graphical terms.

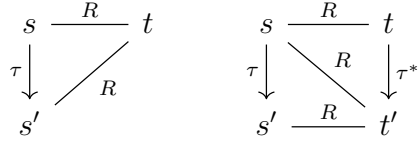
\begin{figure}
	%\centering
	\[
	\begin{tikzcd}
		s \ar[d,"\tau"'] \ar[r,"R",-] & t \\
		s' \ar[ur,-,"R"']
	\end{tikzcd}
	\qquad
	\begin{tikzcd}
		s \ar[d,"\tau"'] \ar[r,"R",-]  \ar[dr,"R",-] & t \ar[d,"\tau^*"] \\
		s' \ar[r,-,"R"] & t'
	\end{tikzcd}
	\]
	\caption{Difference between branching (left) and semi-branching (right) case for $\tau$ actions.}
	\label{fig:branching vs semibranching}
\end{figure}

We can prove a result analogous to Theorem~\ref{thm:branching iff logical relation} for semi-branching bisimulations. To do so, we introduce an appropriate derived version of a labelled transition system $F \colon (L+\{\tau\}) \longrightarrow [S \to \pow{(S\times S)}]$.

\begin{definition}
	The \emph{semi-branching saturation} of $F$, denoted by $\sbsat F$, is a function
	\[
	\sbsat F \colon (L+\{\tau\}) \longrightarrow [ S \to \pow{(S \times S)}]
	\]
	defined as follows. Given $s \in S$ and $a \in L+\{\tau\}$,
	\[
	\sbsat F a s = \{ (s_1,s_2) \in S \times S \mid (s \overset {\tau^*} \to s_1 \overset a \to s_2) \text{ or } (a = \tau \text{ and } s_1 = s_2 \text{ and } s \overset {\tau^*} \to s_1 \}.
	\]
\end{definition}

Notice that Remark~\ref{rem:branching tau-case} continues to hold for semi-branching bisimulations too.

\begin{theorem1}
	Let $R \subseteq S \times T$ be a relation. Then $R$ is a semi-branching bisimulation if and only if $(\sbsat F, \sbsat G) \in [\id_{L+\{\tau\}} \to [R \to \pow{(R \times R)}]]$.
\end{theorem1}
\begin{proof}
	Same argument of the proof of Theorem~\ref{thm:branching iff logical relation}.
\end{proof}

\section{The almost-monad}
In Section~\ref{sec:monoids}, we observed that $[S \to \pow{S}]$ enjoys a monoid structure inherited from the monadicity of the covariant powerset $\pow{}$. Sadly, we cannot say quite the same for $[S \to \pow{(S \times S)}]$. Indeed, consider the functor $T(A)=\pow{(A \times A)}$:
\[
\begin{tikzcd}[arrow style=tikz,row sep=0em,column sep=3em]	
\Set \ar[r,"- \times -"] \ar[rr,bend angle=30,bend left,"T"] & \Set \ar[r,"\pow{}"] & \Set \\
A \ar[r,|->] \ar[d,"f"'] & A \times A \ar[r,|->] \ar[d,"f \times f"] & \pow{(A \times A)} \ar[d,"\pow{(f\times f)}"] \\[2em]
B \ar[r,|->] & B \times B \ar[r,|->] & \pow{(B \times B)}
\end{tikzcd}
\]
where $\pow{(f\times f)}(S) = \{\bigl(f(x),f(y)\bigr) \mid (x,y) \in S\}$. We can define two natural transformations $\eta \colon \Id_\Set \longrightarrow T$ and $\mu \colon T^2 \longrightarrow T$ as follows: $\eta_A (a) = \{(a,a)\}$ and
\[
\begin{tikzcd}[row sep=0em,arrow style=tikz]
\pow{}\bigl( \pow{(A \times A)} \times \pow{(A \times A)}  \bigr) \ar[r,"\mu_A"] & \pow{(A \times A)} \\
U \ar[r,|->] & \bigcup\limits_{(V,W) \in U} (V \cup W)
\end{tikzcd}
\]
It is not difficult to see that $\eta$ and $\mu$ are indeed natural, and that the following square commutes for every set $A$:
\[
\begin{tikzcd}[arrow style=tikz]
T^3 A \ar[r,"\mu_{TA}"] \ar[d,"T\mu_A"'] & T^2 A \ar[d,"\mu_A"] \\
T^2 A \ar[r,"\mu_A"'] & TA
\end{tikzcd}
\]
However, although the left triangle in the following diagram commutes, the right one fails to do so in general:
\[
\begin{tikzcd}[arrow style=tikz]
TA \ar[r,"\eta_{TA}"] \ar[dr,"\id_{TA}"'] & T^2 A \ar[d,"\mu_A"] & TA \ar[l,"T\eta_A"'] \ar[dl,"\id_{TA}"] \\
& TA
\end{tikzcd}
\]
Indeed, given $S \subseteq A \times A$, it is true that $S \cup S = S$, but 
\[
\mu_A\bigl(T\eta_A(S)\bigr) = \mu_A\Bigl( \left\{ \bigl( \{(x,x)\}, \{(y,y)\} \bigr) \mid (x,y) \in S \right\} \Bigr) = \bigcup_{(x,y)\in S} \bigl( \{(x,x)\} \cup \{(y,y)\} \bigr) \ne S.
\]
This means that $(T,\eta,\mu)$ falls short of being a monad: it is only a ``left-semi-monoid'' in the category of endofunctors and natural transformations on $\Set$, in the sense that $\eta$ is only a left unit for the multiplication $\mu$.

One can go further, and build up the ``Kleisli non-category'' associated to $(T,\eta,\mu)$, following the usual definition for Kleisli category of a (proper) monad, where morphisms $A \longrightarrow B$ are functions $A \longrightarrow \pow{(B \times B)}$, and composition of $f \colon A \longrightarrow \pow{(B \times B)}$ and $g \colon B \longrightarrow \pow{(C \times C)}$ is the composite in $\Set$:
\[
\begin{tikzcd}[row sep=0em,arrow style=tikz]
A \ar[r,"f"] & TB \ar[r,"Tg"] & T^2 B \ar[r,"\mu_B"] & TB \\
a \ar[r,|->] & f(a) \ar[r,|->] & \{ (g(x),g(y)) \mid (x,y) \in f(a)  \} \ar[r,|->] & \bigcup\limits_{(x,y)\in f(a)} (g(x) \cup g(y))
\end{tikzcd}
\]
This composition law has $\eta$ as a left-but-not-right identity. Whereas the set of endomorphisms on $A$ in the Kleisli category of a proper monad is always a monoid with the multiplication defined as the composition above, here we get that $[A \to \pow{(A \times A)}]$ is only a left-semi-monoid.

We can define a partial order on $[A \to \pow{(A \times A)}]$ in a canonical way, by setting $f \le g$ if and only if for all $a \in A$ $f(a) \subseteq g(a)$; by doing so, we can regard $[A \to \pow{(A \times A)}]$ as a category. The multiplication $f \cdot g \colon A \longrightarrow \pow{(A \times A)}$, defined as $f \cdot g (a)=\bigcup_{(x,y)\in f(a)} \bigl(g(x) \cup g(y)\bigr)$, preserves the partial order, therefore $[A \to \pow{(A \times A)}]$ is a ``left-semi-monoidal'' category.

\section{Branching and semi-branching saturated systems}

In this section we investigate the properties of $\bsat F(\tau)$ and $\sbsat F(\tau)$ as elements of $[S \to \pow{(S \times S)}]$, for $F \colon (A+\{\tau\}) \longrightarrow [S \to \pow{(S \times S)}]$, to explore whether it is possible to define an appropriate notion of branching or semi-branching saturated systems, where strong and branching (or semi-branching) bisimulations are the same, cf.\ weak case in Sections~\ref{sec:weak bisim through saturation} and~\ref{sec:lax transition systems}.

\begin{lemma}
	$\eta_S \le \bsat F (\tau)$, but $\bsat F(\tau) \cdot \bsat F(\tau) \nleq \bsat F(\tau)$ in general.
\end{lemma}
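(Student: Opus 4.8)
The plan is to prove the two claims separately, both by directly unpacking the definition of $\bsat F(\tau)$ from Definition~\ref{def:branching saturation}, specialised to the action $a = \tau$. Recall that $\bsat F(\tau) s = \{(s_1,s_2) \mid (s \overset{\tau^*}{\to} s_1 \overset{\tau}{\to} s_2) \text{ or } s = s_1 = s_2\}$, while $\eta_S(s) = \{(s,s)\}$ and the multiplication on $[S \to \pow{(S\times S)}]$ is $f \cdot g(s) = \bigcup_{(x,y) \in f(s)}(g(x) \cup g(y))$.

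For the inequality $\eta_S \le \bsat F(\tau)$, I would simply observe that for every $s \in S$ the pair $(s,s)$ lies in $\bsat F(\tau) s$, since the defining disjunction of $\bsat F(\tau)$ admits the case $s = s_1 = s_2$. Hence $\eta_S(s) = \{(s,s)\} \subseteq \bsat F(\tau) s$ for all $s$, which is exactly the partial-order relation $\le$ on $[S \to \pow{(S\times S)}]$ introduced in the previous section. This direction is immediate and the reflexive ``$s = s_1 = s_2$'' clause was built into the branching saturation precisely to guarantee it.

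The substance is the negative claim $\bsat F(\tau) \cdot \bsat F(\tau) \nleq \bsat F(\tau)$, which I would establish by exhibiting a single concrete counterexample, since ``in general'' only requires one transition system to fail the inequality. The natural candidate is a short $\tau$-chain, say $S = \{s, s_1, s_2\}$ with $s \overset{\tau}{\to} s_1 \overset{\tau}{\to} s_2$ and no other transitions. Then I would compute $\bsat F(\tau) s$, which contains pairs such as $(s_1,s_2)$ arising from a nonempty $\tau^*$-prefix, and then compute the composite $\bsat F(\tau) \cdot \bsat F(\tau)$ at $s$ by running a \emph{second} pair of $\tau$-steps out of each second coordinate. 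The key phenomenon to expose is that composition lets the two endpoints of a pair be reached by $\tau$-chains of independent lengths, producing a pair like $(s_1, s_2)$ where the first coordinate is reached after one $\tau$ and the second after a further chain — so the composite contains a pair that is \emph{not} of the single-step form $x \overset{\tau^*}{\to} x_1 \overset{\tau}{\to} x_2$ demanded by $\bsat F(\tau)$, because the second coordinate has moved too far from the first. I expect the main obstacle to be bookkeeping: I must track exactly which pairs the multiplication $f \cdot g(s) = \bigcup_{(x,y)}(g(x) \cup g(y))$ generates, remembering that $g$ is applied to \emph{both} coordinates $x$ and $y$ of each pair in $f(s)$, and then verify that at least one resulting pair violates the membership condition for $\bsat F(\tau) s$. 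Care is needed because the failure is genuinely about the rigid ``exactly one final $a$-step after a $\tau$-chain'' shape of branching saturation — unlike the weak case where $F(\tau)\str$ is idempotent — so I would make sure the chosen chain is long enough (at least two $\tau$-steps) that a pair with a ``doubled'' final segment cannot be re-derived as a legitimate element of $\bsat F(\tau) s$.
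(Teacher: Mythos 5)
Your first half is exactly the paper's argument: the reflexive clause $s=s_1=s_2$ in Definition~\ref{def:branching saturation} puts $(s,s)$ into $\bsat{F}\tau(s)$ for every $s$, so $\eta_S\le\bsat{F}(\tau)$. No issues there.

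The second half contains a genuine error: the witness you plan to exhibit does not exist. You predict that the composite will contain a non-reflexive pair ``whose second coordinate has moved too far from the first'' because the two endpoints are reached by $\tau$-chains of independent lengths. But the multiplication $f\cdot g(s)=\bigcup_{(x,y)\in f(s)}\bigl(g(x)\cup g(y)\bigr)$ never mixes coordinates from different pairs: every element of the composite is a complete pair taken from some $\bsat{F}\tau(x)$ with $s\overset{\tau^*}{\to}x$. Such a pair is either of the form $(x_1,x_2)$ with $x\overset{\tau^*}{\to}x_1\overset{\tau}{\to}x_2$ --- and then $s\overset{\tau^*}{\to}x_1\overset{\tau}{\to}x_2$, so it already lies in $\bsat{F}\tau(s)$ --- or it is the \emph{reflexive} pair $(x,x)$. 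Hence every non-reflexive pair in $\bigl(\bsat{F}(\tau)\cdot\bsat{F}(\tau)\bigr)(s)$ belongs to $\bsat{F}\tau(s)$, and the only possible violations are reflexive pairs $(x,x)$ at intermediate states $x\ne s$ with no $\tau$-self-loop. This is precisely the case the paper isolates (case 3 of its four-case analysis): in your own example $s\overset{\tau}{\to}s_1\overset{\tau}{\to}s_2$, the offending pair is $(s_1,s_1)\in\bsat{F}\tau(s_1)\subseteq\bigl(\bsat{F}(\tau)\cdot\bsat{F}(\tau)\bigr)(s)$, which is not in $\bsat{F}\tau(s)$ since $s\ne s_1$ and $s_1$ has no $\tau$-loop, whereas the pair $(s_1,s_2)$ that you flag as suspicious is in fact a legitimate element of $\bsat{F}\tau(s)$. (Incidentally, a single step $s\overset{\tau}{\to}s_1$ already suffices; you do not need a chain of length two.) If you carried out the full bookkeeping on your example you would stumble on the correct witness, but the mechanism you describe as ``the key phenomenon'' is not the one that breaks the inequality, and an argument aimed at producing your predicted pair would fail.
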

\begin{proof}
	By definition, the pair $(s,s)$, for $s \in S$, belongs to $\bsat F \tau (s)$, hence $\eta_S \le \bsat F (\tau)$.
	
	Let now $(x,y) \in (\bsat F(\tau) \cdot \bsat F(\tau))(s) = \bigcup_{(s_1,s_2) \in \bsat F \tau (s)} \bigl( \bsat F\tau (s_1) \cup \bsat F \tau(s_2) \bigr)$: we want to check whether $(x,y) \in \bsat F \tau (s)$. Suppose that $(x,y) \in \bsat F \tau (s_1)$ for some $(s_1,s_2) \in \bsat F \tau (s)$. Then we are in one of the following four situations:
	\begin{enumerate}
		\item 
		$
		\begin{tikzcd}
		s \ar[r,"\tau^*"] & s_1 \ar[r,"\tau"] \ar[dr,"\tau^*"] & s_2 \\
		&										& x \ar[r,"\tau"] & y
		\end{tikzcd}
		$
		\item 
		$
		\begin{tikzcd}
		s \ar[r,equal] & s_1 \ar[r,equal] \ar[dr,"\tau^*"] & s_2 \\
		&									& x \ar[r,"\tau"] & y
		\end{tikzcd}
		$
		\item 
		$
		\begin{tikzcd}
		s \ar[r,"\tau^*"] & s_1 \ar[r,"\tau"] \ar[d,equal] & s_2 \\
		& x \ar[r,equal] & y
		\end{tikzcd}
		$
		\item 
		$
		\begin{tikzcd}[row sep=1em,column sep=1em]
		s \ar[r,equal] & s_1 \ar[r,equal] \ar[d,equal] & s_2 \\
		& x \ar[r,equal] & y
		\end{tikzcd}
		$
	\end{enumerate}
	In cases 1 and 2, we can conclude that
	$
	\begin{tikzcd}[cramped,sep=small]
	s \ar[r,"\tau^*"] & x \ar[r,"\tau"] & y
	\end{tikzcd}
	$, while in case 4 we get $s=x=y$, hence $(x,y) \in \bsat F \tau (s)$. However, if in case 3 we are in the situation whereby $s \ne s_1$, then $(x,y)\notin \bsat F \tau (s)$, as it is neither the case that $s=x=y$ nor
	$
	\begin{tikzcd}[cramped,sep=small]
	s \ar[r,"\tau^*"] & x \ar[r,"\tau"] & y.
	\end{tikzcd}
	$
\end{proof}

It turns out, however, that the semi-branching saturation of $F$ behaves much better than $\bsat F$.

\begin{lemma}
	$\sbsat F (\tau)$ is a left-semi-monoid in $[S \to \pow{(S \times S)}]$, and $\sbsat F (a)$ is a left $\sbsat F (\tau)$-module for all $a \in A$.
\end{lemma}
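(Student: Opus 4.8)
The plan is to unwind the two assertions into three inequalities in the partial order on $[S \to \pow{(S \times S)}]$, using that the multiplication is $(f \cdot g)(s) = \bigcup_{(x,y) \in f(s)} \bigl(g(x) \cup g(y)\bigr)$ and that the left unit is $\eta_S(s) = \{(s,s)\}$. Concretely, $\sbsat F(\tau)$ being a left-semi-monoid means $\eta_S \le \sbsat F(\tau)$ and $\sbsat F(\tau) \cdot \sbsat F(\tau) \le \sbsat F(\tau)$, while $\sbsat F(a)$ being a left $\sbsat F(\tau)$-module means $\sbsat F(\tau) \cdot \sbsat F(a) \le \sbsat F(a)$. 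The unit inequality is immediate: $(s,s) \in \sbsat F(\tau)(s)$ by the second clause of the definition, taking the empty $\tau^*$-sequence.

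For the multiplicative inequality I would take an arbitrary $(x,y) \in (\sbsat F(\tau) \cdot \sbsat F(\tau))(s)$, so that there is a pair $(s_1,s_2) \in \sbsat F(\tau)(s)$ with $(x,y)$ lying in $\sbsat F(\tau)(s_i)$ for $i = 1$ or $i = 2$, and then show $(x,y) \in \sbsat F(\tau)(s)$. The whole computation rests on two elementary facts about internal moves: a single $\overset{\tau}{\to}$ is a special case of $\overset{\tau^*}{\to}$, and $\overset{\tau^*}{\to}$ is transitive. In every case one first notes that $s \overset{\tau^*}{\to} s_i$ holds --- for $i = 1$ this is immediate from either clause of the outer membership, and for $i = 2$ one absorbs the single $\tau$ of the first outer clause (the second outer clause having $s_1 = s_2$); composing this internal run with the one witnessing $(x,y) \in \sbsat F(\tau)(s_i)$ then gives either $s \overset{\tau^*}{\to} x \overset{\tau}{\to} y$ (inner first clause) or $x = y$ together with $s \overset{\tau^*}{\to} x$ (inner second clause), so that $(x,y) \in \sbsat F(\tau)(s)$ as required.

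The module inequality follows the same template but is lighter, since for $a \in A$ (hence $a \ne \tau$) the second clause of the definition is vacuous and $\sbsat F(a)(s) = \{(s_1,s_2) \mid s \overset{\tau^*}{\to} s_1 \overset{a}{\to} s_2\}$. Starting from $(x,y) \in (\sbsat F(\tau) \cdot \sbsat F(a))(s)$, witnessed by some $(s_1,s_2) \in \sbsat F(\tau)(s)$ together with an internal-plus-$a$ run $s_i \overset{\tau^*}{\to} x \overset{a}{\to} y$, I would prepend the internal run $s \overset{\tau^*}{\to} s_i$ coming from $(s_1,s_2) \in \sbsat F(\tau)(s)$ (the single $\tau$ of its first clause being itself an internal move), obtaining $s \overset{\tau^*}{\to} x \overset{a}{\to} y$, that is, $(x,y) \in \sbsat F(a)(s)$.

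I do not anticipate any genuine difficulty: the content is finite bookkeeping over clause-combinations. The one point deserving care --- and the feature that distinguishes this result from the failure of $\bsat F(\tau) \cdot \bsat F(\tau) \le \bsat F(\tau)$ recorded in the preceding lemma --- is the flexibility built into the second clause of $\sbsat F$. Because that clause contributes a pair $(s_1,s_1)$ for \emph{every} $s_1$ with $s \overset{\tau^*}{\to} s_1$, rather than only the diagonal pair $(s,s)$ as in $\bsat F$, an idle $\tau$-factor composed on the right merely lengthens an existing internal run instead of demanding a spurious transition $s \overset{\tau^*}{\to} s_1 \overset{\tau}{\to} s_1$. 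This is exactly what rescues the troublesome case~3 of that lemma and makes the argument go through here.
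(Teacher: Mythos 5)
Your proposal is correct and follows essentially the same route as the paper's proof: verify $\eta_S \le \sbsat F(\tau)$ via the empty $\tau^*$-run, then check $\sbsat F(\tau)\cdot\sbsat F(\tau) \le \sbsat F(\tau)$ and $\sbsat F(\tau)\cdot\sbsat F(a) \le \sbsat F(a)$ by case analysis on which clause of the definition witnesses each membership, composing the internal runs. The only (cosmetic) difference is that you factor out the observation $s \overset{\tau^*}{\to} s_i$ once before splitting on the inner clause, where the paper enumerates the $2\times 2$ clause combinations diagrammatically; your closing remark about why the flexible second clause of $\sbsat F$ rescues the case that defeats $\bsat F$ matches the paper's preceding lemma exactly.
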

\begin{proof}
	Again, it is immediate to see that $\eta_S \le \sbsat F(\tau)$, because 
	$
	\begin{tikzcd}[cramped,sep=small]
	s \ar[r,"\tau^*"] & s
	\end{tikzcd}
	$
	for any $s$, given that $\tau^*$ can be the empty list of $\tau$'s.
	
	Now we prove that $\sbsat F (\tau) \cdot \sbsat F (\tau) \le \sbsat F(\tau)$. Let $s \in S$ and $(x,y) \in (\sbsat F (\tau) \cdot \sbsat F (\tau))(s)$. Then there exists a pair $(s_1,s_2) \in \sbsat F \tau (s)$ such that $(x,y) \in \sbsat F \tau (s_1)$ or $(x,y) \in \sbsat F \tau (s_2)$. Suppose that $(x,y) \in \sbsat F \tau (s_1)$, then we are in one of the four following cases:
	\begin{enumerate}
		\item 	
		$
		\begin{tikzcd}
		s \ar[r,"\tau^*"] & s_1 \ar[r,"\tau"] \ar[dr,"\tau^*"] & s_2 \\
		&										& x \ar[r,"\tau"] & y
		\end{tikzcd}
		$
		\item 
		$
		\begin{tikzcd}
		s \ar[r,"\tau^*"] & s_1 \ar[r,equal] \ar[dr,"\tau^*"] & s_2 \\
		&									& x \ar[r,"\tau"] & y
		\end{tikzcd}
		$
		\item 
		$
		\begin{tikzcd}
		s \ar[r,"\tau^*"] & s_1 \ar[r,"\tau"] \ar[dr,"\tau^*"] & s_2 \\
		&									   & x \ar[r,equal] & y
		\end{tikzcd}
		$
		\item
		$
		\begin{tikzcd}
		s \ar[r,"\tau^*"] & s_1 \ar[r,equal] \ar[dr,"\tau^*"] & s_2 \\
		& 								  & x \ar[r,equal] & y
		\end{tikzcd}
		$	
	\end{enumerate}
	In every case, we can conclude that $(x,y) \in \sbsat F \tau (s)$. Thus $\sbsat F(\tau)$ is a left-semi-monoid.
	
	Finally, we show that $\sbsat F (\tau) \cdot \sbsat F (a) \le \sbsat F(a)$ for all $a \in A$. Let $s \in S$ and consider $(x,y) \in (\sbsat F \tau \cdot \sbsat F a)(s)$. Then $(x,y) \in \sbsat F a (s_1)$ or $(x,y) \in \sbsat F a (s_2)$ for some $(s_1,s_2) \in \sbsat F \tau (s)$. In the first case (and similarly for the second), it is
	\[
	\text{either } \quad
	\begin{tikzcd}
	s \ar[r,"\tau^*"] & s_1 \ar[r,"\tau"] \ar[dr,"\tau^*"] & s_2 \\
	&										& x \ar[r,"a"] & y
	\end{tikzcd}
	\text{ or } \quad
	\begin{tikzcd}
	s \ar[r,"\tau^*"] & s_1 \ar[r,equal] \ar[dr,"\tau^*"] & s_2 \\
	&									& x \ar[r,"a"] & y
	\end{tikzcd}
	\]
	and in both cases we have $(x,y) \in \sbsat F a (s)$, as required.
\end{proof}

\begin{remark}
	It is not true, in general, that $\sbsat F a \cdot \sbsat F \tau \le \sbsat F a$.  Indeed, consider $s \in S$ and $(x,y) \in (\sbsat F a \cdot \sbsat F \tau)(s)=\bigcup_{(s_1,s_2) \in \sbsat F a (s)} (\sbsat F \tau (s_1) \cup \sbsat F \tau (s_2))$. Then the following is one of four possible scenarios:
	\[
	\begin{tikzcd}
	s \ar[r,"\tau^*"] & s_1 \ar[r,"a"] \ar[dr,"\tau^*"] & s_2 \\
	&									& x \ar[r,"\tau"] & y
	\end{tikzcd}
	\]
	where it is clear that $(x,y)\notin \sbsat F a (s)$.
\end{remark}

\section{The category {\Meas}}

Our next goal is to discuss bisimulation for continuous Markov processes (see \cite{panangaden2009labelled,de_vink_bisimulation_1999}). In order to do this we need to step cautiously out of the world of sets and functions, and into that of measurable spaces and measurable functions. 

We recall that a measurable space $(X,\Sigma)$ is a set $X$ equipped with a $\sigma$-algebra, $\Sigma$, the algebra of measurable sets. A measurable function $f \colon (X,\Sigma_X) \longrightarrow(Y,\Sigma_Y)$ is a function $f\colon X \longrightarrow Y$ such that if $U$ is a measurable set of $(Y,\Sigma_Y)$, then $f^{-1} U$ is a measurable set of $(X,\Sigma_X)$. Together these form a category, {\Meas}.

\begin{lemma} 
{\Meas} has all finite limits and $\Gamma = \Meas(1,- ):\Meas\longrightarrow\Set$ preserves them. 
\end{lemma}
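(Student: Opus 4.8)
The plan is to prove two separate claims: that {\Meas} has all finite limits, and that the global sections functor $\Gamma = \Meas(1,-)$ preserves them. Since a category has all finite limits if and only if it has a terminal object and binary products and equalizers (or equivalently finite products and equalizers), I would verify existence for each of these and then check preservation separately. The global object $1$ in {\Meas} is the one-point space, whose only $\sigma$-algebra is $\{\emptyset, \{*\}\}$; so $\Gamma(X,\Sigma) = \Meas(1,(X,\Sigma))$ is in natural bijection with the underlying set $X$, since a measurable map out of $1$ is just a choice of point (every function from a one-point space is automatically measurable). Thus $\Gamma$ is, up to this bijection, the forgetful functor sending a measurable space to its underlying set, and the real content is to exhibit the limits concretely and observe that their underlying sets are computed as in {\Set}.

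First I would construct the terminal object: it is the one-point space $1$ above, and it is terminal because for any $(X,\Sigma)$ the unique set function $X \to \{*\}$ is vacuously measurable. Next, for the binary product of $(X,\Sigma_X)$ and $(Y,\Sigma_Y)$, I would take the underlying set $X \times Y$ equipped with the product $\sigma$-algebra $\Sigma_X \otimes \Sigma_Y$, namely the smallest $\sigma$-algebra making both projections $\pi_X, \pi_Y$ measurable (generated by the measurable rectangles $U \times V$ with $U \in \Sigma_X$, $V \in \Sigma_Y$). The universal property follows because a pair of measurable maps $f \colon Z \to X$, $g \colon Z \to Y$ induces $\langle f,g \rangle \colon Z \to X \times Y$ whose measurability against the generating rectangles reduces to the measurability of $f$ and $g$ individually. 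Finally, for the equalizer of two parallel measurable maps $f, g \colon (X,\Sigma_X) \to (Y,\Sigma_Y)$, I would take the set $E = \{x \in X \mid f(x) = g(x)\}$ with the subspace $\sigma$-algebra $\{E \cap U \mid U \in \Sigma_X\}$; this $E$ is itself a measurable subset of $X$ when $Y$ is suitably separated, but in general the subspace structure suffices for the universal property regardless, since any measurable $h \colon Z \to X$ equalizing $f$ and $g$ factors through the inclusion $E \hookrightarrow X$ as a set map, and measurability of the factorization against the subspace $\sigma$-algebra is immediate.

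Having exhibited each limit with underlying set computed exactly as in {\Set}, preservation by $\Gamma$ becomes essentially a bookkeeping check. Since $\Gamma$ identifies with the forgetful functor to sets (via the bijection $\Meas(1,(X,\Sigma)) \cong X$), and the terminal space, the product space, and the equalizer space all have underlying sets equal to the corresponding {\Set}-limit (a point, $X \times Y$, and the equalizer subset), the image under $\Gamma$ of each limit cone is exactly the limit cone in {\Set}. I would make this precise by noting the natural isomorphism $\Gamma \cong U$ with $U$ the underlying-set functor, and then observing that $U$ preserves these finite limits by construction; naturality of the isomorphism then transports preservation to $\Gamma$.

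The main obstacle I anticipate is the equalizer case, specifically the subtlety of whether $E = \{x : f(x) = g(x)\}$ is a genuinely measurable subset of $X$. For arbitrary measurable spaces the diagonal of $Y$ need not be measurable in $\Sigma_Y \otimes \Sigma_Y$, so $E$ need not lie in $\Sigma_X$; the resolution is that the equalizer in {\Meas} does not require $E$ to be a \emph{measurable} subset, only that $E$ carries the subspace $\sigma$-algebra and that the inclusion is measurable, which holds automatically. One must be careful to verify the universal property directly at the level of the subspace $\sigma$-algebra rather than appealing to $E$ being an object of a subobject lattice of measurable subsets. This is the one place where the {\Meas}-structure genuinely differs in flavour from {\Set}, and where I would spend the most care in a full write-up.
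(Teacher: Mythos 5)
Your proposal is correct and rests on exactly the same idea as the paper's proof: the limit in {\Meas} is computed on underlying sets as in {\Set} and then equipped with the least $\sigma$-algebra making the cone maps measurable, so that $\Gamma$ (identified with the underlying-set functor) preserves it by construction. The only difference is presentational: the paper constructs the limit of an arbitrary finite diagram in a single step, whereas you decompose into terminal object, binary products and equalizers; your extra care about the equalizer subset not being measurable in $\Sigma_X$ is a worthwhile point the paper glosses over, but it does not change the argument.
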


\begin{proof}
Let $F:D\longrightarrow\Meas$ be a functor from a finite category $D$. Then 
$\varprojlim F$ 
is the measurable space on the set $\varprojlim (\Gamma F)$ 
equipped with the least 
$\sigma$-algebra making the projections $\varprojlim (F) \longrightarrow F d$ measurable. 
\end{proof}

\begin{lemma}
    {\Meas} has coequalisers. If 
    \begin{tikzcd}
    (X,\Sigma_X) \ar[r, "f", shift left] 
      \ar[r, "g" below, shift right] & (Y,\Sigma_Y)
    \end{tikzcd}
    is a pair of parallel measurable functions, then their coequaliser is 
    $E: (Y,\Sigma_Y)\longrightarrow (Y/{\sim},\overline\Sigma)$, where $\sim$ is the 
    equivalence relation on $Y$ generated by $fx\sim gx$, and 
    $\overline\Sigma$ is the largest $\sigma$-algebra on $Y/{\sim}$ making $Y\longrightarrow Y/{\sim}$ measurable, {i.e.} 
    $\overline\Sigma = \{ V \ |\  e^{-1} V \in \Sigma_Y \}$.
\end{lemma}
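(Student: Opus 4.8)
The plan is to show that the stated data satisfy the universal property of a coequaliser, exploiting the fact that the underlying set $Y/{\sim}$ is already the coequaliser of $f$ and $g$ in {\Set}, and that $\overline\Sigma$ is the \emph{final} $\sigma$-algebra along the quotient map $e \colon Y \to Y/{\sim}$. This is the standard pattern by which the forgetful functor $\Meas \to \Set$ lifts colimits: take the colimit on underlying sets and equip it with the finest structure making the colimit cocone measurable.

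First I would check that $\overline\Sigma = \{ V \mid e^{-1} V \in \Sigma_Y \}$ is indeed a $\sigma$-algebra and the largest one making $e$ measurable. Since $e^{-1}$ preserves complements and countable unions and $e^{-1}(Y/{\sim}) = Y \in \Sigma_Y$, the three closure conditions for $\overline\Sigma$ follow at once from those for $\Sigma_Y$; and any $\sigma$-algebra rendering $e$ measurable is by definition contained in $\overline\Sigma$. Hence $E$ is a genuine morphism of {\Meas}. Next, because $fx \sim gx$ for every $x \in X$, at the level of sets $e \circ f = e \circ g$, and since all maps involved are measurable this equation holds in {\Meas}; thus $E$ coequalises $f$ and $g$.

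For the universal property, suppose $h \colon (Y,\Sigma_Y) \to (Z,\Sigma_Z)$ is measurable with $h \circ f = h \circ g$. The {\Set}-coequaliser supplies a unique function $\overline h \colon Y/{\sim} \to Z$ with $\overline h \circ e = h$, so uniqueness in {\Meas} is immediate. It remains only to verify that $\overline h$ is measurable: for $W \in \Sigma_Z$ we have $e^{-1}\bigl(\overline{h}^{-1}(W)\bigr) = (\overline h \circ e)^{-1}(W) = h^{-1}(W) \in \Sigma_Y$, and therefore $\overline{h}^{-1}(W) \in \overline\Sigma$ by the very definition of $\overline\Sigma$.

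I do not anticipate a real obstacle, as the whole argument is the routine ``final structure lifts colimits'' pattern. The single point deserving care is the choice of $\overline\Sigma$ as the \emph{largest} rather than the smallest $\sigma$-algebra making $e$ measurable: the smallest choice would still make $E$ measurable, but it is exactly the maximality of $\overline\Sigma$ that forces $\overline{h}^{-1}(W) \in \overline\Sigma$ and hence delivers the factorisation, so I would take care to use $\overline\Sigma$ precisely as defined.
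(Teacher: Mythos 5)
Your proof is correct and is exactly the standard ``quotient set with the final $\sigma$-algebra'' argument that the paper relies on (the paper states this lemma without proof). All the key points are in place: $\overline\Sigma$ is a $\sigma$-algebra because preimage commutes with complements and countable unions, and the maximality of $\overline\Sigma$ is precisely what makes the induced map $\overline h$ measurable, which you correctly flag as the one step that would fail with the smallest such $\sigma$-algebra.
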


\begin{corollary}
A morphism $e:(Y,\Sigma_Y) \longrightarrow (Z,\Sigma_Z)$ in {\Meas} is a regular epi if and only if $\Gamma e$ is a surjection in {\Set}, and $U\in\Sigma_Z$ iff 
$e^{-1}U\in\Sigma_Y$. 
\end{corollary}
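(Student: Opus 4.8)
The plan is to prove both implications by comparing $e$ with the explicit coequaliser supplied by the preceding lemma. For the forward direction, suppose $e$ is a regular epi, so that $e$ is the coequaliser of some parallel pair $f,g \colon (X,\Sigma_X) \rightrightarrows (Y,\Sigma_Y)$. The coequaliser lemma produces a concrete coequaliser $E \colon (Y,\Sigma_Y) \to (Y/{\sim}, \overline\Sigma)$ with $\overline\Sigma = \{V \mid E^{-1}V \in \Sigma_Y\}$, and by uniqueness of coequalisers there is an isomorphism $\theta$ in {\Meas} with $e = \theta \circ E$. Since $E$ is surjective on underlying sets and $\theta$ is bijective, $\Gamma e$ is a surjection. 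For the $\sigma$-algebra condition I would chase $U \in \Sigma_Z \iff \theta^{-1}U \in \overline\Sigma \iff E^{-1}(\theta^{-1}U) \in \Sigma_Y \iff e^{-1}U \in \Sigma_Y$, where the first equivalence uses that $\theta$ is an isomorphism (both $\theta$ and $\theta^{-1}$ measurable), the second is the definition of $\overline\Sigma$, and the third is $\theta^{-1}(E^{-1}(\cdot))=(\theta\circ E)^{-1}(\cdot)=e^{-1}(\cdot)$.

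For the converse the key move is to exhibit $e$ as the coequaliser of its own kernel pair. First I would form the kernel pair $\pi_1,\pi_2 \colon P \rightrightarrows (Y,\Sigma_Y)$, where $P$ is the pullback of $e$ against itself; this exists by the finite-limits lemma, and since $\Gamma$ preserves finite limits its underlying set is $\{(y_1,y_2) \mid e(y_1)=e(y_2)\}$ with $\pi_1,\pi_2$ the evident projections. Applying the coequaliser lemma to $\pi_1,\pi_2$ yields $E \colon (Y,\Sigma_Y) \to (Y/{\approx}, \overline\Sigma')$, where $\approx$ is the equivalence relation generated by $y_1 \approx y_2$ whenever $e(y_1)=e(y_2)$. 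But that generating relation is already the kernel equivalence of the function $e$, so $\approx$ coincides with $\{(y_1,y_2)\mid e(y_1)=e(y_2)\}$.

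It then remains to identify this coequaliser with $e$. I would use the two hypotheses to show that the canonical comparison $\bar e \colon (Y/{\approx},\overline\Sigma') \to (Z,\Sigma_Z)$, sending $[y]$ to $e(y)$ and satisfying $e = \bar e \circ E$, is an isomorphism in {\Meas}. Surjectivity of $e$ makes $\bar e$ a bijection of underlying sets, so every subset of $Y/{\approx}$ is $\bar e^{-1}(U)$ for a unique $U \subseteq Z$, and $E^{-1}(\bar e^{-1}U)=e^{-1}U$. Hence $\bar e^{-1}U \in \overline\Sigma' \iff e^{-1}U \in \Sigma_Y \iff U \in \Sigma_Z$, the last step being exactly the second hypothesis. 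Thus preimage along $\bar e$ is a bijection between $\Sigma_Z$ and $\overline\Sigma'$, so $\bar e$ is an isomorphism and $e = \bar e \circ E$ is a regular epi.

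I expect the converse, and specifically the verification that $\bar e$ is an isomorphism of measurable spaces, to be the crux of the argument: this is precisely where both hypotheses are consumed, with surjectivity delivering bijectivity of the underlying map and the condition $U \in \Sigma_Z \iff e^{-1}U \in \Sigma_Y$ forcing $\Sigma_Z$ to coincide with the quotient $\sigma$-algebra $\overline\Sigma'$ under the comparison. Everything else is formal, relying only on the uniqueness of coequalisers and on the two preceding lemmas (in particular that $\Gamma$ preserves the pullback defining the kernel pair).
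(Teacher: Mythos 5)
Your proof is correct and follows what is evidently the intended route: the paper leaves the corollary unproved as a direct consequence of the coequaliser lemma, and your argument (comparison with the explicit coequaliser for the forward direction, coequalising the kernel pair and checking the comparison map is an isomorphism of measurable spaces for the converse) is exactly the standard way to extract it. No gaps.
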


\begin{corollary}
Any morphism in {\Meas} factors essentially uniquely as a regular epi followed by a monomorphism. 
\end{corollary}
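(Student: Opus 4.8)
The plan is to construct the factorisation on underlying sets and then promote it to \Meas{} using the characterisation of regular epis just established. Given a morphism $f\colon (X,\Sigma_X)\to(Y,\Sigma_Y)$, let $Z=\{f(x)\mid x\in X\}\subseteq Y$ be the set-theoretic image, and factor the underlying function as $\Gamma f = m\circ e$ in \Set, where $e\colon X\to Z$ is the corestriction of $f$ and $m\colon Z\to Y$ is the inclusion. To make $e$ a regular epi I would equip $Z$ not with the subspace $\sigma$-algebra but with the \emph{final} one, $\Sigma_Z=\{V\subseteq Z\mid e^{-1}V\in\Sigma_X\}$. This is a $\sigma$-algebra by the usual closure computation, and since $e$ is surjective and $V\in\Sigma_Z$ holds precisely when $e^{-1}V\in\Sigma_X$, the preceding corollary characterising regular epis gives at once that $e\colon(X,\Sigma_X)\to(Z,\Sigma_Z)$ is a regular epi.

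Next I would check that $m\colon(Z,\Sigma_Z)\to(Y,\Sigma_Y)$ is a monomorphism. First it is measurable: for $U\in\Sigma_Y$ we have $e^{-1}(m^{-1}U)=f^{-1}U\in\Sigma_X$, so $m^{-1}U=U\cap Z$ lies in $\Sigma_Z$ by definition of the final $\sigma$-algebra. Its underlying function is injective, and since $\Gamma$ preserves finite limits and is faithful (so it preserves and reflects monomorphisms, whence monos in \Meas{} are exactly the injective morphisms), $m$ is a monomorphism. This establishes existence of the factorisation.

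For essential uniqueness, suppose $f=m'\circ e'$ is a second factorisation with $e'\colon(X,\Sigma_X)\to(Z',\Sigma_{Z'})$ a regular epi and $m'$ a monomorphism. On underlying sets $\Gamma e,\Gamma e'$ are surjections and $\Gamma m,\Gamma m'$ are injections, so both exhibit $\Gamma f$ as a surjection followed by an injection; by uniqueness of image factorisations in \Set{} there is a unique bijection $\theta\colon Z\to Z'$ with $\theta\circ e=e'$ and $m'\circ\theta=m$ on underlying sets. It remains to see that $\theta$ is an isomorphism in \Meas. For measurability of $\theta$, take $V'\in\Sigma_{Z'}$; then $e^{-1}(\theta^{-1}V')=(e')^{-1}V'\in\Sigma_X$, and since $e$ is a regular epi its $\sigma$-algebra is the final one, so $\theta^{-1}V'\in\Sigma_Z$. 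The symmetric argument, using that $e'$ is a regular epi, shows $\theta^{-1}$ is measurable, so $\theta$ is a \Meas-isomorphism compatible with both factorisations.

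The routine part is verifying that $\Sigma_Z$ is a $\sigma$-algebra and the bookkeeping of preimages. The conceptual crux, and the step I expect to carry the weight, is essential uniqueness, where the regular-epi characterisation is needed twice: it is precisely the fact that a regular epi carries the \emph{final} $\sigma$-algebra that forces both $\theta$ and $\theta^{-1}$ to be measurable. Without both clauses of that characterisation (surjectivity together with $V\in\Sigma_Z\iff e^{-1}V\in\Sigma_X$) one could only conclude that $\theta$ is a measurable bijection, not an isomorphism.
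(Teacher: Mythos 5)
Your proof is correct and follows the route the paper intends: the corollary is stated without an explicit argument, as an immediate consequence of the preceding characterisation of regular epis, and your construction (set-theoretic image equipped with the final $\sigma$-algebra $\{V \mid e^{-1}V\in\Sigma_X\}$, followed by the measurable inclusion) together with the two-sided use of that characterisation for essential uniqueness is exactly the intended reading. Nothing to add.
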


However, {\Meas} is not regular because the pullback of a regular epi is not necessarily regular, as exhibited by this counterexample: 

\begin{example}
Let $(Y,\Sigma_Y)$ be the measurable space on $Y=\{a_0,a_1,b_0,b_1\}$ with 
$\Sigma_Y$ generated by the sets $\{a_0,a_1\}$ and $\{b_0,b_1\}$. Let 
$(Z,\Sigma_Z)$ be the measurable space on $Z=\{a'_0,a'_1,b'\}$, where the 
only measurable sets are $\emptyset$ and $Z$. Let $e \colon Y\longrightarrow Z$ be given by 
$e(a_i)= a'_i$, and $e(b_i) = b'$. Then $e$ is a regular epi. Now let 
$(X,\Sigma_X)$ be the measurable space on 
$X=\{a'_0,a'_1\}$ where $\Sigma_X = \{\emptyset, X\}$, and let $\colon :X\longrightarrow Z$ 
be the inclusion of $X$ in $Z$. Then $i^{*}Y = \{a_0,a_1\}$ with 
$\sigma$-algebra generated by the singletons, but $i^{*}e$ is not regular 
epi because $(i^{*}e)^{-1}\{a'_0\}=\{a_0\}$ is measurable, 
but $\{a'_0\}$ is not. 
\end{example}

The consequence of this is that {\Meas} has all the apparatus to construct a relational calculus, but that calculus does not have all the properties we expect. Specifically it is not an allegory. Accordingly, when we want to construct logical relations on {\Meas}, we will take the measurable spaces as structures in {\Set} and use the constructs in {\Set}.

\section{Probabilistic bisimulation}

We follow the standard approach by defining a continuous Markov process to 
be a coalgebra for the Giry functor. For simplicity we will work with 
unlabelled processes. 

\begin{definition}[Giry monad]
    Let $(X,\Sigma_X)$ be a measurable space. The Giry functor, $\Pi$, 
    is defined as follows, 
    $\Giry (X,\Sigma_X) = (\Giry X, \Giry\Sigma_X)$: 
    \begin{itemize}
        \item $\Giry X$ is the set of sub-probability measures on $(X,\Sigma_X)$. 
        \item $\Giry\Sigma_X$ is the least $\sigma$-algebra on $\Giry X$ such that for every $U\in\Sigma_X$, $\lambda\pi. \pi(U)$ is 
        measurable. 
    \end{itemize}
    If $f \colon (X,\Sigma_X) \longrightarrow (Y,\Sigma_Y)$ is a measurable function, then 
    $\Giry f (\pi) = \lambda V\in\Sigma_Y. \pi (f^{-1} V)$. $\Giry$ forms part of a monad in which the unit maps a point $x$ to the Dirac measure for $x$, and the multiplication is defined by integration, ~\cite{giry_categorical_1982}.
\end{definition}

\begin{definition}[continuous Markov process]
    A {\em continuous Markov process} is a coalgebra in {\Meas} for the Giry functor, {i.e.} a continuous Markov process with state space 
    $(S,\Sigma_S)$ is a measurable function $F \colon (S,\Sigma_S)\longrightarrow \Giry (S,\Sigma_S)$. A {\em homomorphism of continuous Markov processes} is simply a homomorphism of coalgebras. 
\end{definition}

There are now two similar, but slightly different approaches to defining the notion of a probabilistic bisimulation. \cite{panangaden2009labelled} follows Larsen and Skou's original definition for the discrete case. This begins by enabling a state space reduction for a single process and generates a notion of bisimulation between processes as a by-product. The second is the standard notion of bisimulation of coalgebras, as described in  \cite{rutten_universal_2000}. 

We begin with Panangaden's extension of the original definition of 
Larsen and Skou,  \cite{panangaden2009labelled,larsen_bisimulation_1991}.

\begin{definition}[Strong probabilistic bisimulation]\label{def:strong-prob-bisim-1}
Suppose $F\colon S \longrightarrow \Giry S$ is a continuous Markov process, then an equivalence relation $R$ on $S$ is a {\em (strong probabilistic) bisimulation} if and only if whenever $sRs'$, then for all $R$-closed measurable sets $U\in\Sigma_S$, $F s U = F s' U$. 
\end{definition}

We note that the $R$-closed measurable sets are exactly those inducing the 
$\sigma$-algebra on $S/R$, and hence that this definition of equivalence 
corresponds to the ability to quotient the state space to give a continuous Markov process on $S/R$. 

\begin{lemma}
    An equivalence relation $R$ on $(X,\Sigma_X)$ is a strong probabilistic bisimulation relation if and only if when we equip $X/R$ with the largest $\sigma$-algebra such that $X\to X/R$ is measurable, $X/R$ carries the structure of a Giry coalgebra and the quotient is a coalgebra homomorphism in {\Meas}. 
\end{lemma}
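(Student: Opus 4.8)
The plan is to reduce everything to a single candidate coalgebra structure and then check that it is forced. Write $q\colon X\to X/R$ for the quotient, and recall from the coequaliser lemma that $\overline\Sigma=\{V : q^{-1}V\in\Sigma_X\}$ is the largest $\sigma$-algebra making $q$ measurable. Asking that $X/R$ carry a Giry coalgebra $\overline F\colon X/R\to\Giry(X/R)$ for which $q$ is a homomorphism is asking for commutativity of
\[
\begin{tikzcd}
X \ar[r, "F"] \ar[d, "q"'] & \Giry X \ar[d, "\Giry q"] \\
X/R \ar[r, "\overline F"] & \Giry(X/R)
\end{tikzcd}
\]
The only possible candidate is the pushforward $\overline F([s])=\Giry q(F(s))$, that is $\overline F([s])(V)=F(s)(q^{-1}V)$ for $V\in\overline\Sigma$, since reading the square at a point $s$ and evaluating at $V$ forces exactly this. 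Thus the whole content of the lemma is whether this candidate (a) is well defined on $R$-classes, and (b) is a legitimate measurable map into $\Giry(X/R)$.

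For the equivalence itself I would isolate well-definedness. By the remark preceding the lemma, the $R$-closed measurable sets of $X$ are precisely the sets $q^{-1}V$ with $V\in\overline\Sigma$. Hence $\overline F([s])(V)$ being independent of the representative $s$ is literally the condition that $F(s)(q^{-1}V)=F(s')(q^{-1}V)$ whenever $sRs'$, i.e.\ $FsU=Fs'U$ for every $R$-closed measurable $U$, which is Definition~\ref{def:strong-prob-bisim-1} verbatim. This delivers both directions at once: if $R$ is a bisimulation the candidate is well defined, and conversely any coalgebra structure making $q$ a homomorphism must agree with the pushforward, so reading the commuting square at $sRs'$ and evaluating at $V$ yields $F(s)(q^{-1}V)=\overline F([s])(V)=F(s')(q^{-1}V)$, recovering the bisimulation condition.

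What remains --- and is the only place genuine work happens --- is verifying (b), that the well-defined candidate really is a Giry coalgebra. That $\overline F([s])=\Giry q(F(s))$ is a sub-probability measure is immediate, since pushforward along $q$ preserves sub-probability measures. The delicate point is measurability of $\overline F\colon X/R\to\Giry(X/R)$. Here I would use that $\Giry\Sigma$ is the initial $\sigma$-algebra for the evaluation maps $\pi\mapsto\pi(V)$, $V\in\overline\Sigma$, so it suffices to show $[s]\mapsto F(s)(q^{-1}V)$ is measurable for each fixed $V$. This map factors through $q$, its composite with $q$ being $s\mapsto F(s)(q^{-1}V)$, which is measurable because $F$ is a Giry coalgebra and $q^{-1}V\in\Sigma_X$; and since $\overline\Sigma$ is the \emph{largest} $\sigma$-algebra making $q$ measurable, measurability of the factored map downstairs is equivalent to measurability of the composite upstairs. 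The main obstacle is thus entirely this last step: threading the universal property of the quotient $\sigma$-algebra through the initial $\sigma$-algebra defining $\Giry$, so that the single measurability check reduces to the measurability of $F$ we already have.
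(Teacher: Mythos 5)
Your proof is correct and follows exactly the route the paper indicates in the sentence preceding the lemma: the identification of the $R$-closed measurable sets with the preimages $q^{-1}V$, $V\in\overline\Sigma$, makes well-definedness of the forced pushforward candidate literally Definition~\ref{def:strong-prob-bisim-1}, and your remaining verifications (sub-probability, and measurability via the universal property of the quotient $\sigma$-algebra composed with the initial $\sigma$-algebra defining $\Giry$) correctly supply the details the paper leaves implicit.
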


This definition assumes that $R$ is total. However that is not essential. 
We could formulate it for relations that are symmetric and transitive, but 
not necessarily total (partial equivalence relations). In this case we 
have a correspondence with subquotients of the coalgebra. We do, however, have to be careful that the domain of $R$ is a well-defined sub-algebra. 

Panangaden goes on to define a bisimulation between two coalgebras. We 
simplify his definition as we do not consider specified initial states. 

Given a binary relation $R$ between $S$ and $T$, we extend $R$ to a binary relation on the single set $S+T$. In order to apply the previous definition, we will want the equivalence relation on $S+T$ generated by $R$. 

Now $(S+T)\times (S+T) = (S\times S) + (S\times T) + (T\times S) + (T\times T)$, and each of these components has a simple relation derived from $R$, specifically $R\Op R$, $R$, $\Op R$ and $\Op R R$.

\begin{definition}[z-closed]
$R \subseteq S\times T$ is {\em z-closed} iff $R\Op R R \subseteq R$, in other words, iff whenever $sRt \wedge s_1Rt \wedge s_1Rt_1$ then $sRt_1$.
\end{definition}

\begin{lemma}
    $R \subseteq S \times T$ is z-closed if and only if 
    $R^{\ast} = R\Op R + R + \Op R + \Op R R$ is transitive as a 
    relation on 
    $(S+T)\times (S+T)$. Since $R^{\ast}$ is clearly symmetric, 
    $R$ is z-closed iff $R^{\ast}$ is a partial equivalence relation.
\end{lemma}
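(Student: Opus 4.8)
The plan is to unfold $R^{\ast}$ on each summand of $(S+T)\times(S+T) = (S\times S) + (S\times T) + (T\times S) + (T\times T)$ and then reduce transitivity to a finite case analysis. First I would record that $R^{\ast}$ is $R\Op R$ on $S\times S$, $R$ on $S\times T$, $\Op R$ on $T\times S$, and $\Op R R$ on $T\times T$. Concretely, $s\,(R\Op R)\,s'$ holds iff $s$ and $s'$ share a common $R$-successor in $T$, and dually $t\,(\Op R R)\,t'$ holds iff $t$ and $t'$ share a common $R$-predecessor in $S$. Symmetry of $R^{\ast}$ is then immediate: each diagonal block is a visibly symmetric relation, and the two off-diagonal blocks $R$ and $\Op R$ are mutual converses.

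For transitivity I would test $x\,R^{\ast}\,y \wedge y\,R^{\ast}\,z \implies x\,R^{\ast}\,z$ according to where $x,y,z$ lie among $S$ and $T$, giving eight cases. The two cases in which the middle point $y$ sits opposite to the endpoints, namely $(S,T,S)$ and $(T,S,T)$, hold automatically, since there the conclusion merely reasserts the definition of $R\Op R$ (resp.\ $\Op R R$) as a composite, witnessed by taking $y$ itself as the shared element. The four \emph{mixed} cases are the informative ones: for $(S,S,T)$, the hypotheses $s\,(R\Op R)\,s'$ and $s'\,R\,t$ unfold to $s R t_0 \wedge s' R t_0 \wedge s' R t$, and demanding $s R t$ is literally the z-closure condition. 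The remaining mixed cases $(S,T,T)$, $(T,S,S)$, $(T,T,S)$ are the same z-closure condition read off by matching variables, so each is equivalent to z-closure.

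The two homogeneous cases $(S,S,S)$ and $(T,T,T)$, i.e.\ transitivity of $R\Op R$ on $S$ and of $\Op R R$ on $T$, are where I expect the only real work, since these are not literal instances of z-closure but need one application of it. For $(S,S,S)$, from $s R t_0$, $s' R t_0$, $s' R t_1$, $s'' R t_1$ I would apply z-closure to the triple $s R t_0,\, s' R t_0,\, s' R t_1$ to get $s R t_1$, and then pair this with $s'' R t_1$ to witness $s\,(R\Op R)\,s''$; the case $(T,T,T)$ is dual. This establishes that z-closure implies transitivity, and throughout I would use the $R\mapsto\Op R$, $S\leftrightarrow T$ duality (under which z-closure is self-dual) together with the symmetry of $R^{\ast}$ to halve the casework. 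The converse is immediate, since the single mixed case $(S,S,T)$ already \emph{is} the z-closure condition, so transitivity forces z-closure. The closing remark about partial equivalence relations then follows at once from the already-noted symmetry of $R^{\ast}$.
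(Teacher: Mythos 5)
Your proof is correct: the block decomposition of $R^{\ast}$, the observation that the cases $(S,T,S)$ and $(T,S,T)$ are automatic, that the four mixed cases are variable-renamings of the z-closure condition (one of which gives the converse), and that the two homogeneous cases each need a single application of z-closure, all check out. The paper states this lemma without proof, treating it as routine, and your case analysis is exactly the argument it implicitly relies on.
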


Secondly, given continuous Markov processes $F$ on $S$ and $G$ on $T$ we can define 
their sum $F+G$ on $S+T$:
\[
(F+G) x U = \begin{cases}
Fx(U \cap S) & \text{if $x \in S$} \\
Gx(U \cap T) & \text{if $x \in T$}
\end{cases}
\]
% \begin{center}
% \begin{tabular}{rclcl}
%     $(F+G) x U$ & $=$ & $F x (U\cap S)$ & if & $x\in S$\\
%     & $=$ & $G x (U\cap T)$ & if & $x\in T$
% \end{tabular}
% \end{center}

We can now make a definition that seems to us to contain the essence of 
Panangaden's approach: 

\begin{definition}[strong probabilistic bisimulation between 
processes]\label{def:strong-prob-bisim-2}
   $R$ is a strong probabilistic bisimulation between the continuous 
   Markov processes $F$ on $S$ and $G$ on $T$ iff
   $R^{\ast} = R\Op R + R + \Op R + \Op R R$ is a strong probabilistic bisimulation as defined in Definition \ref{def:strong-prob-bisim-1}
   on the sum process $F+G$ on $S+T$.
\end{definition}

Note that any such relation will be z-closed. Given that $R^{\ast}$ must 
be total, it also induces an isomorphism between quotients of the
continuous Markov processes. 

This definition corresponds exactly to what we get by taking the obvious 
logical relations approach. 

\paragraph*{Logical relations of continuous Markov Processes.}

Given a measurable space $(S,\Sigma_S)$, we treat the 
$\sigma$-algebra $\Sigma_S$ as a subset 
of the function space $[S\to 2]$, and use the standard mechanisms of 
logical relations in {\Set} to extend a relation 
$R\subseteq S\times T$ between two measurable spaces to a relation 
$\RSigma$ between $\Sigma_S$ and $\Sigma_{T}$: $U\RSigma V$ if and 
only if $\forall s,t. sRt \implies (s\in U \iff t\in V)$.

\begin{lemma}\begin{enumerate}
    \item If $R$ is an equivalence relation then $U\RSigma V$ iff $U=V$ and is $R$-closed. 
    \item If $R$ is z-closed, then $U\RSigma V$ iff $U+V$ is an $R^{\ast}$-closed subset of $S+T$.
    \item If $R$ is the graph of a function $f\colon S\longrightarrow T$, then 
    $U\RSigma V$ iff $U=f^{-1}V$. 
\end{enumerate}
\end{lemma}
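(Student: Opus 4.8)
The plan is to unfold the defining condition of $\RSigma$ in each of the three cases and substitute the specific shape of $R$. Recall that $U \RSigma V$ holds precisely when $\forall s,t.\ sRt \implies (s\in U \iff t\in V)$, so in every part the work is to simplify this quantified biconditional.

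Part~(3) is essentially immediate: since $R$ is the graph of $f$, the hypothesis $sRt$ is the same as $t=f(s)$, so the universally quantified statement collapses to $\forall s.\ (s\in U \iff f(s)\in V)$, which is exactly the assertion $U = f^{-1}V$. For part~(1) I would first exploit reflexivity: taking $t=s$ in the defining condition forces $s\in U \iff s\in V$ for every $s$, hence $U=V$. With $U=V$ in hand, a general instance $sRt$ makes the forward implication say exactly that $U$ is $R$-closed, while the converse recovers the biconditional from $R$-closedness using the \emph{symmetry} of $R$. So $U\RSigma V$ reduces to ``$U=V$ and $U$ is $R$-closed''.

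Part~(2) carries the real content. I would decompose $(S+T)\times(S+T)$ into its four summands and record that $R^{\ast}$ restricts to $R\Op R$, $R$, $\Op R$ and $\Op R R$ on $S\times S$, $S\times T$, $T\times S$ and $T\times T$ respectively. The condition ``$U+V$ is $R^{\ast}$-closed'' then unpacks as a conjunction of four closure requirements, one per summand. The two mixed summands ($S\times T$ and $T\times S$) yield precisely the two halves of the biconditional defining $U\RSigma V$, so $R^{\ast}$-closedness immediately implies $U\RSigma V$. For the converse I would derive the two same-side requirements from the mixed ones by chaining through the common witness: if $s\in U$ and $sRt$, $s'Rt$, then $s\in U$ together with $sRt$ gives $t\in V$, and $t\in V$ together with $s'Rt$ gives $s'\in U$, which is exactly $R\Op R$-closure on $S$; the symmetric chain handles $\Op R R$-closure on $T$.

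The main obstacle is purely bookkeeping: keeping the four summands and the direction of each implication straight, and checking that the intermediate witness used in the chaining step lives in the correct component. Note that z-closedness is what guarantees (via the earlier lemma) that $R^{\ast}$ is a partial equivalence relation, so that ``$R^{\ast}$-closed'' is the intended notion; the logical equivalence itself already falls out of the chaining argument.
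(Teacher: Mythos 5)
Your proof is correct and is precisely the direct unfolding of the definition of $\RSigma$ that the paper intends: it states this lemma without proof as one of several that ``follow readily from the definitions,'' and your case analysis over the four summands of $(S+T)\times(S+T)$, with the chaining step through a common witness to recover the same-side closures, is the expected argument. Your closing observation is also accurate -- the equivalence in part (2) holds for arbitrary $R$, and the z-closedness hypothesis serves only to guarantee that $R^{\ast}$ is a partial equivalence relation, so that ``$R^{\ast}$-closed'' is the notion matching the earlier definition of strong probabilistic bisimulation.
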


Unpacking the definition of the Giry functor, a Giry coalgebra 
structure on the measurable space $(S,\Sigma_S)$
has type $S\longrightarrow [\Sigma_S \to [0,1]]$, or equivalently 
$S \times \Sigma_S \longrightarrow [0,1]$, where for the 
purposes of defining logical relations we regard $\Sigma_S$ as a 
subset of $[S\to 2]$. We again apply the standard machinery to this. 

\begin{definition}[logical relation of continuous Markov processes]
If $R\subseteq S\times T$ is a relation between the state spaces of 
continuous Markov processes $F\colon S\longrightarrow\Giry S$ and $G\colon T\longrightarrow\Giry T$, 
then $R$ is a {\em logical relation of continuous Markov processes} iff 
whenever $sRt$ and $U\RSigma V$, $F s U = G t V$. 
\end{definition}

The following lemmas follow readily from the definitions. 

\begin{lemma}
    If $R\subseteq S\times T$ is a total and onto z-closed relation between continuous Markov processes $F\colon S\longrightarrow\Giry S$ and 
    $G\colon T\longrightarrow\Giry T$, then $R$ is a logical relation of continuous Markov processes if and only if $R$ is a strong probabilistic 
    bisimulation. 
\end{lemma}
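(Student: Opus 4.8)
The plan is to reduce both sides of the equivalence to statements about the sum process $F+G$ on $S+T$, and to bridge them using the correspondence established in the preceding lemma on $\RSigma$: for measurable $U\subseteq S$ and $V\subseteq T$, we have $U\RSigma V$ if and only if $W:=U+V$ is $R^{\ast}$-closed. Combined with the defining formula for the sum process, which gives $(F+G)sW = FsU$ for $s\in S$ and $(F+G)tW = GtV$ for $t\in T$, this turns every instance of the logical-relation condition into an instance of the bisimulation condition and conversely. So the substantive work has already been isolated in that earlier lemma, and what remains is bookkeeping.

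Before invoking Definition~\ref{def:strong-prob-bisim-2}, I would first check that its hypotheses genuinely apply, i.e.\ that $R^{\ast}$ is an equivalence relation on $S+T$. Symmetry is immediate, and z-closedness gives transitivity, so $R^{\ast}$ is at least a partial equivalence relation, as already noted. Reflexivity on the $S$-summand asks that each $s$ satisfy $s(R\Op R)s$, i.e.\ that $s$ be $R$-related to some $t$, which is exactly totality of $R$; reflexivity on the $T$-summand is exactly ontoness. Thus totality and ontoness upgrade $R^{\ast}$ to a full equivalence relation, so that Definition~\ref{def:strong-prob-bisim-1} is applicable to it on the process $F+G$.

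For the direction from logical relation to bisimulation, I would take $x\,R^{\ast}\,y$ together with an $R^{\ast}$-closed measurable set $W=U+V$, and verify $(F+G)xW = (F+G)yW$ by a case analysis on the summands containing $x$ and $y$. When $x\in S$ and $y\in T$, the relation $x\,R^{\ast}\,y$ is precisely $xRy$, and since $U\RSigma V$ the logical-relation hypothesis yields $FxU = GyV$, which is exactly $(F+G)xW = (F+G)yW$. When $x$ and $y$ lie in the same summand, say both in $S$, the component $x(R\Op R)y$ supplies a common $t$ with $xRt$ and $yRt$; applying the logical-relation hypothesis to each gives $FxU = GtV = FyU$. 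The case of both in $T$ is symmetric, using the witness provided by the $\Op R R$ component, and the remaining cross case is symmetric to the first.

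For the converse I would argue directly: given $sRt$ and $U\RSigma V$, observe that $s\,R^{\ast}\,t$ and that $W=U+V$ is $R^{\ast}$-closed, so the bisimulation condition gives $(F+G)sW = (F+G)tW$, which unfolds precisely to $FsU = GtV$, the logical-relation condition. I do not anticipate a genuine obstacle here. The one point demanding care is routing the same-summand cases through a common witness, which is exactly what the off-diagonal composite components $R\Op R$ and $\Op R R$ of $R^{\ast}$ are there to provide; once that is recognised, the verification is routine.
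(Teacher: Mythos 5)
Your proof is correct and follows exactly the route the paper intends: the paper omits the argument as ``following readily from the definitions,'' and your reduction via the preceding lemma (that $U\RSigma V$ iff $U+V$ is $R^{\ast}$-closed), the definition of the sum process $F+G$, and the four-way case analysis on the components of $R^{\ast}$ is precisely the bookkeeping being elided. Your observation that totality and ontoness are what make $R^{\ast}$ reflexive on each summand, so that Definition~\ref{def:strong-prob-bisim-1} applies, correctly identifies the role of those hypotheses.
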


\begin{lemma}
     If $R\subseteq S\times T$ is the graph of a measurable function $f$ between continuous Markov processes $F\colon S\longrightarrow\Giry S$ and 
    $G\colon T\longrightarrow\Giry T$, then $R$ is a logical relation of continuous Markov processes if and only if $f$ is a homomorphism of continuous Markov processes. 
\end{lemma}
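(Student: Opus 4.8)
The plan is to reduce both sides of the claimed equivalence to a single pointwise identity between measures and then observe that they are literally the same; no argument beyond careful bookkeeping should be needed.

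First I would unpack the logical-relation side. Since $R$ is the graph of $f$, we have $sRt$ precisely when $t = fs$, and by the part of the preceding lemma treating graphs we have $U \RSigma V$ precisely when $U = f^{-1}V$. Substituting these into the definition of a logical relation of continuous Markov processes, the statement ``$R$ is a logical relation'' becomes: for every $s \in S$ and every $V \in \Sigma_T$,
\[
Fs(f^{-1}V) = G(fs)(V).
\]

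Next I would unpack the homomorphism side. A homomorphism of continuous Markov processes is a coalgebra homomorphism in {\Meas}, that is, a measurable map $f$ (given by hypothesis) satisfying $\Giry f \circ F = G \circ f$. As these are two measurable functions $S \to \Giry T$, the equation holds iff it holds at each $s$, and since points of $\Giry T$ are sub-probability measures, which are equal iff they agree on every measurable set, this amounts to: for every $s \in S$ and every $V \in \Sigma_T$,
\[
\Giry f(Fs)(V) = G(fs)(V).
\]
Applying the action of the Giry functor on morphisms, $\Giry f(\pi)(V) = \pi(f^{-1}V)$, the left-hand side equals $Fs(f^{-1}V)$, so this is word-for-word the identity obtained above. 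Hence each side of the biconditional is equivalent to the same pointwise condition.

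I do not expect a genuine obstacle here. The only points requiring care are matching the direction of the preimage $f^{-1}$ on the two sides and confirming that equality of the two arrows into $\Giry T$ reduces, through the generating sets of the Giry $\sigma$-algebra, to agreement of the associated measures on every measurable $V$. Measurability of $f$ is part of the hypothesis in both directions, so it never needs to be re-derived.
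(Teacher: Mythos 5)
Your proposal is correct and takes essentially the same route as the paper: the paper's entire proof is the observation that $f$ is a homomorphism iff $F s (f^{-1}V) = G(fs)(V)$ for all $s\in S$ and $V\in\Sigma_T$, which is exactly the single pointwise identity you reduce both sides to. You merely spell out the bookkeeping (the characterisation of $\RSigma$ on graphs and the action of $\Giry$ on morphisms) that the paper leaves implicit.
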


\begin{proof}
    Observe that $f$ is a homomorphism if and only if for all $s\in S$ and 
    $V\in\Sigma_{T}$, $G (fs) V = F s (f^{-1}V)$.
\end{proof}
  
So logical relations capture both the concept of strong probabilistic 
bisimulation (given that the candidate relations are restricted in 
nature), and the concept of homomorphism of systems. But they do not 
capture everything. 

\paragraph*{{$\Giry$}-bisimulation.}

%In~\cite{panangaden2009labelled}, Panangaden defines the notion of probabilistic bisimulation for two \emph{labelled Markov processes}, which are the same as our probabilistic transition systems of Definition~\ref{def:probabilistic transition system}, having the same set of states. By definition, a bisimulation is an equivalence relation satisfying some additional properties. 

Recall from~\cite{rutten_universal_2000} that for a functor $H \colon \sf C \longrightarrow \sf C$ and two $H$-coalgebras $f \colon A \longrightarrow HA$ and $g \colon B \longrightarrow HB$, an \emph{$H$-bisimulation} between $f$ and $g$ is a $H$-coalgebra $h \colon C \longrightarrow HC$ together with two coalgebra-homomorphisms $l \colon C \longrightarrow A$ and $r \colon C \longrightarrow B$, that is, it is a span in the category of coalgebras for $H$:
\[
\begin{tikzcd}
A \ar[d,"f"'] & \ar[l,"l"'] C \ar[r,"r"] \ar[d,"h"] & B \ar[d,"g"] \\
HA & \ar[l,"Hl"'] HC \ar[r,"Hr"] & HB
\end{tikzcd}
\]
where the above diagram is required to be commutative.

\begin{definition}
    A {$\Giry$}-bisimulation is simply an $H$-bisimulation in the category {\Meas} where the functor $H$ is $\Giry$. 
\end{definition}

It is implicit in this definition that a bisimulation includes a coalgebra structure, and is not simply a relation. Where the functor $H$ corresponds to a traditional algebra generated by first-order terms and equations, the algebraic structure on the relation is unique. But that is not the case here. 

\begin{example}\label{example:algebra-not-unique}
    Consider a continuous Markov process 
    $F: { S}\longrightarrow \Giry { S}$, then ${S}\times{ S}$ typically carries a number of continuous Markov process structures for 
    which both projections are homomorphisms. For example: 
    \begin{enumerate}
        \item a ``two independent copies'' structure given by: 
        \[FF (s,s') (U,U') = (F s U) \times (F s' U')\]
        \item a ``two observations of a single copy'' structure 
        given by: 
        \[
        F^2 (s,s') (U,U') = \begin{cases}
        F s (U\cap U') & \text{if $s = s'$} \\
        F s U \times F s' U' & \text{ if $s \ne s'$}
        \end{cases}
        % \begin{array}{rcll}
        % F^2 (s,s') (U,U') & = & F s (U\cap U') & \mbox{ if $s=s'$}\\
        %   & = & F s U \times F s' U' & \mbox{ if $s\neq s'$}
        %   \end{array} 
        \]
    \end{enumerate}
\end{example}
\begin{example}
More specifically, consider the process $t$ modelling a single toss of a 
fair coin. This can be modelled as a process with three states, 
$C=\{S,H,T\}$: Start (S), Head tossed (H) and Tail tossed (T). From S we 
move randomly to one of H and T and then stay there. The transition matrix 
is given below. This is a discrete process, and we take all subsets to be 
measurable. 
\[t\mbox{ is given by} \quad 
\begin{array}{c|ccc}
 & S & H & T \\
 \hline
 S & 0 & 0.5 & 0.5 \\
 H & 0 & 1 & 0\\
 T & 0 & 0 & 1
 \end{array}
 \]
Now consider the state space $C\times C$. We define two different process 
structures on this. The first, $\tdouble$, is simply the product of the two 
copies of $C$. The transition matrix for this is the tensor of the 
transition matrix for $C$ with itself: the pairwise product of the 
entries. This represents the process of two independent tosses of a coin. 

\[\tdouble\mbox{ is given by }  \quad
\begin{array}{c|*{9}{c}}
 & SS & HH & TT & HT & TH & SH & HS & ST & TS \\
 \hline
 SS & 0 & 0.25 & 0.25 & 0.25 & 0.25 & 0 & 0 & 0 & 0 \\
 HH & 0 & 1 & 0 & 0 & 0 & 0 & 0 & 0 & 0\\
 TT & 0 & 0 & 1 & 0 & 0 & 0 & 0 & 0 & 0\\
 HT & 0 & 0 & 0 & 1 & 0 & 0 & 0 & 0 & 0\\
 \ldots\\
 TS & 0 & 0 & 0.5 & 0 & 0.5 & 0 & 0 & 0 & 0
 \end{array}
 \]
The second, $\tsingle$ is identical except for the first row: 
\[
\tsingle\mbox{ is given by } \quad
\begin{array}{c|*{9}{c}}
 & SS & HH & TT & HT & TH & SH & HS & ST & TS \\
 \hline
 SS & 0 & 0.5 & 0.5 & 0 & 0 & 0 & 0 & 0 & 0 \\
 \ldots
 \end{array}
 \]
This is motivated by the process of two observers watching a single toss 
of a coin. 

The projections are homomorphisms for both these structures. For example, 
the first projection is a homomorphism for $\tsingle$ because for each 
$I$, $J$, $K$: 
\[ t I \{K\} = \sum_L \tsingle IJ \{KL\} \]
\end{example}

This means that in order to establish that a relation is a 
$\Giry$-bisimulation, we have to define a structure and prove the 
homomorphisms, and not simply validate some closure conditions. 

Moreover, in contrast to the case for first-order theories, this 
non-uniqueness of algebra structures implies that we can not 
always reduce spans of homomorphisms to relations. 

\begin{example}
Consider the sum of the two algebra structures from Example 
\ref{example:algebra-not-unique} as an algebra $\tdouble + \tsingle$
on 
$(C\times C)+(C\times C)$. This is a $\Giry$-bisimulation from $C$ to 
itself in which the legs of the span are the co-diagonal, $\nabla$,
followed by the 
projections. The co-diagonal maps $(C\times C)+(C\times C)$ to its 
relational image, but is not an algebra homomorphism for any algebra 
structure on 
$C\times C$. If there were an algebra homomorphism, for an algebra 
structure $\delta$, say, then we would have that both 
$(\tdouble+\tsingle)(\inl SS) (\nabla^{-1}\{HT\}) = \tdouble (SS) \{HT\}$ 
and 
$(\tdouble+\tsingle)(\inr SS) (\nabla^{-1}\{HT\}) = \tsingle (SS) \{HT\}$
would be equal to 
$\delta (SS) \{HT\}$. But the first is $\tdouble (SS) \{HT\} = 0.25$, 
and the second is $\tsingle (SS) \{HT\} = 0$.
\end{example}

We now show that, despite these issues, $\Giry$-bisimulations give rise to 
logical relations. 

\begin{theorem1}\label{thm:pi bisimulation implies logical relation} 
Suppose
\[
\begin{tikzcd}
S \ar[d,"F"'] & \ar[l,"l"'] P \ar[r,"r"] \ar[d,"H"] & T \ar[d,"G"] \\
\Giry S & \ar[l,"\Giry l"'] \Giry P \ar[r,"\Giry r"] & \Giry T
\end{tikzcd}
\]
is a $\Giry$-bisimulation between the continuous Markov processes 
$F$ and $G$. Let $R \subseteq S \times T$ be the relation which is 
the image of $\langle l,r\rangle \colon P \longrightarrow S\times T$, {i.e.} $sRt$ iff
$\exists p. lp=s \wedge rp=t$. 
Then $R$ is a logical relation between $F$ and $ G$.
\end{theorem1}
\begin{proof}
Suppose $sRt$ and $U\RSigma V$ for $U\in\Sigma_S$ and $V\in\Sigma_T$. 
We must show that $F(s)(U) = G(t)(V)$.

We begin by showing that $l^{-1}U = r^{-1}V$. Suppose $p\in P$, then 
$(lp)R(rp)$, and hence $p\in l^{-1} U$ iff $lp\in U$ iff $rp\in V$ (since 
$U\RSigma V$) iff $p\in r^{-1}V$. 
Hence $l^{-1}U = r^{-1}V$, as required. 

Now, since $sRt$, there is a $p$ such that $lp=s$ and $rp=t$. Then 
\begin{align*}
	F(s)(U) 
	&= H p (l^{-1} U) &\text{because $l$ is a $\Giry$-homomorphism}\\
	&= H p (r^{-1} V) &\text{because $l^{-1}U = r^{-1}V$}\\
	&= G(t)(V)        &\text{because $r$ is a $\Giry$-homomorphism} 
\end{align*}
as required. 
\end{proof}

Establishing a converse is more problematic. There are a number of issues.
One is that $\Giry$-bisimulations work on spans, not relations.
Another is that there might not 
be much coherence between the relation $R$ and the $\sigma$-algebras 
$\Sigma_S$ and $\Sigma_T$. And a third is the fact that in order to define 
a $\Giry$-algebra structure $H$ on $R$, we have to define $H (s,t) W$, 
where $W$ is an element of the $\sigma$-algebra generated by the sets 
$R\cap (U\times V)$, where $U\in\Sigma_S$ and $V\in\Sigma_T$. It is not 
clear that such an extension will always exist, and Example 
\ref{example:algebra-not-unique} shows that there is no canonical way to 
construct it. 

Nevertheless we can show that a logical relation gives rise to a
$\Giry$-bisimulation, unfortunately not on the original algebras, but on 
others with the same state space but a cruder measure structure. 

The following lemma is immediate. 

\begin{lemma}
Suppose $F \colon (S,\Sigma_S) \longrightarrow \Giry (S,\Sigma_S)$ is a continuous Markov 
process. Suppose also that $\Sigma'$ is a sub-$\sigma$-algebra of $\Sigma_S$, then $F$ restricts to a continuous Markov process $F'$ on 
$(S,\Sigma')$, and $1_S \colon (S,\Sigma_S) \longrightarrow (S,\Sigma')$ is a homomorphism. 
\end{lemma}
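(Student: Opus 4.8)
The plan is to define $F'$ as the pointwise restriction of measures and then read off both assertions from the action of $\Giry$ on the morphism $1_S$. For each $s \in S$ I would set $F'(s) := \restr{F(s)}{\Sigma'}$, the restriction of the sub-probability measure $F(s)$ to the smaller collection $\Sigma'$. This choice is forced rather than arbitrary: wanting $1_S \colon (S,\Sigma_S) \to (S,\Sigma')$ to be a coalgebra homomorphism means the naturality square must commute, and hence $F' = \Giry(1_S) \circ F$. Unwinding the definition of $\Giry$ on morphisms, and using $1_S^{-1} V = V$ for every $V \in \Sigma'$, gives $\Giry(1_S)(\pi) = (\lambda V \in \Sigma'. \pi(V)) = \restr{\pi}{\Sigma'}$, so $\Giry(1_S)$ is precisely the restriction operation and the definition of $F'$ is the only possible one.

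Next come the routine checks. First, each $F'(s)$ is a sub-probability measure on $(S,\Sigma')$: countable additivity is inherited from $F(s)$ because $\Sigma' \subseteq \Sigma_S$ is closed under the relevant countable unions, and $F'(s)(S) = F(s)(S) \le 1$ since $S \in \Sigma'$. Second, $1_S \colon (S,\Sigma_S) \to (S,\Sigma')$ is measurable, since $1_S^{-1} V = V \in \Sigma_S$ for all $V \in \Sigma'$. Third, the homomorphism square commutes on the nose: by the computation above $\Giry(1_S)(F(s)) = \restr{F(s)}{\Sigma'} = F'(s) = F'(1_S(s))$, i.e.\ $\Giry(1_S) \circ F = F' \circ 1_S$. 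These three points are exactly what makes the statement look immediate.

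The one genuinely substantive step, and where I expect the real work (or a tacit hypothesis) to sit, is checking that $F'$ is itself a morphism of \Meas, i.e.\ that $F' \colon (S,\Sigma') \to \Giry(S,\Sigma')$ is measurable. Since $\Giry\Sigma'$ is the least $\sigma$-algebra making each evaluation $\lambda \pi. \pi(U)$ measurable for $U \in \Sigma'$, this reduces to showing that for each fixed $U \in \Sigma'$ the function $s \mapsto F(s)(U)$ is measurable with respect to $\Sigma'$. The subtlety is that measurability of $F$ only delivers this with respect to $\Sigma_S$, and $\Sigma'$-measurability is strictly stronger, so it does not follow formally: one can construct a finite discrete $(S,\Sigma_S)$, a Dirac-valued $F$, and a two-atom sub-algebra $\Sigma'$ for which $s \mapsto F(s)(U)$ separates two points lying in the same atom of $\Sigma'$ and hence fails to be $\Sigma'$-measurable. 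So in my plan the crux is precisely this measurability; I would discharge it by appealing to the way $\Sigma'$ is actually produced in the intended application (a sub-algebra generated from a logical relation, on whose atoms these evaluation maps are constant), everything else being formal manipulation of the Giry functor.
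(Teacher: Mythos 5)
Your reduction is the only sensible one, and the paper gives you nothing to measure it against: the lemma is introduced with ``The following lemma is immediate'' and no proof is offered. The formal parts of your plan are all correct --- $F'(s)=\restr{F(s)}{\Sigma'}$ is forced by requiring $F'=\Giry(1_S)\circ F$, each $F'(s)$ is a sub-probability measure, $1_S\colon(S,\Sigma_S)\to(S,\Sigma')$ is measurable because $\Sigma'\subseteq\Sigma_S$, and the homomorphism square then commutes by construction. The real content of your proposal is the observation that measurability of $F'\colon(S,\Sigma')\to\Giry(S,\Sigma')$ does \emph{not} follow: it requires $s\mapsto F(s)(U)$ to be $\Sigma'$-measurable for each $U\in\Sigma'$, which is strictly stronger than the $\Sigma_S$-measurability delivered by the hypothesis. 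Your counterexample sketch is sound; concretely, take $S=\{a,b,c\}$ with $\Sigma_S$ the full powerset, $F(a)=\delta_a$, $F(b)=F(c)=\delta_c$, and $\Sigma'$ generated by the atoms $\{a,b\}$ and $\{c\}$. Then $U=\{c\}\in\Sigma'$ but $\{s\mid F(s)(U)>\tfrac12\}=\{b,c\}\notin\Sigma'$, so $F'$ is not a morphism of $\Meas$. The lemma is therefore false as stated for an arbitrary sub-$\sigma$-algebra, and ``immediate'' overstates the case.

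You are also right about where the repair lives. In the only place the lemma is used, $\Sigma'$ is $\SRC S$ for a logical relation $R$, and there the missing measurability is established separately: the proof of Theorem~\ref{thm:logical relation implies pi bisimulation} shows that the sets $U_q=\{s\in S\mid F(s)(U)\leq q\}$ are paired by $\RSigma$ with the corresponding $V_q$, hence lie in $\SRC S$ --- an argument that genuinely uses the logical-relation hypothesis and has no analogue for a general $\Sigma'$. So the honest version of this lemma either weakens its conclusion to ``$\restr{F(s)}{\Sigma'}$ is a sub-probability measure for each $s$, and $1_S$ is a coalgebra homomorphism whenever the restriction is measurable,'' or adds the explicit hypothesis that $s\mapsto F(s)(U)$ is $\Sigma'$-measurable for every $U\in\Sigma'$. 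With that hypothesis made explicit, your plan is a complete and correct proof; without it, no proof exists, and the defect is in the statement rather than in your argument.
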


If $R$ is a logical relation between continuous Markov processes $F$ on 
$S$ and $G$ on $T$, then $R$ only gives us information about the 
measurable sets included in $\RSigma$. The following lemmas are immediate 
from the definitions. 

\begin{lemma}
    If $R\subseteq S\times T$ is a relation between the state spaces of 
    two continuous Markov processes $F$ and $G$ and $\proj 1 \colon R \longrightarrow S$, $\proj 2 \colon R \longrightarrow T$ are the two projections, then the following are 
    equivalent for $U\subseteq S$ and $V\subseteq T$:
    \begin{enumerate}
        \item $U [R\to\{0,1\}] V$ 
        \item $U$ is closed under $R\Op R$, and 
        $UR = V\cap \mathop{\mbox{cod}} R$
        \item $\invproj 1 U = \invproj 2 V$.
    \end{enumerate}
\end{lemma}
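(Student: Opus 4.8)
The plan is to funnel all three conditions through one ``pointwise'' statement and then compare. Write $(\star)$ for the condition: for every pair $(s,t)\in R$ we have $s\in U \iff t\in V$. I would show (1) $\iff (\star)$ and (3) $\iff (\star)$ by simply unfolding definitions, and reserve the only genuine argument for (2) $\iff (\star)$.

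First I would dispatch (1). Reading $U$ and $V$ as characteristic functions $S\to\{0,1\}$ and $T\to\{0,1\}$, the function-space relation $[R\to\{0,1\}]$ with the equality relation on $\{0,1\}$ relates them exactly when $sRt$ forces $\chi_U(s)=\chi_V(t)$, that is, $s\in U \iff t\in V$; this is verbatim $(\star)$. For (3), both $\invproj 1 U$ and $\invproj 2 V$ are subsets of $R$, namely $\{(s,t)\in R \mid s\in U\}$ and $\{(s,t)\in R \mid t\in V\}$, and these two subsets of $R$ coincide precisely when each $(s,t)\in R$ lies in one iff it lies in the other, which is again $(\star)$.

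The substance is (2) $\iff (\star)$. I would split $(\star)$ into its forward part (a) ``$sRt$ and $s\in U$ imply $t\in V$'' and its backward part (b) ``$sRt$ and $t\in V$ imply $s\in U$''. Unfolding images, (a) says exactly $UR\subseteq V\cap\mathop{\mbox{cod}}R$, while the reverse inclusion $V\cap\mathop{\mbox{cod}}R\subseteq UR$ says only that each $t\in V$ possessing an $R$-predecessor has \emph{some} $R$-predecessor lying in $U$. For $(\star)\Rightarrow$(2): (a) and (b) give the two inclusions constituting $UR=V\cap\mathop{\mbox{cod}}R$, and closure of $U$ under $R\Op R$ follows because if $s\in U$, $sRt$ and $s_1Rt$, then (a) yields $t\in V$ and (b) then yields $s_1\in U$. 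For $(2)\Rightarrow(\star)$: part (a) is immediate from $UR\subseteq V\cap\mathop{\mbox{cod}}R$, and to obtain (b), given $sRt$ with $t\in V$, the inclusion $V\cap\mathop{\mbox{cod}}R\subseteq UR$ produces some $s'\in U$ with $s'Rt$; since $s'$ and $s$ share the common $R$-successor $t$ we have $s'\,(R\Op R)\,s$, and closure of $U$ under $R\Op R$ (which is symmetric) forces $s\in U$.

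The one step that is not a pure restatement is this final bridge: the inclusion $V\cap\mathop{\mbox{cod}}R\subseteq UR$ on its own only yields \emph{a} witnessing predecessor in $U$, not the particular $s$ at hand, so the $R\Op R$-closure is exactly what transports membership from the witness $s'$ to $s$. This is the only place where the two clauses of (2) must be combined rather than used in isolation, and it is the step I would state most carefully.
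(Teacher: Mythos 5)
Your proof is correct and follows essentially the route the paper intends: the paper states this lemma as immediate from the definitions, and its surrounding remarks reduce all three conditions to exactly your pointwise condition $(\star)$, namely $\forall (s,t)\in R.\ (s\in U \iff t\in V)$. Your identification of the one non-trivial step --- that the $R\Op R$-closure of $U$ is what transports membership from the witness $s'\in U$ supplied by $V\cap\mathop{\mbox{cod}}R\subseteq UR$ to the given $s$ --- is accurate and is precisely why clause (2) needs both of its parts.
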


\begin{lemma} If $R\subseteq S\times T$ is a relation between the 
state spaces of 
    two continuous Markov processes $F$ and $G$, then the sets linked by 
    $[R\to\{0,1\}]$ have the following closure properties: 
\begin{enumerate}
    \item If $U [R\to\{0,1\}] V$ then $\compl U \ [R\to\{0,1\}]\ \compl V$ 
    \item If for all $\alpha\in A$, $U_\alpha [R\to\{0,1\}] V_\alpha$ then $\Union_{\alpha\in A} U_\alpha\  [R\to\{0,1\}]\  \Union_{\alpha\in A} V_\alpha$.
\end{enumerate}
\end{lemma}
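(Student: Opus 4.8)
The plan is to reduce both statements to the pointwise characterisation of the logical relation $[R\to\{0,1\}]$. First I would unpack the definition: regarding a subset $U\subseteq S$ as its characteristic function $S\to\{0,1\}$ and taking equality as the relation on $\{0,1\}$, the function-space recipe for logical relations yields that $U [R\to\{0,1\}] V$ holds precisely when, for all $s\in S$ and $t\in T$ with $sRt$, we have $s\in U \iff t\in V$. This is exactly the defining condition of $\RSigma$, so everything comes down to manipulating this biconditional under the hypothesis $sRt$.

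For part (1), I would fix $s$ and $t$ with $sRt$ and use the given equivalence $s\in U \iff t\in V$. Negating both sides preserves the biconditional, so $s\notin U \iff t\notin V$, that is $s\in \compl U \iff t\in \compl V$. Since $s$ and $t$ were arbitrary $R$-related points, this is precisely $\compl U [R\to\{0,1\}] \compl V$.

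For part (2), I would again fix $sRt$ and note that membership in a union is an existential statement: $s\in\Union_{\alpha\in A} U_\alpha$ iff there exists $\alpha$ with $s\in U_\alpha$. The hypothesis supplies, for each $\alpha$, the equivalence $s\in U_\alpha \iff t\in V_\alpha$; since the quantifier ranges over the same index set on each side, $(\exists\alpha.\ s\in U_\alpha) \iff (\exists\alpha.\ t\in V_\alpha)$, which is exactly $s\in\Union_{\alpha\in A} U_\alpha \iff t\in\Union_{\alpha\in A} V_\alpha$. As $sRt$ was arbitrary, this gives $\Union_{\alpha\in A} U_\alpha\ [R\to\{0,1\}]\ \Union_{\alpha\in A} V_\alpha$.

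I do not expect any genuine obstacle: as the paper signals, the lemma is immediate once the logical relation is unpacked pointwise. The only thing worth stating explicitly is that $\{0,1\}$ carries equality as its relation, so that the logical-relation condition collapses to the biconditional $s\in U \iff t\in V$ at each $R$-related pair; complementation then corresponds to negation and union to existential quantification, both of which commute with such a biconditional. The same pattern would handle intersections and, more generally, any Boolean combination, which is what makes the sets linked by $[R\to\{0,1\}]$ stable enough to support the measure-structure arguments in the sequel.
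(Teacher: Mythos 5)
Your proof is correct and is exactly the argument the paper intends: it states the lemma as ``immediate from the definitions,'' and the intended unpacking is precisely your pointwise reading of $U\,[R\to\{0,1\}]\,V$ as ``$s\in U \iff t\in V$ whenever $sRt$,'' under which complement is negation and union is existential quantification over the index set, both of which preserve the biconditional at each fixed $R$-related pair. Nothing is missing.
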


\begin{corollary}
The {\em measurable} subsets linked by $[R\to\{0,1\}]$ have the same closure properties and hence the following are $\sigma$-algebras: 
\begin{enumerate}
    \item $\SRC S = \{ U\in \Sigma_S | \exists V\in\Sigma_T. U\RSigma V\}$
    \item $\SRC T = \{ V\in \Sigma_T | \exists U\in\Sigma_S. U\RSigma V\}$
    \item $\begin{array}[t]{cl}
        \SigmaR & = \{ W\subseteq R | \exists U\in\Sigma_S, 
    V\in\Sigma_T.\  U\RSigma V \wedge W=\invproj 1 U \}\\
    &= \{ W\subseteq R | \exists U\in\Sigma_S, 
    V\in\Sigma_T.\  U\RSigma V \wedge W=\invproj 1 U = \invproj 2 V\}.
    \end{array}$
\end{enumerate}
\end{corollary}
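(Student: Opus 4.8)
The plan is to derive all three claims directly from the two preceding lemmas, the essential move being to intersect the closure properties of $[R\to\{0,1\}]$ with the fact that $\Sigma_S$ and $\Sigma_T$ are themselves $\sigma$-algebras. The key observation is that each operation a $\sigma$-algebra demands — complementation and countable union (a special case of the arbitrary unions handled by the closure lemma) — is one under which \emph{both} the relation $[R\to\{0,1\}]$ and the ambient algebras $\Sigma_S,\Sigma_T$ are closed, so the family of \emph{measurable} linked pairs inherits exactly these closures. Concretely, if $U\RSigma V$ with $U\in\Sigma_S$ and $V\in\Sigma_T$, then $\compl U\in\Sigma_S$ and $\compl V\in\Sigma_T$ because the $\Sigma$'s are $\sigma$-algebras, while $\compl U \ [R\to\{0,1\}]\ \compl V$ by the closure lemma; hence $\compl U\RSigma\compl V$. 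The identical argument with a countable family $(U_n,V_n)$ gives $\Union_n U_n \RSigma \Union_n V_n$. This establishes the opening sentence of the corollary.

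First I would dispatch the base cases: $\emptyset\RSigma\emptyset$ and $S\RSigma T$ hold vacuously from the definition of $\RSigma$, and all four sets are measurable. Then $\SRC S$ and $\SRC T$ follow immediately by reading off one projection of a linked measurable pair: membership of $U$ in $\SRC S$ is by definition the existence of a witness $V$ with $U\RSigma V$, and the closure computations above supply the required witnesses $\compl V$ for $\compl U$ and $\Union_n V_n$ for $\Union_n U_n$. Since $S\in\SRC S$ and the family is closed under complement and countable union, $\SRC S$ is a $\sigma$-algebra; the argument for $\SRC T$ is symmetric, or equivalently is obtained by running the same argument for the opposite relation $\Op R$.

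For $\SigmaR$ I would additionally use that the preimage map $\invproj 1=\proj 1^{-1}$ on subsets commutes with Boolean operations, sending complements in $S$ to complements relative to $R$ and unions to unions. Thus if $W=\invproj 1 U$ with $U\RSigma V$, then $R\setminus W=\invproj 1(\compl U)$ and, since $\compl U\RSigma\compl V$, this again lies in $\SigmaR$; similarly $\invproj 1(\Union_n U_n)=\Union_n \invproj 1 U_n$. With $R=\invproj 1 S\in\SigmaR$ this makes $\SigmaR$ a $\sigma$-algebra. The two displayed descriptions of $\SigmaR$ coincide because the equivalence lemma gives $\invproj 1 U=\invproj 2 V$ whenever $U\RSigma V$, so every witness $W=\invproj 1 U$ is automatically $\invproj 2 V$ as well.

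The one point requiring care — the closest thing to an obstacle — is purely bookkeeping: complements inside $\SigmaR$ must be read relative to $R$ rather than $S$, and one must check that $\invproj 1$ indeed intertwines $\compl{(\cdot)}$ in $S$ with relative complement in $R$ while preserving the witnessing data on the $T$-side through the identity $\invproj 1 U=\invproj 2 V$. No genuinely new idea beyond the two preceding lemmas is needed.
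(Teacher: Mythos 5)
Your proof is correct and follows essentially the same route as the paper: the corollary is just the combination of the two preceding lemmas with the closure of $\Sigma_S$ and $\Sigma_T$ under complementation and countable unions, together with the fact that $\invproj 1$ commutes with Boolean operations (and the lemma's equivalence $U\RSigma V \iff \invproj 1 U = \invproj 2 V$ for identifying the two descriptions of $\SigmaR$). The paper treats this as immediate, and where it does spell out the analogous computations in detail (relative complements inside $R$, witnesses $\compl V$ and $\Union_n V_n$), they match yours.
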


\begin{theorem1}\label{thm:logical relation implies pi bisimulation}
    Suppose $R\subseteq S\times T$ is a relation between the 
state spaces of two continuous Markov processes $F$ and $G$. If $R$ is a 
logical relation then there is a $\Giry$-bisimulation: 
\[
\begin{tikzcd}
(S,\SRC S) \ar[d,"F"'] & \ar[l,"\proj 1 {}"'] (R, \SigmaR) \ar[r,"\proj 2 {}"] \ar[d,"H"] & (T, \SRC T) \ar[d,"G"] \\
\Giry (S, \SRC S) & \ar[l,"\Giry {\proj 1 {}}"'] \Giry (R, \SigmaR) \ar[r,"\Giry {\proj 2 {}}"] & \Giry (T, \SRC T)
\end{tikzcd}
\]
\end{theorem1}

\begin{proof}
 Suppose $(s,t)\in R$ and $W\in\SigmaR$, then we need to define 
 $H (s,t) W$. Suppose $U\in\SRC S$, $V\in\SRC T$, such that 
 $W=\invproj 1 U = \invproj 2 V$ and $U\RSigma V$. Then, since R is a 
 logical relation, $F(s)(U) = G(t)(V)$. 
 
 We claim that this is independent 
 of the choice of $U$ and $V$. 
 Suppose $U'\in\SRC S$, $V'\in\SRC T$, such that 
 $W=\invproj 1 U' = \invproj 2 V'$ and $U'\RSigma V'$. 
 Then $\invproj 1 U' = \invproj 2 V = W$, and hence $U'\RSigma V$, so 
 $F(s)(U')=G(t)(V)=F(s)(U)$. 
 
 We now define $H(s,t)(W) = F(s)(U)$. 
 
 We need to show that this is a $\Giry$-algebra structure. 
 
 First, we show that $H(s,t)$ is a sub-probability measure. We use a slightly non-standard characterisation of measures: 
 \begin{enumerate}
     \item Since $\emptyset\in\SRC S$, $H(s,t)\emptyset=F(s)\emptyset = 0$.
     \item For $W$, $W'$ in $\SigmaR$, let $U$ and $U'$ be in $\SRC S$ 
     such that 
     $\invproj 1 U = W$ and $\invproj 1 U' = W'$. Then, since $F(s)$ is a measure:  
     $F(s)(U) + F(s)(U') = F(s)(U\cup U') + F(s)(U\cap U')$. Now, since 
     $\invproj 1 {}$  preserves unions and intersections, 
     $H(s,t)(W) + H(s,t)(W') = H(s,t)(W\cup W') + H(s,t)(W\cap W')$.
     \item If $W_i$ is an increasing chain of elements of $\SigmaR$, then 
     let $U_i$ be an increasing chain of elements of $\SRC S$ such that 
     $\invproj 1 (U_i) = W_i$. Then $H(s,t)(\Union W_i) = F(s)(\Union U_i) = \lim F(s)(U_i) = \lim H(s,t)(U_i)$. 
 \end{enumerate}

To complete the proof it suffices to show that for each $W\in\SigmaR$, 
$H(-)(W)$ is a measurable function. Choose $U\in\SRC S$ and $V\in\SRC T$ 
such that $U\RSigma V$ and $W=\invproj 1 U = \invproj 2 V$. Now, given 
$q\in [0,1]$, let $U_q = \{s\in S \mid F(s)(U)\leq q\}$ and 
$V_q = \{t\in T \mid G(t)(V)\leq q\}$. Now suppose $sRt$, then, since $R$ is 
a logical relation, $F(s)(U)=G(t)(V)$, hence $s\in U_q$ iff $t\in V_q$. 
Therefore $U_q \RSigma V_q$. Moreover, $H(s,t)(W)=F(s)(U)=G(t)(V)$, and 
hence $H(s,t)(W)\leq q$ iff $s\in U_q$ iff $t\in V_q$. It follows that 
$\{(s,t) | H(s,t)(W)\leq q \}\in\SigmaR$, and hence that $H(-)(W)$ is 
measurable as required. 
\end{proof}

Putting this together we see that if we have a logical relation between 
$F$ and $G$, then we get the following diagram, in which the non-horizontal maps in the top section are identities on state spaces: 
\[
\begin{tikzcd}
(S,\Sigma_S) \ar[d,"F"'] \ar[dr] & & 
  \ar[ll,"\proj 1 {}"']   (R,\Sigma_S\times\Sigma_T\upharpoonright R)
  \ar [d] \ar[rr,"\proj 2 {}"]
  & & \ar[dl] (T,\Sigma_T) \ar[d,"G"] \\
\Giry (S,\Sigma_S) \ar[dr] & (S,\SRC S) \ar[d,"F"'] & \ar[l,"\proj 1 {}"'] (R, \SigmaR) \ar[r,"\proj 2 {}"] \ar[d,"H"] & (T, \SRC T) \ar[d,"G"] 
 & \ar[dl] \Giry{(T,\Sigma_T)}\\
& \Giry (S, \SRC S) & \ar[l,"\Giry {\proj 1 {}}"'] \Giry (R, \SigmaR) \ar[r,"\Giry {\proj 2 {}}"] & \Giry (T, \SRC T)
\end{tikzcd}
\]

We can view Theorem \ref{thm:logical relation implies pi bisimulation} 
as saying that we may be given too fine a measure structure on $S$ and $T$ 
for a logical relation to generate a $\Giry$-bisimulation, but we can 
always get a $\Giry$-bisimulation with a coarser structure. Just how 
coarse and how useful this structure might be depends on the logical 
relation and its relationship with the original $\sigma$-algebras on 
the state spaces. 

\begin{example}
\begin{itemize}
    \item In the contrived examples of \ref{example:algebra-not-unique}, we have taken the relation $R$ to be the whole of $C\times C$ and in effect used the algebra structure to restrict the effect of this. However, since $R=C\times C$, $\SRC C$ contains only the empty set and the whole of $C$. As a result, the continuous Markov process we get is not useful: the probability of evolving into the empty set is always 0, and the probability of evolving into something is always 1. 
    \item In the same examples we can restrict the state spaces for 
    $\tdouble$ and $\tsingle$. For $\tdouble$ we take 
    $\Rdouble = \{SS,HH,TT,HT,TH\}$, reflecting the states accessible 
    from $SS$. In this case 
    $\SRC C = \{ \emptyset, \{S\}, \{H,T\}, \{S,H,T\}\}$. 
    For $\tsingle$ we take 
    $\Rsingle = \{SS,HH,TT\}$, and $\SRC C$ contains all the subsets of $C$. 
\end{itemize}
\end{example}

\paragraph*{Financial Support} Edmund Robinson and Alessio Santamaria acknowledge the support of EPSRC grant EP/R006865/1, Interface Reasoning for Interactive Systems. Santamaria also acknowledges the support of the Ministero dell'Universit\`a e della Ricerca Scientifica 
of Italy under Grant No.\ 201784YSZ5, PRIN2017 -- ASPRA
(\emph{Analysis of Program Analyses}).

\paragraph*{Competing Interests} The authors declare none. 

\bibliographystyle{msclike}
\bibliography{lr}

\end{document}